\documentclass[12pt, a4paper]{article}
\pdfoutput=1
\usepackage{jheppub}
\hyphenation{dy-na-mi-cal sys-te-ma-tic gra-vi-ty li-te-ra-tu-re
me-tric in-fi-ni-ty ma-xi-ma e-qui-va-lent e-qui-va-lence ma-ni-fold
appro-xi-ma-tion po-si-tive cos-mo-lo-gi-cal co-or-di-nate mi-ni-mum 
re-nor-ma-li-za-tion mo-dels res-pect to-po-lo-gy ge-ne-ra-lized 
be-ha-vi-or stri-king con-si-de-ra-bly des-crip-tion ma-xi-mal 
re-nor-ma-li-za-tion com-pac-ti-fi-ca-tion i-so-me-try re-pa-ra-me-te-ri-za-tion}

\usepackage{subcaption}
\usepackage[T1]{fontenc}
\usepackage{float}
\usepackage{latexsym}
\usepackage{amsmath,amstext,amssymb,amsfonts,
amscd,bm,array,multirow,amsbsy,mathrsfs,amsthm,calc}
\usepackage{t1enc}
\usepackage{indentfirst}
\usepackage{pb-diagram}
\usepackage{graphicx}
\usepackage{enumerate}
\usepackage{color}
\usepackage[all]{xy}
\usepackage{hyperref}
\hypersetup{colorlinks=false,pdfborderstyle={/S/U/W 0}}
\usepackage[usenames,dvipsnames]{xcolor}
\usepackage{url}
\usepackage{upgreek}


\theoremstyle{plain}
\newtheorem{thm}{Theorem}[section]             

\newtheorem{prop}[thm]{Proposition}

\newtheorem{cor}[thm]{Corollary}

\theoremstyle{definition}
\newtheorem{definition}[thm]{Definition}

\theoremstyle{remark}
\newtheorem{remark}[thm]{Remark}
\newtheorem{eg}[thm]{Example}


\newcommand{\be}{\begin{equation*}}
\newcommand{\ee}{\end{equation*}}
\newcommand{\ben}{\begin{equation}}
\newcommand{\een}{\end{equation}}
\newcommand{\beqa}{\begin{eqnarray*}}
\newcommand{\eeqa}{\end{eqnarray*}}
\newcommand{\beqan}{\begin{eqnarray}}
\newcommand{\eeqan}{\end{eqnarray}}
\newcommand{\nn}{\nonumber}


\def\i{\mathbf{i}}

\def\oR{{\overline \R}}
\def\N{\mathbb{N}}
\def\Z{\mathbb{Z}}

\def\R{\mathbb{R}}

\def\End{\mathrm{End}}
\def\Hess{\mathrm{Hess}}

\def\Lim{\mathrm{Lim}}
\def\Crit{\mathrm{Crit}}

\def\Ends{\mathrm{Ends}}
\def\Int{\mathrm{Int}}

\def\fM{\mathfrak{M}}

\def\ren{{\mathrm{ren}}}
\def\Par{\mathrm{Par}}

\def\rT{\mathrm{T}}
\def\rK{\mathrm{K}}


\newcommand{\pd}{\partial}
\def\dd{\mathrm{d}}

\def\bvarepsilon{\boldsymbol{\varepsilon}}


\newcommand{\id}{\mathrm{id}}
\newcommand{\Tr}{\mathrm{Tr}}

\newcommand{\sign}{\mathrm{sign}}


\def\cC{\mathcal{C}}
\def\cD{\mathcal{D}}

\def\cG{\mathcal{G}}
\def\cH{\mathcal{H}}
\def\cL{\mathcal{L}}
\def\cM{\mathcal{M}}

\def\cS{\mathcal{S}}
\def\cX{\mathcal{X}}

\def\IR{\mathrm{IR}}
\def\UV{\mathrm{UV}}

\def\Met{\mathrm{Met}}

\def\hcM{\hat{\cM}}

\def\rC{\mathrm{C}}
\def\rD{\mathrm{D}}
\def\rA{\mathrm{A}}

\def\rS{\mathrm{S}}

\def\veta{\boldsymbol{\eta}}
\def\Stab{\mathrm{Stab}}
\def\param{\mathrm{par}}
\def\fM{\mathfrak{M}}

\def\graph{\mathrm{graph}}


\newcommand{\eqdef}{\stackrel{{\rm def.}}{=}}

\def\grad{\mathrm{grad}}
\def\Sym{\mathrm{Sym}}
\def\End{\mathrm{End}}

\def\im{\mathrm{im}}
\def\O{\mathrm{O}}
\def\o{\mathrm{o}}

\def\alim{\mathrm{lim}_\alpha}
\def\olim{\mathrm{lim}_\omega}

\def\hcM{\hat{\cM}}
\def\bvert{\big{\vert}}
\def\Bvert{\Big{\Vert}}
\def\Pot{\mathrm{Pot}}

\def\ind{\mathrm{ind}}
\def\rM{\mathrm{M}}



\def\RG{\mathrm{RG}}

\def\vol{\mathrm{vol}}
\def\P{\mathbb{P}}

\def\mS{\mathbb{S}}
\def \Grad{\mathrm{Grad}}
\def\cT{\mathcal{T}}

\title{Dynamical renormalization and universality in classical multifield cosmological models}

\author{Calin Iuliu Lazaroiu}

\affiliation{Horia Hulubei National
  Institute of Physics and Nuclear Engineering,\\
  Reactorului 30, Bucharest-Magurele, 077125, Romania\\ 
  }

\emailAdd{lcalin@theory.nipne.ro}

\abstract{We study the scaling behavior of classical multifield
cosmological models with complete scalar manifold $(\cM,\cG)$ and
positive smooth scalar potential $\Phi$, introducing a dynamical
renormalization group action which relates their UV and IR limits. We
show that the RG flow of such models interpolates between a
modification of the geodesic flow of $(\cM,\cG)$ (obtained in the UV
limit) and the gradient flow of $(\cM,\cG,V)$ (obtained in the IR
limit), where the {\em classical effective potential} $V$ is
proportional to $\sqrt{2\Phi}$. Using this fact, we show that
two-field models whose scalar manifold has constant Gaussian curvature
equal to $-1$, $0$ or $1$ are {\em infrared universal} in the sense
that they suffice to describe the first order IR approximants of 
cosmological orbits for all two-field models with positive smooth
scalar potential.}

\begin{document}

\maketitle

\pagebreak

\section*{Introduction}

The study of cosmological models with more than one real scalar field
(known as {\em multifield cosmological models}) is crucial for
connecting fundamental theories of gravity and matter to the early
universe, since the effective description of the generic string or
M-theory compactification contains many such fields. In particular,
multifield models are important for cosmological applications of the
swampland program \cite{V, OV} (see \cite{BCV, BCMV} for reviews),
which drew attention to this often overlooked area of cosmology
\cite{AP, OOSV,GK}. 

Despite their importance, the current theory of such models is poorly
developed, especially as pertains to deeper conceptual and
mathematical aspects. In fact, most contributions to the area focus on
analyzing the leading orders in the formal cosmological perturbation
expansion of such models around a ``God given'' abstract cosmological
curve \cite{SS,NT,GT,GLM,EST,Pinol,Pinol2field} (often considering
only two-field models and under highly restrictive conditions such as
assuming validity of the SRST approximation of \cite{PT1,PT2}) or on
investigations of specific models and cosmological curves aimed at
illustrating the feasibility of various inflation scenarios. Since the
class of multifield models is continuously infinite, one is left
wondering if a more systematic approach is possible which
could permit fundamental progress in the field.

At the classical level (and before considering fluctuations)
multifield models are described by ODEs to which the powerful methods
of the geometric theory of dynamical systems \cite{Palis,Katok,Shub}
could be applied. Even at this level, the current literature
remains largely concerned with local descriptions and pays little
attention to global aspects, thus hardly amounting to a
geometric theory. In particular, there were only a few attempts to
develop an approach to the subject that could place it on a firm
mathematical foundation and hence could serve as a stepping stone
toward the program of connecting string theory to the early universe
in a systematic manner. The purpose of this paper is to address some
of these limitations by proposing a conceptual viewpoint on the
classical dynamics of such models.

As pointed out in \cite{genalpha}, classical multifield cosmological
models admit a precise global description which leads to their
formulation as geometric dynamical systems. Mathematically, such a
model is parameterized by the reduced Planck mass $M$ (equivalently,
by the {\em rescaled} Planck mass $M_0\eqdef M\sqrt{\frac{2}{3}}$) and
by a {\em scalar triple} $(\cM,\cG,\Phi)$, where $\cM$ is a (generally
non-compact) connected manifold, $\cG$ is a Riemannian
metric on $\cM$ and $\Phi$ is a real-valued function defined on $\cM$;
we assume that these three objects are smooth. Such data enters the
classical action of the scalar fields, which is described by a
nonlinear sigma model defined on spacetime, where:
\begin{itemize}
\item The {\em target space} $\cM$ is the space in which the scalar
fields take values, namely they are described as smooth maps
from spacetime into this manifold.
\item The {\em scalar field metric} $\cG$ governs the kinetic energy
of the scalar fields.
\item The {\em scalar potential} $\Phi$ governs the potential energy
of the scalar fields.
\end{itemize}
To ensure conservation of energy, one requires the metric $\cG$ to be
complete on $\cM$; this precludes existence of scalar field
configurations which ``disappear into nothing''. To preclude global
instability of dynamics, one requires that $\Phi$ is bounded from
below by zero; for simplicity, we will assume throughout this paper that
$\Phi$ is {\em strictly} positive. The complete Riemannian manifold
$(\cM,\cG)$ is called the {\em scalar manifold}.

The classical cosmological model is obtained from this data by
considering gravity at rescaled Planck mass $M_0$ coupled to the sigma
model parameterized by $(\cM,\cG,\Phi)$. Before considering
perturbations, one takes the spacetime to be of FLRW form with
cosmological time $t$ and conformal factor $a$, while restricting the
scalar fields to depend only on $t$ and hence to be described by a
curve $\varphi:I\rightarrow \cM$, where $I$ is a non-degenerate
interval (i.e. an interval on the real axis which is not empty or
reduced to a point). With these restrictions, the Einstein equations
and scalar field equations of motion reduce to a single second order
ODE (known as the {\em cosmological equation}) for $\varphi$, whose
solutions we call {\em cosmological curves}. Since it is geometric in
the sense of \cite{Chern}, the cosmological equation is equivalent to
a dynamical system defined on the total space of the tangent bundle
$T\cM$ of the scalar manifold $\cM$ by a certain vector field $S$
which we call the {\em cosmological semispray}. The latter lives on
$T\cM$ and is parameterized by $M_0$, by the scalar field metric $\cG$
and by the scalar potential $\Phi$. The equivalence between the
cosmological equation and the cosmological dynamical system follows
from the theory of geometric second order ODEs \cite{Kosambi, Cartan,
Chern, SLK, Bucataru} (see \cite{Nis} for a brief account). The flow
defined by this dynamical system on $T\cM$ is called the {\em
cosmological flow} of the scalar triple $(\cM,\cG,\Phi)$ at rescaled
Planck mass $M_0$.

Since the topology of $\cM$ is arbitrary, a systematic approach to
such models requires the full force of the geometric theory of
dynamical systems as developed for example in \cite{Palis,Katok,Shub};
one cannot give a satisfactory treatment merely by studying the
description of this system in local coordinates on the scalar manifold
$\cM$. For example, $\cM$ need not be simply connected so a maximal
cosmological curve need not be contractible. More importantly, $\cM$
is generally non-compact in physically interesting applications, so a
cosmological curve can ``escape to infinity'' in the sense that it can
approach a Freudenthal end \cite{Freudenthal, Morita, Isbell, Peschke}
of $\cM$ for early or late cosmological times depending on the
behavior of $\Phi$ and $\cG$ near that end. Non-compactness of $\cM$
complicates the classification of past and future limit points of
cosmological curves, which governs the early and late time behavior of
the model -- an aspect which is of direct physical interest. It also
prevents the cosmological flow from being future complete unless
appropriate conditions are imposed on $\Phi$ and $\cG$ in the vicinity
of the Freudenthal ends of $\cM$. Such global aspects of cosmological
dynamics have direct implications for any attempt to construct
effective descriptions when one incorporates quantum effects (as
envisaged for example in cosmological applications of the swampland
program). Indeed, any effective description depends on having an a
priori topological classification of maximal cosmological curves since one
cannot expect an effective description to be the same in every
topological class. Intuitively, the topological classification of
maximal cosmological curves partitions the dynamics of the model into
``phases'' and every phase will have its own effective description.

The global aspects mentioned above are already important for two-field
cosmological models i.e. when $\cM$ is a surface, in which case we
denote it by $\Sigma$. The topological classification of borderless
connected paracompact surfaces \cite{Kerekjarto, Stoilow, Richards}
through their orientability, reduced genus and space of ends (together
with its length two chain of distinguished subspaces) shows that the
global theory of cosmological dynamical systems is already very rich
in the two-field case. This is especially true when the set of ends is
infinite, in which case it can be for example a Cantor space.  The
theory is quite rich even when one restricts to oriented surfaces
which are topologically finite in the sense that they have a
finitely-generated fundamental group and hence a finite number of
Freudenthal ends. The dynamical complexity of two-field models having
such surfaces as targets was illustrated in our previous work
\cite{genalpha,elem, modular} (see \cite{unif} for a brief
review) when $\cG$ has constant negative curvature; this corresponds
to {\em two-field generalized $\alpha$-attractor models}, which form a
very wide extension of the topologically trivial class of Poincar\'e
disk two-field models considered \cite{KLR}.

Since the connected manifold $\cM$, the complete Riemannian metric
$\cG$ and the positive potential $\Phi$ are arbitrary, the task of
classifying multifield cosmological dynamics may seem hopeless at
first sight. From a dynamical systems perspective, one could attempt
to classify cosmological flows up to the appropriate notions of conjugation
or equivalence, a task which is nontrivial when $\cM$ is non-compact
and was not yet attempted in the generality considered here.

In this paper, we propose a different approach to extract the
essential features of such models and to organize them into
qualitatively distinct dynamical classes. Our point of view is
inspired by ideas akin to those used in the theory of critical
phenomena, with the points of $T\cM$ playing the role of microscopic
states and $\cG$ and $\Phi$ playing a role similar to that of
non-equilibrium thermodynamic observables. A state of the system is a
point $u(t)$ of $T\cM$ which evolves in time according to the
cosmological flow and has ``thermodynamic'' parameters
$\frac{1}{2}||u(t)||_\cG^2=\frac{1}{2}\cG(\pi(u(t)))(u(t),u(t))$ and
$\Phi(\pi(u(t)))$, where $\pi$ is the projection of $T\cM$. As in
thermodynamics, knowing the values of these parameters does not
determine the ``microscopic'' state $u(t)$. Since the cosmological
equation is autonomous, the system is stationary in the sense that its
flow is invariant under shifts of the cosmological time by an
arbitrary constant. Following the analogy with critical phenomena, we
consider the behavior of the model under scale transformations
$t\rightarrow t/\epsilon$ of the cosmological time (where $\epsilon$
is a positive parameter), showing that such transformations induce a
renormalization group action on $M_0$, $\cG$ and $\Phi$. The {\em
scaling limits} $\epsilon\rightarrow \infty$ and $\epsilon\rightarrow
0$ capture the high frequency (or ultraviolet) and low frequency (or
infrared) behavior of cosmological curves in the sense that they
``isolate'' the high and low frequency characteristic oscillations of
such curves. We then show that taking $\epsilon$ to be large or small
corresponds respectively to replacing the cosmological flow of
$(\cM,\cG,\Phi)$ with a modification of the geodesic flow of the
scalar manifold $(\cM,\cG)$ or with the gradient flow of the {\em
classical effective potential} $V=M_0\sqrt{2\Phi}$ on this Riemannian
manifold. The ultraviolet limit is scale invariant, while the infrared
limit (which we consider up to first order in the scale parameter
$\epsilon$) is invariant under Weyl transformations of $\cG$ up to
reparameterization of the flow curves; in particular the first order
IR approximation of cosmological orbits is Weyl-invariant. This limit
is degenerate in that it confines the cosmological flow to the graph
of the vector field $-\grad_\cG V$ inside $T\cM$; accordingly, the order of the
cosmological equation drops by one in the infrared limit. The scaling
limits arise as the leading orders of systematic asymptotic
approximations (called the UV and IR expansions) of the cosmological
flow around the geodesic flow of $(\cM,\cG)$ or around the gradient
flow of $(\cM,\cG,V)$. Such expansions are natural from a physics
perspective. They are also mathematically natural since geodesic and
gradient flows are well-studied subjects in the theory of dynamical
systems -- though the generic non-compactness of $\cM$ complicates the
analysis. Use of the scaling limits allows us to classify multifield
cosmological models into UV and IR {\em universality classes} whose
study relates to that of such classical flows.

The Weyl-invariance of infrared approximants to cosmological orbits
has striking implications for two-field models. In this case, the
uniformization theorem of Poincar\'e \cite{UnifT} states that the Weyl
equivalence class of $\cG$ contains a unique complete metric $G$
(called the {\em uniformizing metric}) which has constant Gaussian
curvature $K$ equal to $-1$, $0$ or $+1$. Thus the first order
infrared approximants of cosmological orbits for any cosmological
two-field model with positive scalar potential $\Phi$ coincide with
those of the model obtained by replacing $\cG$ with $G$ and with the
gradient flow orbits of the scalar triple $(\cM,G,V)$. In particular,
models whose scalar manifold metric has constant Gaussian curvature
provide distinguished representatives of the infrared universality
classes of all two-field models.

The generic case $K=-1$ arises whenever $\Sigma$ is a (compact or
non-compact) surface of general type, i.e.  not diffeomorphic with a
2-plane, a 2-sphere, a real projective plane, a 2-torus, a Klein
bottle, an open 2-cylinder or an open M\"{o}bius strip.  In this
situation, the uniformizing metric $G$ is hyperbolic. When $\Sigma$ is
diffeomorphic with $\rS^2$ or $\R\P^2$, the metric $G$ has Gaussian
curvature $+1$, while when $\Sigma$ is diffeomorphic with a torus or
Klein bottle the uniformizing metric is flat and complete. When
$\Sigma$ is {\em exceptional}, i.e. diffeomorphic with a plane, an
open annulus or an open M\"{o}bius strip, the metric $\cG$ uniformizes
to a complete flat metric or to a hyperbolic metric depending on its
conformal class. In this situation, a description of universality
classes requires considering {\em both}\footnote{For the three
exceptional surfaces, a hyperbolic metric on $\Sigma$ is conformally
flat but it is not conformally equivalent with a {\em complete} flat
metric.}  models with complete flat and hyperbolic scalar manifold
metric. Notice that two-field models with contractible target are of
exceptional type and hence the Poincar\'e disk models of \cite{KLR}
cannot be IR universal among such. The qualitatively different
behavior of two-field models with distinct target space topology
illustrates the importance of global aspects in cosmological dynamics.

The paper is organized as follows. In Section \ref{sec:models}, we
recall the global description of multifield cosmological models,
outline its dynamical system formulation and some of its basic
properties and discuss a universal two-parameter group of similarities
of the cosmological equation. We also describe two natural equivalence
relations on multifield cosmological models which arise from
underlying groupoid structures and make some observations on the early
and late time behavior of cosmological curves for models with
non-compact target manifold. In Section \ref{sec:ScalingLimits}, we
discuss the scale transformations and scaling limits of multifield
cosmological models, showing that the UV and IR limits recover
respectively a modification of the geodesic flow of $(\cM,\cG)$ and
the gradient flow of $(\cM,\cG,V)$, where $V=M_0\sqrt{2\Phi}$. We also
derive consistency conditions for the UV and IR approximations,
showing that they differ from other approximations commonly used in
cosmology (such as the slow roll and SRST approximations or the
gradient flow approximation of \cite{genalpha}). In Section
\ref{sec:ren}, we introduce the dynamical renormalization group of
such models. We show invariance of first order IR approximant orbits
under Weyl transformations of the scalar manifold metric, define IR
universality classes and briefly discuss the late time infrared phase
structure.  In Section \ref{sec:2field} we prove IR universality of
two-field models whose scalar manifold metric has constant Gaussian
curvature equal to $-1$, $0$ or $1$. Section \ref{sec:Conclusions}
presents our conclusions and some directions for further research. The
Appendices contain some technical notions and results used in the main
text.

\paragraph{Notations and conventions.}

All target manifolds $\cM$ considered in this paper are connected,
smooth, Hausdorff and paracompact. If $V$ is a smooth real-valued
function defined on $\cM$, we denote by: \be \Crit V\eqdef
\{c\in \cM| (\dd V)(c)=0\} \ee the set of critical points of $V$. For
any $c\in \Crit V$, we denote by $\Hess(V)(c)\in \Sym^2(T^\ast_c\cM)$
the Hessian of $V$ at $c$, which is a well-defined and coordinate
independent symmetric bilinear form defined on the tangent space
$T_c\cM$. Recall that a critical point
$c$ of $V$ is called {\em nondegenerate} if $\Hess(V)(c)$ is a
non-degenerate bilinear form. When $V$ is a Morse function (i.e. all
of its critical points are non-degenerate), the set $\Crit V$ is
discrete.

We denote by $\hcM$ the Freudenthal (a.k.a. end) compactification of
$\cM$, which is a Hausdorff topological space containing $\cM$ as a
dense subset (see \cite{Freudenthal,Morita,Isbell}). A metric $\cG$ on
$\cM$ is called {\em hyperbolic} if it is complete and of constant
sectional curvature equal to $-1$. In particular, a metric defined on
a surface is hyperbolic if it is complete and of Gaussian curvature
$-1$.

\section{Multifield cosmological models}
\label{sec:models}

Throughout this paper, a {\em multifield cosmological model} means a
classical cosmological model with a finite number $d>1$ of scalar
fields, which is derived from the following matter-gravity action on a
spacetime with topology $\R^4$:
\ben
\label{S}
S[g,\varphi]=\int \vol_g \cL[g,\varphi]~~,
\een
where:
\ben
\label{cL}
\cL[g,\varphi]=\frac{M^2}{2} \mathrm{R}(g)-\frac{1}{2}\Tr_g \varphi^\ast(\cG)-\Phi\circ \varphi~~.
\een
Here $M$ is the reduced Planck mass, $g$ is the spacetime metric on
$\R^4$ (taken to be of ``mostly plus'') signature, while $\vol_g$ and
$\mathrm{R}(g)$ are the volume form and Ricci scalar of $g$. The scalar fields
are described by a smooth map $\varphi:\R^4\rightarrow \cM$, where
$\cM$ is a (generally non-compact) connected, smooth and paracompact
manifold of dimension $d$ which is endowed with a smooth Riemannian
metric $\cG$, while $\Phi:\cM\rightarrow \R$ is a smooth function
which plays the role of potential for the scalar fields. As mentioned
in the introduction, we require that $\cG$ is complete to ensure
conservation of energy. For simplicity, we will also assume that
$\Phi$ is strictly positive on $\cM$. The quantity
$\Tr_g\varphi^\ast(\cG)$ is the trace of the $(1,1)$-tensor obtained
by raising one of the indices of the covariant 2-tensor
$\varphi^\ast(\cG)$ with respect to the spacetime metric $g$, while
$\Phi\circ \varphi:\R^4\rightarrow \R$ is the standard mathematical
notation\footnote{This is commonly written as $\Phi(\varphi)$ in the
physics literature, though this notation is misleading since $\varphi$
is not the argument of $\Phi$.} for the real-valued function defined
on $\R^4$ which is obtained by composing $\Phi$ with $\varphi$:
\be
(\Phi\circ\varphi)(x^0,\ldots, x^3)=\Phi(\varphi(x^0,\ldots, x^3))~~.
\ee
In local coordinates on $\cM$, we have $\varphi(x^0,\ldots,
x^3)\!=\!(\varphi^1(x^0,\ldots, x^3),\ldots, \varphi^d(x^0,\ldots, x^3))$.
The second term in the Lagrangian above takes the familiar
sigma model form if one uses local coordinates on $\cM$:
\be
\frac{1}{2}(\Tr_g \varphi^\ast(\cG))(x)=\frac{1}{2}g^{\mu \nu}(x) \cG_{\alpha\beta}(\varphi(x)) \partial_\mu \varphi^\alpha\partial_\nu \varphi^\beta~~.
\ee
Notice that the action \eqref{S} and its Lagrangian density \eqref{cL}
are manifestly geometric, i.e. written in coordinate-free form and
without making any restrictive assumptions on the differential
topology of $\cM$, which (except for being connected) can be arbitrary
since any paracompact manifold admits Riemannian metrics. Also notice
that such a model is parameterized by the quadruplet $\fM\eqdef
(M_0,\cM,\cG,\Phi)$.

\subsection{The cosmological equation and cosmological dynamical system} 
\label{subsec:eom}

\noindent The multifield cosmological model parameterized by
$(M_0,\cM,\cG,\Phi)$ is obtained by assuming that $g$ is an FLRW
metric with flat spatial section:
\ben
\label{FLRW}
\dd s^2_g=-\dd t^2+a(t)^2\sum_{i=1}^3 \dd x_i^2
\een
(where $a(t)>0$) and that $\varphi$ depends only on
$t\eqdef x^0$, which we call {\em cosmological time}. In
this case, the equations of motion derived from \eqref{S} (namely the
Einstein equations and the equation of motion for $\varphi$) amount to
the following system of coupled nonlinear ODEs:
\beqan
\label{EL}
\nabla_t \dot{\varphi}+3H \dot{\varphi}+(\grad_{\cG} \Phi)\circ \varphi &=&0 \nn\\
\frac{1}{3}\dot{H}+H^2 - \frac{\Phi\circ \varphi}{3M^2} &=& 0\\
\dot{H} +\frac{||\dot{\varphi}||_\cG^2}{2M^2} &=& 0~~,\nn
\eeqan
where the dot indicates derivation with respect to $t$ and $H\eqdef
\frac{\dot{a}}{a}\in \cC^\infty(\R)$ is the {\em Hubble parameter}.
The last relation in the system above is called the {\em Friedmann
equation}. Notice that $a$ enters this system only through its
logarithmic derivative $H$.

\paragraph{The cosmological equation.}

When $H$ is positive (which we assume throughout this paper), it can
be eliminated algebraically using last two equations, which give:
\ben
\label{Hvarphi}
H(t)=H_\varphi(t)\eqdef \frac{1}{3 M_0}\left[||\dot{\varphi}(t)||_\cG^2+2\Phi(\varphi(t))\right]^{1/2}~~,
\een
where we defined the {\em rescaled Planck mass} $M_0$ though:
\ben
M_0\eqdef M\sqrt{\frac{2}{3}}~~.
\een
Eliminating $H$ though \eqref{Hvarphi} allows one to reduce the system
\eqref{EL} to the following autonomous geometric second order
ODE, which we call the {\em cosmological equation}:
\ben
\label{eomsingle}
\nabla_t \dot{\varphi}(t)+\frac{1}{M_0} \left[||\dot{\varphi}(t)||_\cG^2+2\Phi(\varphi(t))\right]^{1/2}\dot{\varphi}(t)+ (\grad_{\cG} \Phi)(\varphi(t))=0~~.
\een
Here $\nabla_t\eqdef \nabla_{\dot{\varphi}(t)}$. The cosmological
equation is {\em equivalent} with the system \eqref{EL} when $H>0$.
Since $\grad_{\cG}\Phi$ is invariant under transformations of the form
$(\cG,\Phi)\rightarrow (\lambda\cG,\lambda\Phi)$ with $\lambda$ a
positive constant, this equation can be written as:
\be
\nabla_t \dot{\varphi}(t)+\left[||\dot{\varphi}(t)||_{\cG_0}^2+2\Phi_0(\varphi(t))\right]^{1/2}\dot{\varphi}(t)+ (\grad_{\cG_0} \Phi_0)(\varphi(t))=0~~,
\ee
while \eqref{Hvarphi} reads:
\be
H_\varphi(t)=\frac{1}{3}\left[||\dot{\varphi}(t)||_{\cG_0}^2+2\Phi_0(\varphi(t))\right]^{1/2}~~,
\ee
where we defined the {\em rescaled scalar field metric} and {\em
rescaled scalar potential} through:
\be
\cG_0\eqdef \frac{1}{M_0^2}\cG~~\mathrm{and}~~\Phi_0\eqdef\ \frac{1}{M_0^2}\Phi~~.
\ee
In particular, the cosmological equation
depends only on the {\em rescaled scalar triple} $(\cM,\cG_0,\Phi_0)$.
Given a solution $\varphi:I\rightarrow \cM$ of this equation, relation
\eqref{Hvarphi} determines the scale factor $a(t)$ up to a
multiplicative constant $C>0$:
\be
a(t)=C e^{\int_{t_0}^t \dd t' H_\varphi(t')}~~,
\ee
where $t_0\in I$ is chosen arbitrarily.

\paragraph{Cosmological curves and cosmological orbits.}

The solutions $\varphi:I\rightarrow \cM$ of \eqref{eomsingle} (where
$I$ is a non-degenerate interval) are called {\em cosmological
curves}. The images $\varphi(I)$ of these curves in $\cM$ will be
called {\em cosmological orbits}. Notice that a cosmological orbit
need not be an immersed submanifold of $\cM$. It can be shown that the
singular points of a cosmological curve (i.e. those points
where its tangent vector vanishes) form an at most countable set whose
complement in the corresponding orbit is a union of mutually disjoint
embedded submanifolds of $\cM$. The cosmological times corresponding
to the singular points form a discrete subset of its
interval of definition. 

\paragraph{The cosmological semispray.}

Since the cosmological equation is manifestly geometric, it is also
geometric in the weaker sense that its local description in a
coordinate system on $\cM$ is invariant under changes of local
coordinates. It follows that this equation is equivalent with the flow
equation of a special type of vector field (called a {\em semispray}
or {\em second order tangent vector field}) defined on the total space
of the tangent bundle of $\cM$. We refer the reader to \cite{SLK,
Bucataru} for an introduction to the well-developed theory of
semisprays and second order geometric ODEs; for completeness, let us
recall the relevant definitions and properties of such vector fields.

Let $J\in \End_{T\cM}(TT\cM)$ be the canonical endomorphism
(a.k.a. tangent structure) of the double tangent bundle $TT\cM$ and
$C\in \cX(T\cM)$ be its Euler-Liouville vector field. Let
$\kappa:TT\cM\rightarrow TT\cM$ be the canonical involution
(a.k.a. flip) of $TT\cM$, $\pi:T\cM\rightarrow \cM$ be the bundle
projection of $T\cM$ and $\pi_T:TT\cM\rightarrow T\cM$ be the tangent
bundle projection (a.k.a. first projection) of $TT\cM$. Then the map
$\pi_\ast:TT\cM\rightarrow T\cM$ endows $TT\cM$ with a second vector
bundle structure such that $(TT\cM,\pi_T,\pi_\ast,\cM)$ is a double
vector bundle. By definition, a second order tangent vector to $\cM$
is an element $w\in TT\cM$ such that $J(w)=C_w$. This condition is
equivalent with $\kappa(w)=w$ and also with $\pi_\ast(w)=\pi_T(w)$. We
denote by $T^sT\cM$ the vector sub-bundle of $TT\cM$ consisting of
second order tangent vectors. By definition, a smooth semispray on
$T\cM$ is a smooth vector field $S\in \cX(T\cM)$ which is a section of
this sub-bundle, i.e. which has the property that $S_u$ is a double
tangent vector to $\cM$ for all $u\in T\cM$. This amounts to requiring
that $S$ satisfies the equivalent conditions:
\be
J(S)=C\Longleftrightarrow \kappa(S)=S\Longleftrightarrow \pi_\ast(S)=\pi_T\circ S~~,
\ee
where in the right hand side of the last equality $S$ is viewed as a
map from $T\cM$ to $TT\cM$.

The translation between the cosmological equation and the integral
curve equation of the cosmological semispray is performed by
considering the {\em canonical lift} of a cosmological curve
$\varphi:I\rightarrow \cM$, which is defined as its first jet
prolongation:
\be
c(\varphi)\eqdef j^1(\varphi)=\dot{\varphi}:I\rightarrow T\cM~~,
\ee
where $\dot{\varphi}(t)\in T_{\varphi(t)}\cM$ is the tangent vector to
$\varphi$ at cosmological time $t$. The canonical lift gives
an injective map:
\be
c:C(\cM)\rightarrow C(T\cM)
\ee
from the set $C(\cM)$ of smooth curves in $\cM$ to the set $C(T\cM)$
of smooth curves in $T\cM$. This is a right inverse of the map
$p:C(T\cM)\rightarrow C(\cM)$ defined through:
\be
p(\gamma)\eqdef \pi\circ \gamma:I\rightarrow \cM
\ee
for any smooth curve $\gamma:I\rightarrow T\cM$. The relation $p\circ
c=\id_{C(\cM)}$ reflects the fact that $\dot{\varphi}(t)\in T\cM$ is
a tangent vector to $\cM$ at the point $\varphi(t)$. Since
the canonical lift $\varphi$ satisfies:
\be
\pi_\ast\circ c(c(\varphi))=\pi_T\circ c(\varphi)~~,
\ee
the image of the map $c$ coincides with the subset $C_s(T\cM)$ of
$C(T\cM)$ consisting of {\em semispray curves} in $T\cM$,
which are defined as those smooth curves $\gamma:I\rightarrow T\cM$
whose tangent vector at any point is a double tangent vector to $\cM$,
i.e. which satisfy $\dot{\gamma}(t)\in T^s_{\gamma(t)}T\cM$ for all
$t\in I$. The inverse of the bijection $c:C(\cM)\rightarrow C_s(T\cM)$
is the restriction of $p$ to $C_s(T\cM)$. 

The canonical lifts of cosmological curves are called cosmological
{\em flow} curves. The {\em cosmological semispray} $S\in \cX(T\cM)$
is defined by the property that its integral curves coincide with the
cosmological flow curves. It can be shown that $S$ is given by:
\be
S=\mS-Q~~,
\ee
where $\mS$ is the geodesic spray of the scalar manifold
$(\cM,\cG)$ and the {\em cosmological correction} $Q\in \cX(T\cM)$
is the vertical vector field:
\ben
\label{Q}
Q=\cH C+(\grad_{\cG} \Phi)^v~~.
\een
Here the superscript $v$ denotes the vertical lift of vector fields
from $\cM$ to $T\cM$ and $\cH:T\cM\rightarrow \R_{>0}$ is the {\em
reduced Hubble function} of $(\cM,\cG,\Phi)$, which is defined
through:
\ben
\label{cH}
\cH(u)\eqdef \frac{1}{M_0}\sqrt{||u||_{\cG}^2+2\Phi(\pi(u))}~~\forall u\in T\cM~~.
\een

\paragraph{The cosmological dynamical system and cosmological flow.}

The cosmological semispray $S$ defines an autonomous geometric dynamical
system $(T\cM,S)$ on the total space of the tangent bundle to $\cM$,
whose flow $\Pi:\cD\rightarrow T\cM$ is called the {\em cosmological
flow} of the model parameterized by $(M_0,\cM,\cG,\Phi)$
(equivalently, of the rescaled scalar triple
$(\cM,\cG_0,\Phi_0)$). Here $\cD\subset \R\times T\cM$ is the maximal
domain of definition of the flow. Little is known about the global
behavior of the cosmological flow of a general non-compact scalar
triple and especially about its early and late time behavior.

\paragraph{Classical cosmological observables.}

A {\em basic local cosmological observable} is a smooth function
$F:T\cM\rightarrow \R$. The {\em on-shell reduction} of $F$ on a
cosmological flow curve $\gamma:I\rightarrow T\cM$ is the function:
\be
{\hat F}_\gamma\eqdef F\circ \gamma:I\rightarrow \R~~,
\ee
while its on-shell reduction on a cosmological curve
$\varphi:I\rightarrow \cM$ is the function:
\be
F_\varphi\eqdef {\hat F}_{\dot{\varphi}}=F\circ \dot{\varphi}=F\circ c(\varphi):I\rightarrow \R~~.
\ee
One can also consider smooth functions $F:J^k(\cM)\rightarrow \R$,
where $k\geq 2$ and $J^k(\cM)$ is the $k$-th jet bundle of $\cM$. The
reduction of such a function on a cosmological curve
$\varphi:I\rightarrow \cM$ is defined though:
\be
F_\varphi\eqdef F\circ j^k(\varphi):I\rightarrow \R~~,
\ee
where $j^k(\varphi):I\rightarrow J^k(\cM)$ is the $k$-th jet
prolongation of $\varphi$.  Repeated use of the cosmological equation
allows one to express $F_\varphi$ as:
\be
F_\varphi={\tilde F}_\varphi~~,
\ee
where ${\tilde F}\in \cC^\infty(T\cM)$ is a basic local observable
constructed from $F$ and the jets $j^{k-1}(\cG)\in J^{k-1}(\Sym^2(T\cM),\R)$
and $j^{k-1}(\Phi)\in J^{k-1}(\cM,\R)$ of $\cG$ and $\Phi$.

\paragraph{Dissipativity and stationary points.}

It is easy to check that the stationary points of the cosmological
flow are the images of the critical points of $\Phi$ through the zero
section of $T\cM$. Accordingly, the stationary set of the cosmological
model coincides with the {\em trivial lift}:
\be
(\Crit \Phi)_0\eqdef \{0_c\vert c\in \Crit\Phi\}\subset T\cM
\ee
of the critical set:
\be
\Crit\Phi\eqdef \{c\in \cM\vert (\dd\Phi)(c)=0\}
\ee
of $\Phi$, whose complement:
\be
\cM_0\eqdef \cM\setminus \Crit\Phi\subset \cM
\ee
is an open submanifold of $\cM$ called the {\em noncritical set} of
the model.

A cosmological flow curve is constant iff its orbit meets a point of
the set $(\Crit \Phi)_0$, in which case its orbit is reduced to that
point. Accordingly, a cosmological curve is constant iff its orbit
meets a critical point of $\Phi$ {\em at zero speed}, in which case
the orbit coincides with that critical point. When $\Phi$ is a Morse
function, a straightforward computation shows that the stationary
points of the cosmological flow are hyperbolic.

It is also easy to see that the cosmological flow is
dissipative. For this, consider the {\em total energy} observable
$E:T\cM\rightarrow \R$ defined through:
\be
E(u)\eqdef \frac{1}{2}||u||_\cG^2+\Phi(\pi(u))~~\forall u\in T\cM~~,
\ee
which is the sum of the kinetic and potential energies:
\be
E^k(u)\eqdef \frac{1}{2}||u||_\cG^2~~,~~E^p(u)\eqdef \Phi(\pi(u))
\ee
and is related to the reduced Hubble function by
$\cH=\frac{1}{M_0}\sqrt{2E}$. The last equation in \eqref{EL} implies
that the total energy of a cosmological curve $\varphi$:
\ben
\label{energy}
E_\varphi(t)\eqdef E(\dot{\varphi}(t))=\frac{1}{2}||\dot{\varphi}(t)||_\cG^2+\Phi(\varphi(t))=\frac{9 M_0^2}{2} H_\varphi(t)^2
\een
satisfies:
\ben
\label{EnergyEq}
\frac{\dd E_\varphi(t)}{\dd t}=-\frac{1}{M_0}\sqrt{2E_\varphi(t)}||\dot{\varphi}(t)||_\cG^2
\een
and hence decreases strictly with time when $\varphi$ is non-constant.
In particular, any non-constant cosmological curve is aperiodic and
without self-intersections. Moreover, all non-constant cosmological
flow curves are embedded curves in $T\cM$. As mentioned before,
cosmological curves need not be immersed in $\cM$ but their singular
times form an at most countable discrete subset of their interval of
definition. A cosmological curve $\varphi:I\rightarrow \cM$ is
singular at cosmological time $t_0\in I$ iff its complete lift
$\gamma=c(\varphi)$ meets the zero section of $T\cM$ for
$t=t_0$. Hence the embedded components of the orbit of $\varphi$ are
the $\pi$-projections of the set $\gamma(I)\cap {\dot T}\cM$, where
${\dot T}\cM$ is the slit tangent bundle of $\cM$ (which is defined as
the complement in $T\cM$ of the image of the zero section).

\paragraph{Future completeness when $\cM$ is compact.}

Since we assume that $\Phi$ is positive on $\cM$, each energy sublevel
set:
\be
\cM_E(C)\eqdef \{u\in T\cM~\vert~E(u)\leq C\}~~(C>0)
\ee
is contained in the tubular neighborhood of the zero section of $T\cM$
defined by the inequality $||u||_\cG\leq C\sqrt{2}$. Relation
\eqref{EnergyEq} implies that the complete lift $\dot{\varphi}$ of a
maximal cosmological curve $\varphi$ is contained for $t\geq t_0$
within the tubular neighborhood with $C=E_{\varphi}(t_0)$. When $\cM$
is compact, this tubular neighborhood is compact and this observation
together with the Escape Lemma implies that $\varphi(t)$ is defined
for all $t\geq t_0$, which shows that in this case the cosmological
flow is future-complete.

Notice, however, that $\cM$ is non-compact in most applications. When
$\cM$ is not compact, a maximal cosmological curve can ``escape to
infinity'' in the sense that its orbit for large times is not
contained in any compact subset of $\cM$; this is equivalent with the
statement that the curve has a Freudenthal end of $\cM$ among its
limit points. In this case, the late time behavior of those
cosmological curves which escape to infinity depends markedly on the
asymptotic form of $\Phi$ and $\cG$ near the Freudenthal ends of $\cM$
and the cosmological flow need not be future-complete.

\subsection{The universal similarity group}
\label{subsec:contsym}

\noindent Multifield cosmological models admit a universal two-parameter group
of similarities, which relate the cosmological curves of a model with
those of another model having the same target manifold $\cM$ but
different parameters $(M_0,\cG,\Phi)$. We first discuss scale
transformations of curves in $\cM$, which enters the definition of
this group action.

\begin{definition}
Let $\epsilon>0$. The {\em $\epsilon$-scale transform} of a curve
$\varphi:I\rightarrow \cM$ is the curve
$\varphi_\epsilon:I_\epsilon\rightarrow \cM$ defined through:
\be
I_\epsilon \eqdef \epsilon I=\{\epsilon t \vert t\in I\}
\ee
and:
\be
\varphi_\epsilon(t)\eqdef \varphi(t/\epsilon)~~\forall t\in I_\epsilon~~.
\ee
\end{definition}

\

\noindent The transformations $\varphi\rightarrow \varphi_\epsilon$
are called {\em scale transformations}. They define an action of the
multiplicative group $\R_{>0}$ on the set:
\be
C(\cM)\eqdef \sqcup_{I\in \Int}\cC^\infty(I,\cM)
\ee
of all smooth curves in $\cM$, where $\Int$ is the set of all
non-degenerate intervals on the real axis. Scale transformations are
not symmetries of the cosmological equation; the scale symmetry is
``broken'' by the scalar potential $\Phi$, being restored in the limit
$\Phi\rightarrow 0$ (which, as we will see below, corresponds to the
UV limit).

\

\noindent We consider the following similarities of the cosmological equation \eqref{eomsingle}:
\begin{itemize}
\item {\bf The parameter homothety.} The Lagrangian density $\cL$ of
\eqref{cL} and the action $S$ of \eqref{S} are homogeneous of degree
one under the {\em parameter homothety}:
\ben
\label{parhom}
\cG\rightarrow \lambda \cG~~,~~\Phi\rightarrow \lambda \Phi~~,~~M\rightarrow \lambda^{1/2} M ~~(\mathrm{thus}~~M_0\rightarrow \lambda^{1/2} M_0)~~,
\een
where $\lambda$ is a positive constant. As a consequence, the
equations of motion \eqref{EL} and the cosmological equation
\eqref{eomsingle} are invariant under such
transformations\footnote{Notice that the Levi-Civita connection
$\nabla$ of $\cG$ is invariant under constant rescalings of $\cG$.}.
\item {\bf The scale similarity.} 
The equations of motion \eqref{EL} are invariant under the transformations:
\be
t\rightarrow \epsilon t ~~,~~\Phi\rightarrow \Phi_\epsilon\eqdef \Phi/\epsilon^2~~
\ee
with $\epsilon>0$, which change $H=\frac{\dot{a}}{a}$ into
$\frac{1}{\epsilon}H$. Accordingly, the cosmological equation
\eqref{eomsingle} is invariant under:
\ben
\label{sim}
\varphi\rightarrow \varphi_\epsilon~~,~~\Phi\rightarrow \Phi_\epsilon\eqdef \Phi/\epsilon^2~~(\epsilon>0)~~.
\een
\end{itemize}

\noindent Let $\Met(\cM)$ be the set of all Riemannian metrics defined
on $\cM$ and $\Pot_+(\cM)\eqdef \cC^\infty(\cM,\R_{>0})$ be the set of
all positive smooth functions defined on $\cM$. The {\em universal cosmological
similarity group} is the multiplicative group $T=\R_{>0}\times \R_{>0}$, where
pairs of positive numbers multiply componentwise:
\be
(\lambda_1,\epsilon_1)(\lambda_2,\epsilon_2)\eqdef (\lambda_1\lambda_2,\epsilon_1\epsilon_2)~~.
\ee
The transformations above induce an action $\rho_{\param}$ of $T$
on the set $\Par(\cM)\eqdef \R_{>0}\times \Met(\cM)\times \Pot_+(\cM)$
of parameters of models with fixed target space $\cM$:
\ben
\label{rhoparam}
\rho_{\param}(\lambda,\epsilon)(M_0,\cG,\Phi)\eqdef (\lambda^{1/2}M_0,\lambda\cG, \frac{\lambda}{\epsilon^2} \Phi)~~\forall (\lambda,\epsilon)\in T~~\forall (M_0,\cG,\Phi)\in  \Par(\cM)~~.
\een
This action is free. On the other hand, scale transformations define
an action $\rho_0$ of $T$ on the space $C(\cM)$ of all smooth curves
in $\cM$:
\be
\rho_0(\lambda,\epsilon)(\varphi)\eqdef \varphi_\epsilon~~\forall (\lambda,\epsilon)\in T~~\forall \varphi\in C(\cM)~~. 
\ee

\begin{definition} The {\em universal similarity action} is the action
$\rho$ of $T$ on the set $C(\cM)\times \Par(\cM)$ given by:
\be
\rho(\lambda,\epsilon)(\varphi,M_0,\cG,\Phi)=(\rho_0(\lambda,\epsilon)(\varphi),
\rho_{\param}(\lambda,\epsilon)(M_0,\cG,\Phi))=(\varphi_\epsilon,\lambda^{1/2}M_0,\lambda\cG,
\frac{\lambda}{\epsilon^2} \Phi)~~.
\ee
\end{definition}

\noindent For any $(\lambda,\epsilon)\in T$, we have: 
\be
\rho_0(\lambda,\epsilon)(\cS^{M_0,\cG,\Phi}(\cM))=\cS^{\rho_\param(\lambda,\epsilon)(M_0,\cG,\Phi)}(\cM)~~,
\ee
where $\cS^{M_0,\cG,\Phi}(\cM)$ denotes the set of cosmological
curves of the model with target $\cM$ and parameters $(M_0,\cG,
\Phi)$. Thus $\varphi:I\rightarrow \cM$ is a cosmological curve for
the model with parameters $(M_0,\cG,\Phi)$ iff
$\varphi_\epsilon:I_\epsilon\rightarrow \cM$ is a cosmological curve
for the model with parameters $(\lambda^{1/2}M_0,\lambda\cG,
\frac{\lambda}{\epsilon^2} \Phi)$.

The universal similarity action allows one to eliminate the overall
scale of $\Phi$ and another scale from the problem. For example, one
can fix the overall scales of $\cG$ and $\Phi$, in which case one is
left with the single scale set by $M_0$. Equivalently, one can
eliminate the rescaled Planck mass by setting $M_0=1$ and the overall
scale of $\Phi$, in which case one is left with the single scale set
by $\cG$.  Notice that one cannot fix the scales of $M_0$ and $\cG$
independently.  Since $\Phi$ plays a distinguished role in this
regard, it is natural to consider the stabilizer $T_\ren\simeq
\R_{>0}$ of $\Phi$ in $T$ with respect to the universal similarity action:
\ben
\label{Tren}
T_\ren=\Stab_T(\Phi)=\{(\lambda,\epsilon)\in T\vert \lambda=\epsilon^2\}
\een
This is the subgroup of $T$ which can be used to rescale $M_0$ and
$\cG$ once the scale of $\Phi$ has been fixed; it is the {\em
  renormalization group} considered in Section \ref{sec:ren}.

\subsection{Cosmological conjugation and equivalence}
\label{subsec:conjequiv}

\noindent There exist two natural equivalence relations between
cosmological models, which are the isomorphism relations of two
underlying groupoids. To describe them, we first note some obvious
properties of the canonical lift $c:C(\cM)\rightarrow C_s(T\cM)$ of
smooth curves from a manifold $\cM$ to its tangent bundle. Consider
two manifolds $\cM_i$ with tangent bundle projections
$\pi_i:T\cM_i\rightarrow \cM_i$ ($i=1,2$) and canonical curve lifts
$c_i:C(\cM_i)\rightarrow C_s(T\cM_i)$. By definition, a {\em semispray
map} from $\cM_1$ to $\cM_2$ is a smooth map $f:T\cM_1\rightarrow
T\cM_2$ which satisfies $f_\ast(T^sT\cM_1)=T^sT\cM_2$. Precomposition
with a semispray map $f$ takes semispray curves in $T\cM_1$ to
semispray curves in $T\cM_2$, i.e. we have:
\be
f\circ (C_s(T\cM_1))=C_s(T\cM_2)~~.
\ee
Moreover, $f$ induces a map ${\hat f}:C(\cM_1)\rightarrow C(\cM_2)$
defined through:
\ben
\label{hatf}
{\hat f}(\varphi_1)\eqdef \pi_2 \circ f\circ c_1(\varphi_1)~~\forall \varphi_1\in C(\cM_1)~~.
\een
This determines the precomposition of semispray curves with $f$ in the
sense that two smooth curves $\varphi_i:I\rightarrow \cM_i$ ($i=1,2$)
satisfy $\varphi_2={\hat f}(\varphi_1)$ iff $c_2(\varphi_2)=f\circ
c_1(\varphi_1)$.

Given a third manifold $\cM_3$ with tangent bundle projection
$\pi_3:T\cM_3\rightarrow \cM_3$ and a second semispray map
$h:T\cM_2\rightarrow T\cM_3$, one easily checks the relation:
\be
\widehat{h\circ f}={\hat h}\circ {\hat f}~~.
\ee
Moreover, for any manifold $\cM$ we have
$\widehat{\id_{T\cM}}=\id_{C(\cM)}$.  Hence the maps $\cM\rightarrow
C(\cM)$ and $f\rightarrow {\hat f}$ define a functor from the category
of manifolds and semispray maps to the category of sets.

\paragraph{Cosmological conjugations.}

Notice that a cosmological flow curve represents the time evolution of
the state of a multifield cosmological model, where the tangent bundle
of $\cM$ is the space of classical states. Hence two cosmological
models parameterized by $\fM_i=(M_{0i},\cM_i,\cG_{i},\Phi_{i})$
($i=1,2$) and whose semisprays we denote by $S_i$ can be
identified if there exists a smooth semispray map $f:T\cM_1\rightarrow
T\cM_2$ (called {\em smooth cosmological conjugation}) which maps the
cosmological flow curves of the first model into those of the second,
i.e. $\gamma_1:I\rightarrow T\cM_1$ is a cosmological flow curve for
$\fM_1$ iff $f\circ\gamma_1:I\rightarrow T\cM_2$ is a cosmological
flow curve for $\fM_2$. This amounts to the requirement that $f$ is a
smooth topological conjugation between the cosmological flows of the
two models (see Appendix \ref{app:tsequiv}), which in turn is
equivalent with the condition:
\be
f_\sharp(S_1)=S_2~~,
\ee
where $f_\sharp$ denotes the $f$-pushforward of vector fields.  Since
every cosmological flow curve is the canonical lift of a cosmological
curve, a semispray map $f$ from $\cM_1$ to $\cM_2$ is a cosmological
conjugation iff the induced curve map ${\hat f}$ of \eqref{hatf} takes
cosmological curves of the model $\fM_1$ into those of the model
$\fM_2$. Since $f\rightarrow {\hat f}$ is a functor, it is clear that
cosmological models and smooth cosmological conjugations form a
groupoid, whose isomorphism classes we call {\em smooth cosmological
conjugacy classes}.

When two models $\fM_1$ and $\fM_2$ are isomorphic in this groupoid,
we write $\fM_1\equiv\fM_2$ and say that the models are {\em smoothly
conjugate}. It is clear that the equivalence relation $\equiv$ depends
only on the rescaled scalar triples of the models, since so do their
cosmological equations. In particular, a parameter homothety with
parameter $\lambda$ corresponds to a conjugation with $f=\id_{T\cM}$
between the model parameterized by $(M_0,\cM,\cG,\Phi)$ and that
parameterized by $(\lambda^{1/2}M_0,\cM,\lambda \cG,\lambda \Phi)$. If
one also requires $M_{01}=M_{02}$ then the conjugation is called {\em
strict}.  Any smooth cosmological conjugation is the composite of a
strict conjugation with a parameter homothety.

A strict smooth cosmological conjugation preserves basic on-shell
cosmological observables in the sense that the pullback map
$f^\ast:\cC^\infty(T\cM_2)\rightarrow \cC^\infty(T\cM_1)$ satisfies:
\be
\widehat{(f^\ast)(F_2)}_{\gamma_1}=({\hat F}_2)_{f\circ \gamma_1}
\ee
for any basic observable $F_2\in \cC^\infty(T\cM)$ of the second model
and any cosmological flow curve $\gamma_1$ of the first model.

\begin{eg}
A particularly simple class of strict smooth conjugations arises from
isomorphisms of scalar triples. We say that two scalar triples
$(\cM_1,\cG_{1},\Phi_{1})$ and $(\cM_2,\cG_{2},\Phi_{2})$ are {\em
isomorphic} if there exists an isometry
$f_0:(\cM_1,\cG_{1})\rightarrow (\cM_2,\cG_{2})$ such that
$\Phi_{1}=\Phi_{2}\circ f_0$; in this case, $f_0$ is called an {\em
isomorphism of scalar triples} and we write
$(\cM_1,\cG_{1},\Phi_{1})\simeq (\cM_2,\cG_{2},\Phi_{2})$.  We say
that the models parameterized by $\fM_1$ and $\fM_2$ are {\em
isomorphic} and write $\fM_1\simeq \fM_2$ if $M_{01}=M_{02}$ and
$(\cM_1,\cG_{1},\Phi_{1})\simeq (\cM_2,\cG_{2},\Phi_{2})$. The
differential $\dd f_0$ of an isomorphism $f_0$ of cosmological models
is a strict smooth conjugation.
\end{eg}

A (necessarily strict) smooth cosmological conjugation between a model
parameterized by $(M_0,\cM,\cG_0,\Phi_0)$ and itself is called a {\em
cosmological symmetry} of that model. Automorphisms of a cosmological
model are sometimes called {\em visible symmetries} while its
remaining symmetries are called {\em hidden} (see
\cite{Noether1,Noether2,Tim19,Hesse}).

\begin{remark}
The study of cosmological symmetries of multifield models in the
mathematical generality considered here has been limited. If one
restricts attention to Lie groups, this becomes the problem of
determining the Lie symmetries of the cosmological equation
\eqref{eomsingle} and of classifying rescaled scalar triples whose
cosmological equation admits given Lie groups of symmetries. Various
local results about Lie symmetries of multifield models can be found
in \cite{TP, Giacomini, P, Noether1, Mondal}. The literature contains
limited information on the global geometry of the resulting scalar
manifolds; see \cite{Noether2, Hesse} for some results in that
direction.
\end{remark}

\paragraph{Cosmological equivalences.}

An equivalence relation weaker than conjugation arises if one
identifies cosmological curves up to increasing reparameterization of
the cosmological time. We say that a smooth curve
$\varphi:I\rightarrow \cM$ is a {\em pre-cosmological curve} of the
model parameterized by $\fM=(M_0,\cM,\cG,\Phi)$ if there exists an
increasing reparameterization $\alpha:J\rightarrow I$ such that
$\varphi\circ \alpha$ is a cosmological curve of $\fM$. A
smooth semispray map $f:T\cM_1\rightarrow T\cM_2$ is called a {\em
smooth cosmological equivalence} between the models $\fM_1$ and $\fM_2$ if
the map ${\hat f}$ of \eqref{hatf} maps the pre-cosmological curves of
$\fM_1$ into those of $\fM_2$. This amounts to the requirement that
${\hat f}$ identifies the oriented cosmological {\em orbits} of the
two models. Notice that cosmological equivalences between $\fM_1$ and
$\fM_2$ differ\footnote{Indeed, a cosmological equivalence does {\em
not} identify the cosmological {\em flow} curves of the two models up
to reparameterization, but up to reparameterization combined with a
time-dependent rescaling within the tangent bundle fibers. The notion
of topological equivalence of dynamical systems is recalled in
Appendix \ref{app:tsequiv}.} from smooth {\em dynamical} equivalences,
which are defined by the requirement $f$ gives a smooth topological
equivalence between the cosmological flows of the models.

It is clear from the properties of ${\hat f}$ that cosmological models
and cosmological equivalence form a groupoid, whose isomorphism
classes we call {\em smooth cosmological equivalence classes}. When
$\fM_1$ and $\fM_2$ are isomorphic in this groupoid, we write
$\fM_1\sim\fM_2$ and say that the two models are {\em smoothly
equivalent}. This equivalence relation is weaker than cosmological
conjugation in the sense that $\fM_1\equiv\fM_2$ implies
$\fM_1\sim\fM_2$. Like cosmological conjugation, cosmological
equivalence depends only on the rescaled scalar triples; we say that
the equivalence is strict if $M_{01}=M_{02}$.

In Section \ref{sec:ren}, we will introduce other equivalence
relations, which arise from the study of UV and IR limits.

\subsection{Some remarks on the early and late time behavior of cosmological curves}
\label{subsec:limsets}

\noindent Consider a maximal cosmological curve
$\varphi:(a_-,a_+)\rightarrow \cM$, where $a_\pm\in \oR$ can be chosen
such that $a<0<b$ since the cosmological equation is
autonomous. Recall that the ordinary $\alpha$- and $\omega$- limit
sets of $\varphi$ are defined through \cite{Palis}:
\beqa
&&\Lim_\alpha\varphi\eqdef \{m\in \cM~|~\exists t_n\rightarrow a_+: \lim_{n\rightarrow \infty} \varphi(t_n)=m\}\subset \cM~~\nn\\
&&\Lim_\omega\varphi\eqdef \{m\in \cM~|~\exists t_n\rightarrow a_-: \lim_{n\rightarrow \infty} \varphi(t_n)=m\}\subset \cM~~.
\eeqa
Let us assume that $\cM$ is compact. Then for any $\omega$-limit point
$m$ of $\varphi$ and any sequence $t_n\in (a_-,a_+)$ with
$t_n\rightarrow a_+$ and $\varphi(t_n)\rightarrow m$, the tangent
vectors $\dot{\varphi}(t_n)$ are contained in a compact subset of
$T\cM$ since $E_\varphi(t_n)$ stays bounded by \eqref{EnergyEq}. Hence
there exists a subsequence $t'_n$ of $t_n$ such that
$\dot{\varphi}(t'_n)$ converges to a point $u\in T\cM$ with
$\pi(u)=m$. Since $u$ is a limit point of the integral curve
$\gamma=c(\varphi)$ of the vector field $S\in \cX(T\cM)$, we must have
$\cS(u)=0$ i.e. $u$ is a stationary point of the cosmological flow.
Hence $u$ belongs to the set $(\Crit \Phi)_0$ and $m=\pi(u)$ belongs
to $\Crit\Phi$. This shows that the ordinary $\omega$-limit points of
a cosmological curve are critical points of $\Phi$. Moreover, the set
$\Lim_\omega\varphi=\pi(\Lim_\omega c(\varphi))$ is nonempty, compact
and connected since the map $\pi$ is continuous and the set
$\Lim_\omega c(\varphi)$ has these properties by
\cite[Prop. 1.4]{Palis}.

When $\cM$ is not compact, both ordinary limit sets of a maximal
cosmological curve $\varphi$ may be empty. Indeed, the orbit of
$\varphi$ need not be contained in any compact subset of $\cM$, which
means that there there may exist a sequence of cosmological times
$t_n\in (a_-,a_+)$ such that $\varphi(t_n)$ approaches a Freudenthal
end of $\cM$ when $t_n\rightarrow a_-$ or $t_n\rightarrow a_+$. To
account for this, we consider the end compactification $\hcM$ of $\cM$
(which is a compact Hausdorff space containing $\cM$ as a dense
subset) and view $\varphi$ as the continuous curve $\varphi_e\eqdef
\iota\circ \varphi$ in the topological space $\hcM$, where
$\iota:\cM\hookrightarrow \hcM$ is the inclusion. By definition, the
{\em extended limit sets} of $\varphi$ are the $\alpha$- and
$\omega$-limit sets of $\varphi_e$ in the topological space $\hcM$:
\beqa
&&\Lim_\alpha^e\varphi\eqdef \Lim_\alpha\varphi_{e}=\{{\hat m}\in \hcM~|~\exists t_n\rightarrow a: \lim_{n\rightarrow \infty} \varphi(t_n)={\hat m}\}\subset \hcM~~\nn\\
&&\Lim_\omega^e\varphi\eqdef \Lim_\omega\varphi_{e}=\{{\hat m}\in \hcM~|~\exists t_n\rightarrow b: \lim_{n\rightarrow \infty} \varphi(t_n)={\hat m}\}\subset \hcM~~.
\eeqa
Let $\Ends(\cM)\eqdef \hcM\setminus \cM$ be the space of ends of $\cM$
(which is a totally disconnected topological space). We have:
\be
\Lim_\alpha^e\varphi=\Lim_\alpha\varphi \sqcup \Lambda_\alpha\varphi~~,~~\Lim_\omega^e\varphi=\Lim_\omega\varphi \sqcup \Lambda_\omega\varphi~~,
\ee
where:
\be
\Lambda_\alpha\varphi\eqdef \Lim_\alpha\varphi_e \cap \Ends(\cM)~~,~~\Lambda_\omega\varphi\eqdef \Lim_\omega\varphi_e \cap
\Ends(\cM)
\ee
are the sets of $\alpha$- and $\omega$- {\em limit ends} of
$\varphi$. With these definitions, $\varphi$ is contained in some
compact subset of $\cM$ iff
$\Lambda_\alpha\varphi=\Lambda_\omega\varphi=\emptyset$.  When the end
compactification $\hcM$ is sequentially compact, an argument similar
to that of \cite[Prop. 1.4]{Palis} shows that $\Lim_\alpha^e\varphi$
and $\Lim_\omega^e\varphi$ are nonempty, compact and connected subsets
of $\hcM$. The fact that a Freudenthal end of $\cM$ can act as a limit
point for a cosmological curve is a fundamental feature of models with
non-compact target space.

In general, the continuous map $\varphi_e:(a_-,a_+)\rightarrow \cM$
need not have limits for $t\rightarrow a_-$ or $t\rightarrow a_+$ in
$\hcM$. When the corresponding limit exists, we call it the $\alpha$-
or $\omega$- extended limit of $\varphi$ and denote it by:
\be
\alim\varphi\eqdef \lim_{t\rightarrow a_-}\varphi_e(t)\in \hcM~~\mathrm{respectively}~~\olim\varphi\eqdef \lim_{t\rightarrow a_+}\varphi_e(t)\in \hcM~~.
\ee

\section{Scaling limits and approximations}
\label{sec:ScalingLimits}

In this section, we study the UV and IR scaling limits of classical
multifield cosmological models with rescaled Planck mass $M_0$ and
scalar triple $(\cM,\cG,\Phi)$, showing that cosmological curves are
approximated in these limits by a reparameterization of the geodesic
flow of $(\cM,\cG)$ and by the gradient flow of $(\cM,\cG,V)$, where
$V=M_0\sqrt{2\Phi}$ is the {\em classical effective potential} of the
model. We also study the consistency conditions for these
approximations, showing that they differ from commonly used
approximations in cosmology, such as the slow roll approximation and
its slow roll-slow turn variant as well as from the gradient flow
approximation of \cite{genalpha}.

\subsection{The UV and IR limits}

\noindent By definition, the behavior of the scale transform
$\varphi_\epsilon$ at time $t$ recovers the behavior of $\varphi$ at
time $t/\epsilon$. Notice that $\varphi$ can be recovered from its
scale transform by setting $\epsilon=1$:
\be
\varphi=\varphi_1~~.
\ee
A time interval $\Delta t$ is rescaled to $\frac{1}{\epsilon}\Delta t$
under the scale transform by $\epsilon$. Hence cosmological time
intervals are compressed for large $\epsilon$ and expanded for small
$\epsilon$.

Intuitively, the limits of small and large $\epsilon$ capture the low
and high frequency components of $\varphi$ since the large $\epsilon$
limit sharpens the oscillations of $\varphi_\epsilon$ while the small
$\epsilon$ limit stretches them out. To make this quantitative,
consider for simplicity the case $\cM=\R^d$ and a cosmological curve
$\varphi$ which is defined on $\R$. Then
$\varphi(t)=(\varphi^1(t),\ldots, \varphi^d(t))$, where
$\varphi^i:\R\rightarrow \R$ (for $i=1,\ldots,d$) are the projections
of $\varphi$ on the Cartesian coordinate axes. Since in this case
$\varphi$ is a vector-valued function, it has a Fourier decomposition:
\be
\varphi(t)=\frac{1}{\sqrt{2\pi}} \int_\R \dd \omega e^{\i\omega t} \hat{\varphi}(\omega)~~,
\ee
where ${\hat \varphi}:\R\rightarrow \R^d$ is the Fourier transform of $\varphi$:
\be
{\hat \varphi}(\omega)=\frac{1}{\sqrt{2\pi}} \int_\R \dd t e^{-\i\omega t} \varphi(t)~~.
\ee
We have: 
\be
\varphi_\epsilon(t)=\varphi(t/\epsilon)=\frac{1}{\sqrt{2\pi}} \int_\R \dd \omega e^{\i\omega t/\epsilon} \hat{\varphi}(\omega)=
\frac{\epsilon}{\sqrt{2\pi}} \int_\R \dd \omega e^{\i\omega t} \hat{\varphi}(\epsilon \omega)~~,
\ee
where we performed the change of variables $\omega\rightarrow \epsilon
\omega$. Hence the Fourier components of $\varphi_\epsilon$ are:
\be
\widehat{\varphi_\epsilon}(\omega)=\epsilon ~\hat{\varphi}(\epsilon \omega)=\epsilon ~({\hat \varphi})_{1/\epsilon}(\omega)
\ee
and its Fourier transform is given by
$\widehat{\varphi_\epsilon}=\epsilon ~({\hat
\varphi})_{1/\epsilon}$. When $\epsilon$ is very large or very small,
the Fourier component of $\varphi_\epsilon$ at frequency $\omega$
reproduces up to a rescaling of the amplitude the Fourier component of
$\varphi$ at the much higher (respectively much lower) frequency
$\epsilon \omega$. Hence the {\em UV limit} $\epsilon\rightarrow
\infty$ corresponds to the high frequency behavior of $\varphi$ while
the {\em IR limit} $\epsilon\rightarrow 0$ corresponds to its low
frequency behavior; these {\em scaling limits} correspond to the
ultraviolet and infrared limits of the oscillation spectrum of
$\varphi$. Notice that the kinetic energy of a plane wave component
${\hat \varphi}(\omega)e^{\i \omega t}$ of $\varphi$ equals
$E^k_\varphi(\omega)\eqdef \frac{1}{2}\omega^2||{\hat
\varphi}(\omega)||_\cG^2$ and we have:
\be
E^k_{\varphi_\epsilon}(\omega)=\frac{1}{2}\omega^2 ||\hat{\varphi_\epsilon}(\omega)||_\cG^2=\frac{\epsilon^2}{2}\omega^2 ||{\hat \varphi}(\epsilon \omega)||_\cG^2=\epsilon^2 E^k_\varphi(\epsilon \omega)~~.
\ee

\begin{remark} Recall that the generalized curvatures of a curve in a
Riemannian manifold do not depend on parameterization and hence are
properties of the image of that curve rather than of the curve
itself. Hence the generalized curvatures of $\varphi_\epsilon$ and
$\varphi$ at every point on the orbit
$\im(\varphi_\epsilon)=\im(\varphi)$ coincide. Since scale transforms
do not change the generalized curvatures of $\varphi$, the limits of
small and large $\epsilon$ have no connection with slow or fast turns
of the cosmological orbit, which form the focus of other approximation
schemes in cosmology.
\end{remark}

\subsection{The rescaled cosmological equation}

\noindent For any $n\in \N_{>0}$, we have:
\ben
\label{derc}
\frac{\dd^n \varphi_\epsilon}{\dd t^n}=\frac{1}{\epsilon^n} \left(\frac{\dd^n \varphi}{\dd t^n}\right)_{\!\epsilon}~~
\een
and:
\ben
\label{nablac}
\nabla_t^n \frac{\dd \varphi_\epsilon}{\dd  t}=\frac{1}{\epsilon^{n+1}} \left(\nabla_t^n\frac{\dd \varphi}{\dd t}\right)_{\!\epsilon}~~.
\een
These relations imply that a curve $\varphi:I\rightarrow \cM$
satisfies the cosmological equation \eqref{eomsingle} iff its scale
transform $\varphi_\epsilon$ satisfies the {\em $\epsilon$-rescaled
cosmological equation}:
\ben
\label{eomdef}
\epsilon^2 \nabla_t \frac{\dd \varphi_\epsilon(t)}{\dd t}\!+\!\frac{\epsilon}{M_0} 
\left[\epsilon^2\Bvert\frac{\dd \varphi_\epsilon(t)}{\dd
t}\Bvert_\cG^2+2\Phi(\varphi_\epsilon(t))\right]^{1/2}\!\!\frac{\dd
\varphi_\epsilon(t)}{\dd t}+ (\grad_{\cG} \Phi)(\varphi_\epsilon(t))=0~~.
\een
Upon dividing by $\epsilon^2$, the latter is equivalent with:
\ben
\label{eomdef0}
\nabla_t \frac{\dd \varphi_\epsilon(t)}{\dd t}+\frac{1}{M_0}
\left[\Bvert\frac{\dd \varphi_\epsilon(t)}{\dd
t}\Bvert_\cG^2+2\Phi_\epsilon(\varphi_\epsilon(t))\right]^{1/2}\frac{\dd
\varphi_\epsilon(t)}{\dd t}+ (\grad_{\cG}
\Phi_\epsilon)(\varphi_\epsilon(t))=0~~,
\een
where $\Phi_\epsilon\eqdef \Phi/\epsilon^2$. Hence $\varphi$ satisfies
the cosmological equation of the scalar triple $(\cM,\cG,\Phi)$ iff
$\varphi_\epsilon$ satisfies the cosmological equation of the scalar
triple $(\cM,\cG,\Phi_\epsilon)$. This also follows from the
fact that \eqref{sim} is a symmetry of the cosmological
equation. In particular, the UV limit $\epsilon\rightarrow \infty$
amounts to taking the overall scale of $\Phi$ to zero, while the IR
limit $\epsilon\rightarrow 0$ amounts to taking the overall scale of
$\Phi$ to infinity.

The semispray defined by \eqref{eomdef0} is:
\be
S_\epsilon=\mS-\cH_\epsilon C-\frac{1}{\epsilon^2}(\grad_\cG\Phi)^v~~,
\ee
where the function $\cH_\epsilon:T\cM\rightarrow \R_{>0}$ is defined through:
\be
\cH_\epsilon(u)=\frac{1}{M_0}\left[||u||_\cG^2+\frac{2\Phi(\pi(u))}{\epsilon^2}\right]^{1/2}~~.
\ee
In the UV limit, $S_\epsilon$ reduces to the $N_0$-modification
of the geodesic spray of the rescaled scalar manifold $(\cM,\cG_0)$:
\ben
\label{SUV}
S_{\UV}\eqdef \mS-N_0 C~~,
\een
where $N_0:T\cM\rightarrow \R_{\geq 0}$ is the norm function defined by
$\cG_0\eqdef \frac{1}{M_0^2}\cG$ on $T\cM$:
\be
N_0(u)\eqdef ||u||_{\cG_0}=\frac{1}{M_0}||u||_\cG~~.
\ee
Notice that the geodesic spray $\mS$ of $(\cM,\cG)$ coincides with
that of $(\cM,\cG_0)$ since it is invariant under constant rescalings
of $\cG$.  Also notice that $N_0$ and the spray \eqref{SUV} are
continuous on $T\cM$ but smooth only on the slit tangent bundle
$\dot{T}\cM$. In the IR limit $\epsilon\rightarrow 0$, $S_\epsilon$ is
approximated by the vertical vector field:
\be
S^{\IR}_\epsilon=-\frac{1}{\epsilon}\frac{1}{M_0}\sqrt{2\Phi^v}C-\frac{1}{\epsilon^2}(\grad_\cG \Phi)^v=-\frac{1}{\epsilon^2 M_0}\sqrt{2\Phi^v} S_{\IR,\epsilon}~~,
\ee
where: 
\ben
\label{SIReps}
S_{\IR,\epsilon}\eqdef \epsilon C-(\grad_\cG V)^v \in \cX(T\cM)
\een
and we defined the {\em classical effective scalar potential} $V$
through:
\ben
\label{Vdef}
V\eqdef M_0\sqrt{2\Phi}~~.
\een
Notice that $J(S^{\IR}_\epsilon)=0$, hence $S^{\IR}_\epsilon$ is no longer a
semispray. This signals that the IR limit is degenerate. Here $\Phi^v$
is the vertical lift of $\Phi$, which is defined through:
\be
\Phi^v(u)=\Phi(\pi(u))~~\forall u\in T\cM~~.
\ee

\subsection{The UV and IR expansions}

\noindent The rescaled cosmological equation can be used to construct
{\em UV and IR expansions} of cosmological curves and of the
cosmological flow. When $\epsilon$ is large, one can seek solutions
$\varphi_\epsilon$ to \eqref{eomdef0} which are expanded in positive
powers of $\frac{1}{\epsilon^2}$; then $\varphi(t)\eqdef
\varphi_\epsilon(\epsilon t)$ is a solution of the cosmological
equation \eqref{eomsingle} which is expanded in non-negative powers of
$\Phi$. This amounts to treating $\Phi$ as small, Taylor expanding the
reduced Hubble function \eqref{cH} as:
{\footnotesize \ben
\label{cHUV}
\cH(u)=||u||_\cG\left[1+\frac{2\Phi(\pi(u))}{||u||_\cG^2}\right]^{1/2}=||u||_{\cG}\left[1+\frac{\Phi(\pi(u))}{||u||_\cG^2}-
  \frac{1}{2}\left(\frac{\Phi(\pi(u))}{||u||_\cG^2}\right)^2+\ldots\right]
\een}
\!\!and seeking solutions $\varphi$ of the cosmological equation
expanded in powers of $\Phi$. This produces the {\em UV expansion} of
cosmological curves and a corresponding expansion of the cosmological
flow. Substituting \eqref{cHUV} into \eqref{Q} gives an expansion
of the cosmological semispray in powers of $\Phi$.

When $\epsilon$ is small, one can seek solutions
$\varphi_\epsilon$ of \eqref{eomdef} which are expanded in powers of
$\epsilon$; then $\varphi(t)\eqdef \varphi_\epsilon(\epsilon t)$ is a
solution of \eqref{eomsingle} expanded in powers of
$\frac{1}{\sqrt{2\Phi}}$. This amounts to treating $\Phi$ as large
and expanding the reduced Hubble function as:
{\scriptsize \ben
\label{cHIR}
\cH(u)=\sqrt{2\Phi(\pi(u))}\left[1+\left(\frac{||u||_\cG}{\sqrt{2\Phi(\pi(u))}}\right)^2\right]^{1/2}=
\sqrt{2\Phi(\pi(u))}\left[1+\frac{1}{2}\left(\frac{||u||_\cG}{\sqrt{2\Phi(\pi(u))}}\right)^2-
  \frac{1}{8}\left(\frac{||u||_\cG}{\sqrt{2\Phi(\pi(u))}}\right)^4+\ldots\right]
\een}
\!\!Substituting \eqref{cHIR} into \eqref{Q} produces an expansion of
the cosmological semispray in powers of
$\frac{1}{\sqrt{2\Phi}}$. Substituting \eqref{cHIR} into the
cosmological equation and dividing both sides by
$\sqrt{2\Phi(\varphi(t))}$ brings \eqref{eomsingle} to the form:
{\scriptsize \ben
\label{IRPhiExp}
\frac{1}{\sqrt{2\Phi(\varphi(t))}}\nabla_t
\dot{\varphi}(t)+\frac{1}{M_0}
\left[1+\left(\frac{||\dot{\varphi}(t)||_\cG^2}{\sqrt{2\Phi(\varphi(t))}}\right)^2-\frac{1}{8}\left(\frac{||\dot{\varphi}(t)||_\cG}{\sqrt{2\Phi(\varphi(t))}}\right)^4+\ldots \right]^{1/2}\dot{\varphi}(t)+
(\grad_{\cG} \sqrt{2 \Phi})(\varphi(t))=0
\een}
\!\!and one can seek solutions expanded in powers of
$\frac{1}{\sqrt{2\Phi}}$. This produces an asymptotic expansion of
cosmological curves called the {\em IR expansion} and a corresponding
expansion of the cosmological flow. Notice that the small expansion
parameter multiplies the highest order term in \eqref{IRPhiExp}. As a
consequence, the first order approximant $\varphi_{\IR}$ is obtained
by solving a {\em first order} ODE, which means that the corresponding
approximant $\Pi_{\IR}$ to the cosmological flow $\Pi$ is not defined
on the whole tangent bundle $T\cM$ but on a closed submanifold of the
latter. It is only higher order approximants of the cosmological flow
which are defined on the entire tangent bundle of $\cM$. In the next
subsections, we discuss the first order UV and IR approximants of
cosmological curves and of the cosmological flow.

\begin{remark}
The UV and IR expansions admit a geometric description obtained by
writing $\varphi_\epsilon$ as:
\ben
\label{UVexp}
\varphi_\epsilon(t)=\exp_{\varphi_{\UV}(t/\epsilon)}\left(\sum_{n\geq 1} \frac{1}{\epsilon^{2n}} v_n(t)\right)~~\mathrm{with}~~v_n(t)\in T_{\varphi_{\UV}(t/\epsilon)}\cM
\een
respectively:
\ben
\label{IRexp}
\varphi_\epsilon(t)=\exp_{\varphi_{\IR}(t/\epsilon)}\left(\sum_{n\geq 1} \epsilon^{n} w_n(t)\right)~~\mathrm{with}~~w_n(t)\in T_{\varphi_{\IR}(t/\epsilon)}\cM~~,
\een
where $\varphi_{\UV}$ and $\varphi_{\IR}$ are the first order UV and IR
approximants of $\varphi$ discussed below and
$\exp_m:T_m\cM\rightarrow \cM$ denotes the exponential map of the
Riemannian manifold $(\cM,\cG_0)$ at the point $m\in \cM$. The
vector-valued functions $v_n(t)$ and $w_n(t)$ can be determined by
substituting \eqref{UVexp} and \eqref{IRexp} into the rescaled
cosmological equation and expanding respectively in powers of
$1/\epsilon$ or $\epsilon$; we will explain the technical details of
this procedure in a different publication. When $\epsilon$ is large
(respectively small), the arguments of the exponential maps in the
expressions above tend to zero and hence the right hand sides reduce
to the UV and IR approximants.
\end{remark}

\subsection{The first order UV and IR approximants}

\noindent Let $\varphi:I=(a_-,a_+)\rightarrow \cM$ be a non-constant
maximal cosmological curve with $0\in (a_-,a_+)$, where $a_-\in \R\cup
\{-\infty\}$ and $a_+\in \R\cup \{+\infty\}$.

\paragraph{The strict UV limit and first order UV approximant.}

Let us fix $t\in I_\epsilon=(\epsilon a_-, \epsilon a_+)$ and set
$\lambda\eqdef \sign(t)\in \{-1,0,1\}$. In the strict UV limit
$\epsilon\rightarrow \infty$, we have $\lim_{\epsilon\rightarrow
\infty}I_\epsilon=\R$ and:
\be
\lim_{\epsilon\rightarrow \infty} \varphi_\epsilon(t)=\lim_{\epsilon\rightarrow \infty}\varphi(t/\epsilon)=\varphi(0)~~,~~
\lim_{\epsilon\rightarrow \infty} \frac{\dd\varphi_\epsilon(t)}{\dd t}=\lim_{\epsilon\rightarrow \infty}\frac{1}{\epsilon}\dot{\varphi}(t/\epsilon)=0~~.
\ee
Hence the limiting curve $\varphi_\infty$ is the constant curve
defined on $I_\infty=\R$ whose image is the point $\varphi(0)$.

When $\epsilon$ is large but finite, the rescaled cosmological
equation \eqref{eomdef0} becomes:
\be
\nabla_t \frac{\dd \varphi_\epsilon(t)}{\dd t}+\frac{1}{M_0} \Bvert\frac{\dd \varphi_\epsilon(t)}{\dd t}\Bvert_\cG\frac{\dd \varphi_\epsilon(t)}{\dd t}=\O(1/\epsilon^2)
\ee
and we have:
\be
\varphi_\epsilon(t)=\varphi(t/\epsilon)=\varphi(0)+\O(1/\epsilon)~~,~~\frac{\dd\varphi_\epsilon(t)}{\dd t}=\frac{1}{\epsilon}\dot{\varphi}(t/\epsilon)=\frac{1}{\epsilon}\dot{\varphi}(0)+\O(1/\epsilon^2)
\ee
for $|t|\ll \epsilon$. Thus $\varphi_\epsilon$ is approximated up to
first order in $1/\epsilon$ by a curve $t\rightarrow
\varphi_{\UV,\epsilon}(t)$ which satisfies:
\ben
\label{eomUVeps}
\nabla_t \frac{\dd \varphi_{\UV,\epsilon}(t)}{\dd t}+\frac{1}{M_0}\Bvert\frac{\dd \varphi_{\UV,\epsilon}(t)}{\dd t}\Bvert_\cG\frac{\dd \varphi_{\UV,\epsilon}(t)}{\dd t}=0
\een
and:
\be
\varphi_{\UV,\epsilon}(0)=\varphi(0)~~,~~\frac{\dd \varphi_{\UV,\epsilon}}{\dd t}\Big{\vert}_{t=0}=\frac{1}{\epsilon}\dot{\varphi}(0)~~.
\ee
Since \eqref{eomUVeps} is scale-invariant, the curve
$\varphi_{\UV}:\R\rightarrow \cM$ defined through $\varphi_{\UV}(t)\eqdef
\varphi_{\UV,\epsilon}(\epsilon t)$ satisfies the same equation:
\ben
\label{eomUV}
\nabla_t \frac{\dd \varphi_{\UV}(t)}{\dd t}+\frac{1}{M_0}\Bvert\frac{\dd \varphi_{\UV}(t)}{\dd t}\Bvert_\cG\frac{\dd \varphi_{\UV}(t)}{\dd t}=0
\een
and the conditions:
\ben
\label{UVin}
\varphi_{\UV}(0)=\varphi(0)~~,~~\frac{\dd \varphi_{\UV}}{\dd t}\Big{\vert}_{t=0}=\dot{\varphi}(0)~~.
\een
It follows that $\varphi(t)=\varphi_\epsilon(\epsilon t)$ is
approximated by $\varphi_{\UV}(t)$ for $|t|\ll 1$. Notice that
\eqref{SUV} is the spray defined by equation \eqref{eomUV}.
The complete lift of maximal solutions to \eqref{eomUV} defines the flow $\Pi_{\UV}$ of the
$N_0$-modification $S_{\UV}$ of the geodesic spray $\mS$. 

An increasing reparameterization $\sigma\rightarrow \sigma(t)$ shows
that \eqref{eomUV} is equivalent with the geodesic equation:
\ben
\label{geodesiceq}
\nabla_{\sigma} \frac{\dd \varphi_{\UV}}{\dd \sigma}=0~~,
\een
of $(\cM,\cG)$, where the affine parameter $\sigma$ is obtained by
solving the second order ODE:
\ben
\label{sigmaeq}
\ddot{\sigma}(t)+\frac{1}{M_0} ||{\varphi}'_{\UV}(\sigma)||_\cG\dot{\sigma}(t)^2=0~~.
\een
Here the prime indicates derivation with respect to $\sigma$. Since
the Riemannian manifold $(\cM,\cG)$ is complete, its geodesic flow is
complete by the Hopf-Rinow theorem. As a consequence, the maximal
geodesic $\sigma\rightarrow \varphi_{\UV}(\sigma)$ is defined on the
entire real axis. Equation \eqref{sigmaeq} can be written as:
\be
\frac{\dd }{\dd t}\left(\frac{1}{\dot{\sigma}}\right)=\frac{1}{M_0}||\varphi'_{\UV}(\sigma)||_\cG~~.
\ee
Writing the left hand side as $\dot{\sigma}\frac{\dd}{\dd
\sigma}\left(\frac{\dd t}{\dd \sigma}\right)=\dot{\sigma}\frac{\dd^2
t}{\dd \sigma^2}$, this is equivalent with:
\be
\frac{\dd^2 t}{\dd \sigma^2}=\frac{1}{M_0}||\varphi'_{\UV}(\sigma)||_\cG\frac{\dd t}{\dd \sigma}\Longleftrightarrow \frac{\dd u}{\dd \sigma}=\frac{1}{M_0}||\varphi'_{\UV}(\sigma)||_\cG u~~,
\ee
where we set $u\eqdef \frac{\dd t}{\dd \sigma}$. Thus
$u(\sigma)=Be^{\frac{1}{M_0}\int^{\sigma}_0 \dd \sigma'
||\varphi'_{\UV}(\sigma')||_\cG}$ and:
\ben
\label{tsigma}
t(\sigma)=A+B\int_0^{\sigma} \dd \sigma' e^{\frac{1}{M_0}\int_0^{\sigma'} \dd \sigma'' ||\varphi'_{\UV}(\sigma'')||_\cG}~~,
\een
where $A$ and $B$ are integration constants. Since $\sigma$ is an
increasing parameter for $\varphi_{\UV}$ (i.e. we require
$\dot{\sigma}>0$ and hence $\frac{\dd t}{\dd \sigma}>0$), we must take
$B>0$. Notice that \eqref{sigmaeq} is invariant under affine
transformations of $\sigma$, which correspond to affine
transformations of $t$, i.e. to changing the constants $A$ and $B$ in
\eqref{tsigma}. Since we require $\dot{\sigma}>0$, only affine
transformations of the form $\sigma\rightarrow \alpha+\beta\sigma$
with $\beta>0$ are allowed. Using such transformations, we can set
$\sigma\vert_{t=0}=0$, which amounts to taking $A=0$ in
\eqref{tsigma}. Suppose that $\dot{\varphi}(0)\neq 0$, so that the
maximal geodesic $\varphi_{\UV}$ is not constant. Then the remaining
freedom of changing $B$ allows us to take $\sigma$ to be the proper
length parameter:
\be
s(t)=\int_{0}^t\dd t' ||\dot{\varphi}_{\UV}(t')||_\cG~~,
\ee
in which case equation \eqref{tsigma} (with $A=0$) gives:
\ben
\label{ts0}
t(s)=B\int_0^{s} \dd s' e^{\frac{s'}{M_0}}=B_0 \left[e^{\frac{s}{M_0}}-1\right]~~,
\een
where $B_0\eqdef M_0B$ and we used the relation $\Bvert\frac{\dd
\varphi_{\UV}(s)}{\dd s}\Bvert_\cG=1$. Moreover, \eqref{geodesiceq} becomes:
\ben
\label{geodesiceqs}
\nabla_{s} \frac{\dd \varphi_{\UV}}{\dd s}=0~~
\een
while conditions \eqref{UVin} require
$B_0=\frac{M_0}{||\dot{\varphi}(0)||_\cG}$ and:
\be
\varphi_{\UV}\vert_{s=0}=\varphi(0)~~,~~\frac{\dd\varphi_{\UV}}{\dd s}\bvert_{s=0}=\frac{\dot{\varphi}(0)}{||\dot{\varphi}(0)||_\cG}~~.
\ee
To arrive at the last relations, we noticed that
$\frac{\varphi_{\UV}}{\dd
s}\bvert_{s=0}=\dot{\varphi}_{\UV}(0)t'(s)\vert_{s=0}=\frac{B_0}{M_0}\dot{\varphi}(0)$,
where we used the second of conditions \eqref{UVin}. Since
$||\frac{\dd\varphi_{\UV}}{\dd s}||_\cG=1$, this requires
$B_0=\frac{M_0}{||\dot{\varphi}(0)||_\cG}$. In particular, \eqref{ts0}
reads:
\ben
\label{ts}
t(s)=\frac{M_0}{||\dot{\varphi}(0)||_\cG} \left[e^{\frac{s}{M_0}}-1\right]~~.
\een
Notice that \eqref{ts} requires $t(s)>
-\frac{M_0}{||\dot{\varphi}(0)||_\cG}$, so the maximal curve
$t\rightarrow \varphi_{\UV}(t)$ is defined on the interval
$\left(-\frac{M_0}{||\dot{\varphi}(0)||_\cG},+\infty\right)$.  In
particular, the first order UV approximation must break down if $t\in I$
is smaller than $-\frac{M_0}{||\dot{\varphi}(0)||_\cG}$. In practice,
the approximation becomes inaccurate for negative times which are
smaller in absolute value since it is only appropriate for small
$\frac{|s|}{M_0}$. We have $t(s)\approx
\frac{1}{||\dot{\varphi}(0)||_\cG}s$ when $\frac{|s|}{M_0}\ll 1$.

If $\dot{\varphi}(0)=0$, then $\varphi_{\UV}$ is the constant geodesic
at $\varphi(0)$ since it satisfies
$\varphi_{\UV}\vert_{\sigma=0}=\varphi(0)$ and
$\frac{\varphi_{\UV}}{\dd \sigma}\Big{\vert}_{\sigma=0}=0$. In this
case, $\varphi_\UV$ satisfies \eqref{eomUV} with respect to any
parameter $t$. Notice that relation \eqref{tsigma} for $A=0$ formally
gives $t=B\sigma$ with $B>0$.

For any non-zero vector $u$ of $T\cM$ such that $m\eqdef \pi(u)\in
\cM_0=\cM\setminus \Crit V$, there exists a unique maximal
cosmological curve $\varphi_u:I_u\rightarrow \cM$ (defined on an open
interval $I_u$ which contains zero) which satisfies $\varphi_u(0)=m$
and $\dot{\varphi}_u(0)=u$. Setting $n\eqdef \frac{u}{||u||_\cG}$,
this cosmological curve is approximated in the IR limit by the
reparameterization \eqref{ts} of the unique maximal normalized
geodesic $\psi_n:\R\rightarrow \cM$ of $(\cM,\cG)$ which satisfies
$\psi_n\vert_{s=0}=m$ and $\frac{\dd \psi_n}{\dd
s}\Big{\vert}_{s=0}=n$. Hence the cosmological time along $\varphi_u$
is recovered in this approximation as:
\be
t(s)=\frac{M_0}{||u||_\cG}\left[e^{\frac{s}{M_0}}-1\right]~~.
\ee
Under the parameter homothety \eqref{parhom}, the proper length
parameter $s$ of $\psi_u$ changes as $s\rightarrow \lambda^{1/2} s$
while the norm of $u$ changes as $||u||_\cG\rightarrow
\lambda^{1/2}||v||_\cG$. Thus we can use that similarity of the
cosmological equation to set $M_0=1$; this absorbs $M_0$ into $\cG$
and changes the normalization of the geodesic flow without changing
the cosmological time $t$. Hence the first order of the UV
approximation of the cosmological flow of the model parameterized by
$(M_0,\cM,\cG,\Phi)$ is determined by the normalized geodesic flow of
the rescaled scalar field metric $\cG_0\eqdef \frac{1}{M_0^2}\cG$,
which has proper length parameter $s_0=\frac{1}{M_0}s$ and norm
$||~||_{\cG_0}=\frac{1}{M_0}||~||_{\cG}$. Summarizing the discussion
above gives the following:

\begin{prop}
Consider the cosmological model parameterized by $(M_0,\cM,\cG,\Phi)$
and set $\cG_0\eqdef \frac{1}{M_0^2}\cG$. Let:
\be
\o(\cM)=\{0_m\vert m\in \cM\}\subset T\cM
\ee
be the image of the zero section of $T\cM$ and:
\be
{\dot T}\cM\eqdef T\cM\setminus \o(\cM)
\ee
be the slit tangent bundle of $\cM$.  For each $u\in T\cM$, let
$\varphi_u:I\rightarrow \cM$ be the maximal cosmological curve of this
model which satisfies:
\be
\varphi_u(0)=\pi(u)~~\mathrm{and}~~\dot{\varphi}_u(0)=u~~.
\ee
When $u\in {\dot T}\cM$, this curve is approximated in
the UV limit (for $t\in I$ with $t> -\frac{1}{||u||_{\cG_0}}$) by a
reparameterization of the maximal normalized geodesic
$\psi_n:\R\rightarrow \cM$ of the rescaled scalar manifold
$(\cM,\cG_0)$ which satisfies:
\be
\psi_n\vert_{s_0=0}=\pi(u)~~,~~\frac{\dd\psi_n}{\dd s_0}\Big{\vert}_{s_0=0}=n\eqdef \frac{u}{~~||u||_{\cG_0}}~~,
\ee
where the cosmological time $t$ is related to the proper length
parameter $s_0$ of $\psi_n$ through:
\be
t(s_0)=\frac{e^{s_0}-1}{~~||u||_{\cG_0}}~~.
\ee
When $u\in \o(\cM)$, the cosmological curve $\varphi_u$ is
approximated in the UV limit by the constant geodesic of $(\cM,\cG_0)$
at $\pi(u)$. Accordingly, the cosmological flow $\Pi:\cD\rightarrow
T\cM$ of the model parameterized by $(M_0,\cM,\cG,\Phi)$ is
approximated in the UV limit by the flow $\Pi_{\cG_0}:\cD_{\cG_0}
\rightarrow T\cM$ of the $N_0$-modification \eqref{SUV} of the
geodesic spray of $(\cM,\cG_0)$, which has maximal domain of definition:
\be
\cD_{\cG_0}=\Big\{(t,u)\in \R\times T\cM \big{\vert} t> -\frac{1}{||u||_{\cG_0}}\Big\}~~.
\ee
\end{prop}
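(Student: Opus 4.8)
The plan is to assemble the facts already established in the discussion preceding the statement, making explicit the reduction to the rescaled data $(\cG_0,\Phi_0)$ and supplying the completeness argument that pins down the maximal domain $\cD_{\cG_0}$. First I would apply the parameter homothety \eqref{parhom} with $\lambda=M_0^{-2}$, which is a cosmological conjugation with $f=\id_{T\cM}$ relating the model $(M_0,\cM,\cG,\Phi)$ to the model $(1,\cM,\cG_0,\Phi_0)$ and which preserves cosmological curves and cosmological time. Hence it suffices to treat the case $M_0=1$, in which $N_0=||\cdot||_{\cG_0}$, the reduced Hubble function reads $\cH(u)=\sqrt{||u||_{\cG_0}^2+2\Phi_0(\pi(u))}$, and the geodesic spray of $(\cM,\cG_0)$ coincides with $\mS$.

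Next I would recall, for $u\in{\dot T}\cM$, the derivation of \eqref{eomUV}--\eqref{ts}: a rescaling of time turns the first order UV approximant $\varphi_{\UV}$ of $\varphi_u$ into a solution of $\nabla_t\dot{\varphi}_{\UV}+||\dot{\varphi}_{\UV}||_{\cG_0}\dot{\varphi}_{\UV}=0$ with $\varphi_{\UV}(0)=\pi(u)$ and $\dot{\varphi}_{\UV}(0)=u$; the increasing reparameterization $t\mapsto s_0$ governed by \eqref{sigmaeq} converts this into the geodesic equation of $(\cM,\cG_0)$; choosing $s_0$ to be the $\cG_0$-arclength (legitimate since $u\neq 0$) normalizes this geodesic to the unique $\psi_n$ with $\psi_n(0)=\pi(u)$ and $\psi_n'(0)=n=u/||u||_{\cG_0}$, uniqueness being that of the geodesic initial value problem; and integrating \eqref{tsigma} with these initial data fixes the constants, yielding $t(s_0)=(e^{s_0}-1)/||u||_{\cG_0}$ together with the matching initial velocity $\psi_n'(0)\,||u||_{\cG_0}=u$. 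For $u\in\o(\cM)$ the conditions $\varphi_{\UV}(0)=\pi(u)$, $\dot{\varphi}_{\UV}(0)=0$ force $\varphi_{\UV}$ to be the constant geodesic of $(\cM,\cG_0)$ at $\pi(u)$.

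I would then pin down the maximal domain and identify the limiting flow. Since $\cG_0$ differs from $\cG$ by a positive constant, $(\cM,\cG_0)$ is complete, so by the Hopf--Rinow theorem its geodesic flow is complete and $\psi_n$ is defined on all of $\R$. For $u\in{\dot T}\cM$ the map $s_0\mapsto t(s_0)=(e^{s_0}-1)/||u||_{\cG_0}$ is a strictly increasing bijection of $\R$ onto $(-1/||u||_{\cG_0},+\infty)$, so $\varphi_{\UV}$ is maximally defined precisely on this interval; moreover $||\dot{\varphi}_{\UV}(t)||_{\cG_0}=||u||_{\cG_0}/(1+||u||_{\cG_0}t)$ blows up as $t\downarrow -1/||u||_{\cG_0}$, so its canonical lift leaves every compact subset of $T\cM$ and the Escape Lemma prohibits any extension. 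For $u\in\o(\cM)$ the maximal interval is all of $\R$, consistent with the displayed formula under the convention $1/0=+\infty$; collecting over all $u\in T\cM$ gives $\cD_{\cG_0}=\{(t,u)\in\R\times T\cM\mid t>-1/||u||_{\cG_0}\}$. Finally, the integral curves of the semispray \eqref{SUV} are, by the semispray formalism, the canonical lifts of the solutions of \eqref{eomUV}, which the previous step identifies as exactly the first order UV approximants $\varphi_{\UV}$ of the curves $\varphi_u$ with the domains just found; hence $\Pi_{\cG_0}$ is the flow of $S_{\UV}$, is the first order UV approximant of $\Pi$, and has maximal domain $\cD_{\cG_0}$. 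This is consistent with the fact that $S_\epsilon=\mS-\cH_\epsilon C-\epsilon^{-2}(\grad_\cG\Phi)^v$ tends pointwise on $T\cM$ to $\mS-N_0C=S_{\UV}$ as $\epsilon\to\infty$.

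The only step that is more than bookkeeping is the maximal-domain claim, which requires combining the explicit reparameterization \eqref{ts}, completeness of $(\cM,\cG_0)$ via Hopf--Rinow, and the Escape Lemma applied to the blow-up of $||\dot{\varphi}_{\UV}||_{\cG_0}$ at the left endpoint. A secondary subtlety is that the spray $S_{\UV}$ is only continuous, not smooth, along the zero section $\o(\cM)$, so one should check that its solutions there are nonetheless the unique constant curves: along any solution of \eqref{eomUV} one has $\frac{\dd}{\dd t}||\dot{\varphi}_{\UV}||_{\cG_0}^2=-2||\dot{\varphi}_{\UV}||_{\cG_0}^3$, whose only solution starting at $0$ is the zero function, so $\dot{\varphi}_{\UV}$ either vanishes identically or never vanishes.
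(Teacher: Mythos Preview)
Your proposal is correct and follows essentially the same route as the paper, which presents the proposition as a summary of the preceding discussion (the derivation of \eqref{eomUV}--\eqref{ts}, the Hopf--Rinow argument, and the reduction to $\cG_0$ via parameter homothety). You reorganize slightly by performing the $M_0\to 1$ reduction first rather than last, and you add two pieces of rigor the paper omits: the Escape Lemma argument confirming that the left endpoint $-1/||u||_{\cG_0}$ is genuinely maximal (the paper simply reads off the constraint from the range of \eqref{ts}), and the verification that solutions through the zero section are unique despite $S_{\UV}$ being merely continuous there. These are welcome clarifications but do not constitute a different approach.
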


\noindent Notice that the approximations in the proposition are
only asymptotic. 

\paragraph{The strict IR limit and first order IR approximant.}

When $\epsilon$ is small but not zero, the rescaled cosmological
equation \eqref{eomdef} gives:
\ben
\label{IReps}
\frac{1}{M_0} \epsilon \sqrt{2\Phi(\varphi_{\epsilon}(t))}\frac{\dd \varphi_{\epsilon}(t)}{\dd t}+ (\grad_{\cG} \Phi)(\varphi_{\epsilon}(t))=\O(\epsilon^2)~~
\een
which is equivalent with:
\ben
\label{eomIR}
\epsilon\frac{\dd \varphi_{\epsilon}(t)}{\dd t}+ (\grad_{\cG} V)(\varphi_{\epsilon}(t))=\O(\epsilon^2)~~,
\een
where the classical effective scalar potential $V$ was defined in
\eqref{Vdef}.  Hence $\varphi_\epsilon(t)$ is approximated to first
order in $\epsilon$ by the solution $\varphi_{\IR,\epsilon}(t)$ of the
equation:
\ben
\label{eomIReps}
\epsilon\frac{\dd \varphi_{\IR,\epsilon}(t)}{\dd t}+ (\grad_{\cG} V)(\varphi_{\IR, \epsilon}(t))=0
\een
which satisfies:
\be
\varphi_{\IR,\epsilon}(0)=\varphi(0)~~.
\ee
It follows that $\varphi(t)=\varphi_\epsilon(\epsilon t)$ is
approximated by the solution $\varphi_\IR(t)\eqdef
\varphi_{\IR,\epsilon}(\epsilon t)$ of the equation:
\ben
\label{Vgf}
\frac{\dd \varphi_{\IR}(t)}{\dd t}+ (\grad_{\cG} V)(\varphi_{\IR}(t))=0
\een
which satisfies:
\ben
\label{IRincond}
\varphi_{\IR}(0)=\varphi(0)~~.
\een
Notice that the first order approximant $\varphi_\IR$ of $\varphi$ is
entirely determined by $\varphi(0)$ and does not depend on
$\dot{\varphi}(0)$; one must consider higher orders of the IR
expansion in order to obtain an approximant of $\varphi$ which also
depends on $\dot{\varphi}(0)$. In the first order IR approximation,
the initial speed $\dot{\varphi}(0)\in T_{\varphi(0)}\cM$ is
approximated by the vector $-(\grad_\cG V)(\varphi(0))\in
T_{\varphi(0)}\cM$; this approximation can only be accurate when
$||\dot{\varphi}(0)+(\grad_\cG V)(\varphi(0))||_\cG$ is small.

Equation \eqref{Vgf} is equivalent with the condition:
\ben
\label{SIRcond}
S_{\IR}(\dot{\varphi}_{\IR}(t))=0~~,
\een
where:
\ben
\label{SIR}
S_{\IR}\eqdef -C-(\grad_\cG V)^v\in \cX(T\cM)
\een
is obtained from \eqref{SIReps} for $\epsilon=1$. Condition
\eqref{SIRcond} confines the canonical lift of $\varphi_\IR$ to the
{\em gradient flow shell} $\Grad_\cG V\subset T\cM$ of the effective
scalar triple $(\cM,\cG,V)$, which is defined as the graph of the
vector field $-\grad_\cG V$:
\be
\Grad_\cG V\eqdef \graph(-\grad_\cG V)=\{u\in T\cM\vert u=-(\grad_\cG V)(\pi(u))\}=\{u\in T\cM\vert S_{\IR}(u)=0\}~~.
\ee
To first order of the IR expansion, the cosmological flow curves
of the model degenerate to the canonical lifts of the gradient flow
curves of $(\cM,\cG,V)$, which lie within the gradient flow shell
$\Grad_\cG V$. More precisely, consider a point $u\in T\cM$ and set
$\pi(u)=m\in \cM$. Then the maximal cosmological curve $\varphi_u$ of
this model which satisfies:
\be
\varphi_u(0)=m~~\mathrm{and}~~\dot{\varphi}_u(0)=u
\ee
is approximated to first order of the IR expansion by the gradient
flow curve $\eta_m$ of $(\cM,\cG,V)$ which satisfies:
\be
\eta_m(0)=m
\ee
This approximation is good for small $|t|$ only when $||u+(\grad_\cG
V)(\pi(u))||_\cG$ is small, i.e. when $u$ is close the gradient flow
shell $\Grad_\cG V$ and it is most precise when $u\in \Grad_\cG V$.

In this approximation, a basic cosmological observable $F:T\cM\rightarrow \R$
reduces to the function:
\be
F_\IR\eqdef F\circ (\grad_\cG V )\in \cC^\infty(\cM)
\ee
defined on $\cM$. For any cosmological curve $\varphi$, we have:
\be
F_\varphi\approx_{\IR} F_\IR\circ \varphi_\IR~~.
\ee
Summarizing the discussion above gives the following:

\begin{prop}
Consider the cosmological model parameterized by
$\fM=(M_0,\cM,\cG,\Phi)$ and define its {\em classical effective
potential} by $V\eqdef M_0\sqrt{2\Phi}$. For each $u\in T\cM$ with
$\pi(u)=m\in \cM$, the maximal cosmological curve $\varphi_u$ of the
model which satisfies:
\be
\varphi_u(0)=m~~\mathrm{and}~~\dot{\varphi}_u(0)=u
\ee
is approximated to first order of the IR expansion by the gradient
flow curve $\eta_m$ of the {\em effective scalar triple}
$(\cM,\cG,V)$ which satisfies:
\be
\eta_m(0)=m~~.
\ee
This approximation is optimal for small $|t|$ when $u\in \Grad_\cG
V$. Accordingly, the cosmological flow $\Pi:\cD\rightarrow T\cM$ of the
model is approximated to first order of the IR expansion by the map
$\Pi_{\IR}:{\hat \cD}_\IR\rightarrow \Grad_\cG V$ defined through:
\be
\Pi_{\IR}(t,u)\eqdef -(\grad_\cG V)(\Pi_V(t,\pi(u)))~~\forall (t,u)\in \cD_\IR~,
\ee
where $\Pi_{\cG,V}:\cD_V\rightarrow \cM$ is the gradient flow of the
effective scalar triple $(\cM,\cG,V)$, whose maximal domain of
definition we denote by $\cD_{\cG,V}\subset \R\times \cM$. Here:
\be
{\hat\cD}_\IR\eqdef \{(t,u)\in T\cM~\vert~(t,\pi(u))\in \cD_{\cG,V}\}~~.
\ee
\end{prop}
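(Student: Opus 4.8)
The statement is essentially a consolidation of the computation just carried out, so the plan is to organize that derivation into a single clean argument. First I would return to the $\epsilon$-rescaled cosmological equation \eqref{eomdef}, insert the IR expansion \eqref{cHIR} of the reduced Hubble function, and keep the terms of order $\epsilon^0$ and $\epsilon^1$: the connection term is $\O(\epsilon^2)$ and the friction term contributes $\frac{\epsilon}{M_0}\sqrt{2\Phi(\varphi_\epsilon)}\,\dot\varphi_\epsilon+\O(\epsilon^3)$, which produces \eqref{IReps}. The one genuinely substantive identity is that, since $\Phi>0$ is smooth, the classical effective potential $V=M_0\sqrt{2\Phi}$ is smooth with $(\dd V)=\tfrac{M_0}{\sqrt{2\Phi}}(\dd\Phi)$, hence $\grad_\cG V=\tfrac{M_0}{\sqrt{2\Phi}}\grad_\cG\Phi$ and $\Crit V=\Crit\Phi$; substituting $\grad_\cG\Phi=\tfrac{\sqrt{2\Phi}}{M_0}\grad_\cG V$ into \eqref{IReps} and dividing by the strictly positive, locally bounded factor $\tfrac{\sqrt{2\Phi(\varphi_\epsilon)}}{M_0}$ turns it into \eqref{eomIR}, namely $\epsilon\,\dot\varphi_\epsilon+(\grad_\cG V)(\varphi_\epsilon)=\O(\epsilon^2)$.

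Next I would identify the first order approximant $\varphi_{\IR,\epsilon}$ with the exact solution of \eqref{eomIReps} subject to $\varphi_{\IR,\epsilon}(0)=\varphi(0)$, and undo the scale transform by setting $\varphi_\IR(t)\eqdef\varphi_{\IR,\epsilon}(\epsilon t)$; by \eqref{derc} the factor $\epsilon$ is absorbed and one obtains \eqref{Vgf}, $\dot\varphi_\IR+(\grad_\cG V)(\varphi_\IR)=0$, with $\varphi_\IR(0)=m$, where $m\eqdef\pi(u)$ and $\varphi=\varphi_u$. This is exactly the defining equation of the gradient flow of the effective scalar triple $(\cM,\cG,V)$, so uniqueness of solutions of this autonomous smooth ODE forces $\varphi_\IR=\eta_m$. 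To make ``first order of the IR expansion'' into an honest asymptotic statement, I would then run a Gr\"onwall estimate on the difference of the cosmological flow and the gradient flow: on any compact time interval on which $\eta_m$ is defined, the quantities $\Phi$, $\sqrt{2\Phi}$, $\grad_\cG\Phi$ along the relevant curves stay in a compact set, so the $\O(\epsilon^2)$ remainder in \eqref{eomIR} is uniform and $\varphi_\epsilon(\epsilon t)-\eta_m(t)=\O(\epsilon)$.

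For the optimality clause, note that $\varphi_\IR=\eta_m$ depends only on $m=\pi(u)$ and discards $\dot\varphi_u(0)=u$; in this order of the approximation the initial velocity is effectively replaced by that of the approximant, $\dot\varphi_\IR(0)=-(\grad_\cG V)(m)$, and the mismatch $||u+(\grad_\cG V)(\pi(u))||_\cG$ at $t=0$ vanishes precisely when $u\in\Grad_\cG V$, which gives the asserted optimality for small $|t|$. For the flow statement I would pass to canonical lifts: $\dot\varphi_\IR(t)=-(\grad_\cG V)(\eta_m(t))$, which is condition \eqref{SIRcond} and confines the lift to the gradient flow shell $\Grad_\cG V$; since $\Pi(t,u)=\dot\varphi_u(t)$ and $\eta_m(t)=\Pi_{\cG,V}(t,\pi(u))$, the leading IR approximant of $\Pi$ is the map $(t,u)\mapsto -(\grad_\cG V)(\Pi_{\cG,V}(t,\pi(u)))$, defined exactly when the gradient flow curve through $\pi(u)$ survives to time $t$, i.e.\ on $\hat\cD_\IR=\{(t,u)\mid(t,\pi(u))\in\cD_{\cG,V}\}$; this is the claimed $\Pi_\IR$.

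The main obstacle I anticipate is not the algebra but making the phrase ``to first order of the IR expansion'' precise and uniform: one must control $\Phi$ (hence $\sqrt{2\Phi}$ and $\grad_\cG\Phi$) along the curve, which is delicate when $\cM$ is non-compact and the orbit can approach a Freudenthal end, and one must accommodate the degeneracy that the limiting equation is first order. This is why $\Pi_\IR$ is valued in the closed submanifold $\Grad_\cG V\subset T\cM$ rather than on all of $T\cM$, why the order of the cosmological equation drops by one in the IR limit, and why $\hat\cD_\IR$ is the $\pi$-pullback of $\cD_{\cG,V}$ rather than $\cD$ itself; a fully rigorous treatment of the higher-order remainders is deferred, but the first-order statement above follows from the steps listed.
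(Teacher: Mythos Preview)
Your proposal is correct and follows essentially the same route as the paper: the proposition is stated there as a summary of the preceding computation, and you recapitulate exactly those steps (deriving \eqref{IReps} from \eqref{eomdef}, passing to \eqref{eomIR} via the identity $\grad_\cG\Phi=\tfrac{\sqrt{2\Phi}}{M_0}\grad_\cG V$, unscaling to \eqref{Vgf}, and lifting to $\Grad_\cG V$). The one addition you make is the Gr\"onwall estimate; the paper does not attempt this and treats the approximation as a formal asymptotic statement (it remarks explicitly after the analogous UV proposition that ``the approximations in the proposition are only asymptotic''), so your extra step is a mild strengthening rather than a different argument.
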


\noindent As mentioned above, the first order IR approximation of a
cosmological curve $\varphi$ is optimal when $|t|\ll 1$ for those
cosmological curves which satisfy $\dot{\varphi(0)}\in \Grad_\cG V$.
This motivates the following:

\begin{definition}
A cosmological curve $\varphi$ of the model parameterized by
$\fM=(M_0,\cM,\cG,\Phi)$ is called {\em infrared optimal} if its orbit
meets the gradient flow shell $\Grad_\cG V$ of the effective scalar
triple $(\cM,\cG,V)$, where $V\eqdef M_0\sqrt{2\Phi}$.
\end{definition}

\noindent Suppose that $\varphi:I\rightarrow \cM$ is an infrared
optimal cosmological curve and let $t_0\in I$ be such that
$\dot{\varphi}(t_0)=-(\grad_\cG V)(\varphi(t_0))$. Shifting $t$ by a
constant we can assume that $t_0=0$ without loss of generality. Then
the first order IR approximant of $\varphi$ satisfies
$\varphi_\IR(0)=\varphi(0)$ and
$\dot{\varphi}_\IR(0)=\dot{\varphi}(0)$. Thus $\varphi_\IR$ osculates
in first order to $\varphi$ at $t=0$ and hence approximates $\varphi$
to first order in $t$ for $|t|\ll 1$. Notice that the covariant
accelerations of $\varphi$ and $\varphi_\IR$ need not agree at $t=0$
and hence the approximation need not hold to second order in $t$.

\begin{remark}
Equation \eqref{Vgf} can also be written as:
\be
\frac{\dd \varphi_{\IR}}{\dd \tau}+(\grad \Phi)(\varphi_{\IR}(\tau))=0~~,
\ee
where $\tau$ is the increasing parameter defined though:
\be
\tau(t)=\tau_0+M_0\int_{t_0}^t \frac{\dd t'}{\sqrt{2\Phi(\varphi_{\IR}(t))}}~~,
\ee
with $\tau_0$ an arbitrary constant. Hence
the cosmological curves of the model parameterized by $\fM$ are
approximated by the gradient flow curves of the original scalar triple
$(\cM,\cG,\Phi)$ up to such a curve-dependent reparameterizations of
the gradient flow curves. It is more natural to work with the effective
potential $V$ since its gradient flow parameter coincides with the
cosmological time $t$.
\end{remark}

\subsection{Consistency conditions for the UV and IR approximations}

\noindent The UV and IR approximations discussed in the
previous subsection are accurate when the conditions given below are
satisfied.

\paragraph{For the UV approximation.}

Comparison of \eqref{eomUV} with the cosmological equation
\eqref{eomsingle} shows that the UV approximation amounts to
neglecting\footnote{This agrees with the fact that the UV expansion is
equivalent with an expansion in the overall scale of $\Phi$.} the
contributions $2\Phi(\varphi(t))$ and $(\grad_{\cG} \Phi)(\varphi(t))$
in \eqref{eomsingle}. Hence this approximation is accurate when the
following conditions are satisfied:
\ben
\label{nuconds}
\nu_{1\varphi}(t)\ll 1~~\mathrm{and}~~\nu_{2\varphi}(t)\ll 1~~,
\een
where we defined the {\em first and second UV parameters} of $\varphi$ through:
\ben
\nu_{1\varphi}(t)\eqdef \frac{2\Phi(\varphi(t))}{||\dot{\varphi}(t)||_\cG^2}~~\mathrm{and}~~\nu_{2\varphi}(t)\eqdef M_0 \frac{||(\dd \Phi)(\varphi(t))||_\cG}{||\dot{\varphi}(t)||^2_\cG}~~.
\een
Indeed, the cosmological equation \eqref{eomsingle} can be written as:
\be
\nabla_t\dot{\varphi}(t)+\frac{1}{M_0} [1+\nu_1(t)]^{1/2}||\dot{\varphi}(t)||_\cG \dot{\varphi}(t)+(\grad_\cG\Phi)(\varphi(t))=0~~.
\ee
The first condition in \eqref{nuconds} allows us to approximate this with:
\ben
\label{eom1}
\nabla_t\dot{\varphi}(t)+\frac{1}{M_0} ||\dot{\varphi}(t)||_\cG \dot{\varphi}(t)+(\grad_\cG\Phi)(\varphi(t))=0~~,
\een
while the second condition allows us to approximate \eqref{eom1} with \eqref{eomUV}. 

When conditions \eqref{nuconds} are satisfied, the cosmological curve
$\varphi$ is well-approximated by the reparameterized geodesic
$\varphi_{\UV}$ of $(\cM,\cG)$ which satisfies \eqref{UVin}. Using
the proper length parameter $s$ on the latter, equation \eqref{ts}
gives:
\ben
\label{tprimes}
t'(s)=\frac{1}{||\dot{\varphi}(0)||_\cG} e^{\frac{s}{M_0}}=\frac{1}{||\dot{\varphi}(0)||_\cG}+\frac{1}{M_0} t(s)~~
\een
and we have $\Bvert\frac{\dd \varphi_{\UV}(s)}{\dd s}\Bvert_\cG=1$. Thus:
\be
||\dot{\varphi}(t)||_\cG\approx ||\dot{\varphi}_{\UV}(t)||_\cG=\frac{1}{t'(s)}=||\dot{\varphi}(0)||_\cG e^{-\frac{s}{M_0}}=\frac{M_0||\dot{\varphi}(0)||_\cG}{M_0+t||\dot{\varphi}(0)||_\cG}
\ee
and:
\ben
\label{nu12}
\nu_{1\varphi}(s)\approx \frac{2\Phi(\varphi(s))}{||\dot{\varphi}(0)||_\cG^2} e^{\frac{2s}{M_0}} ~~\mathrm{and}~~
\nu_{2\varphi}(s)\approx \frac{M_0||(\dd \Phi)(\varphi(s))||_\cG}{||\dot{\varphi}(0)||_\cG^2} e^{\frac{2s}{M_0}}~~,
\een
i.e.:
\beqa
\label{nu12t}
&&\nu_{1\varphi}(t)\approx \frac{2\Phi(\varphi(s))}{M_0^2||\dot{\varphi}(0)||_\cG^2}(M_0+t||\dot{\varphi}(0)||_\cG)^2 ~~\nn\\
&&\nu_{2\varphi}(t)\approx \frac{||(\dd \Phi)(\varphi(s))||_\cG}{||\dot{\varphi}(0)||_\cG^2}(M_0+t||\dot{\varphi}(0)||_\cG)^2~~.
\eeqa
For $t=0$, conditions \eqref{nuconds} require:
\be
\frac{2\Phi(\varphi_{\UV}(0))}{||\dot{\varphi}(0)||_\cG^2}\ll 1~~\mathrm{and}~~M_0\frac{||(\dd \Phi)(\varphi_{\UV}(0))||_\cG}{||\dot{\varphi}(0)||_\cG^2}\ll 1~~,
\ee
while for $t\neq 0$ they constrain the cosmological time interval on
which the approximation is accurate. When these conditions are not
satisfied, the leading UV approximation has to be corrected by higher
order terms in the UV expansion.

\paragraph{For the IR approximation.}

Comparison of \eqref{eomIR} with the cosmological equation
\eqref{eomsingle} shows that the IR approximation amounts to
neglecting the terms $\nabla_t \dot{\varphi}(t)$ and
$||\dot{\varphi}(t)||_\cG^2$ in \eqref{eomsingle}. Hence this
approximation is good when the cosmological curve $\varphi$ satisfies:
\ben
\label{lfcond}
\kappa_{1\varphi}(t)\ll 1~~\mathrm{and}~~\kappa_{2\varphi}(t)\ll 1~~,
\een
where we defined the {\em first and second IR parameters} of $\varphi$
through:
\ben
\label{kappadef}
\kappa_{1\varphi}(t)\eqdef\frac{||\dot{\varphi}(t)||_\cG^2}{2\Phi(\varphi(t))}=\frac{1}{\nu_{1\varphi}(t)}~~\mathrm{and}~~\kappa_{2\varphi}(t)\eqdef
\frac{||\nabla_t \dot{\varphi}(t)||_\cG}{||(\dd
\Phi)(\varphi(t))||_\cG}~~.
\een
The first condition in \eqref{lfcond} requires the kinetic energy of
$\varphi$ to be much smaller than its potential energy. Notice that:
\be
\kappa_{1\varphi}(t)=\kappa_1(\dot{\varphi}(t))~~,
\ee
where the function $\kappa_1:T\cM\rightarrow \R_{>0}$ is defined through:
\be
\kappa_1(u)\eqdef \frac{||u||_\cG^2}{2\Phi(\pi(u)))}~~.
\ee
The cosmological equation takes the form:
\ben
\label{nablaphi}
\nabla_t\dot{\varphi}(t)=-\frac{1}{M_0}\sqrt{2\Phi(\varphi(t))}\left[\sqrt{1+\kappa_{1\varphi}(t)}\dot{\varphi}(t)+(\grad_\cG V)(\varphi(t))\right]~~.
\een
Hence the second IR parameter of $\varphi$ can be written as:
\ben
\label{kappa2varphi}
\kappa_{2\varphi}(t)=
\frac{1}{M_0}\sqrt{2\Phi(\varphi(t))}\frac{||\sqrt{1+\kappa_{1\varphi}(t)}\dot{\varphi}(t)+(\grad_\cG
V)(\varphi(t))||_\cG}{||(\dd
\Phi)(\varphi(t))||_\cG}=\kappa_2(\dot{\varphi}(t))~~,
\een
where the function  $\kappa_2:T\cM\rightarrow \R_{>0}$ is defined through:
\ben
\label{kappa2}
\kappa_2(u)\eqdef  \frac{1}{M_0}\sqrt{2\Phi(\pi(u))}\frac{||\sqrt{1+\kappa_1(u)}u+(\grad_\cG V)(\pi(u))||_\cG}{||(\dd \Phi)(\pi(u))||_\cG}~~.
\een
Let $\eta=\varphi_\IR$ be the first order IR approximant of $\varphi$ at $t=0$. Since $\eta$
satisfies the gradient flow equation of $(\cM,\cG,V)$, we have:
\be
\kappa_{1\eta}(t)\eqdef \kappa_1(\dot{\eta}(t))=M_0^2 \frac{||(\dd \Phi)(\eta(t))||_\cG^2}{4\Phi(\eta(t))^2}={\hat \kappa}_1(\eta(t))~~,
\ee
where the function ${\hat \kappa}_1:\cM\rightarrow \R_{>0}$ is defined through:
\be
{\hat \kappa}_1\eqdef M_0^2 \frac{||\dd \Phi||_\cG^2}{4\Phi^2}~~.
\ee
Since $\grad V=-\frac{M_0}{\sqrt{2\Phi}}\grad\Phi$, we
have\footnote{We temporarily denote $||~||_\cG$ by $||~||$ and
$\grad_\cG$ by $\grad$ to simplify notation.}:
\ben
\label{nablaeta}
\nabla_t\dot{\eta}(t)=\nabla_{\dot{\eta}(t)}\dot{\eta}(t)=(\nabla_{\grad V}\grad V)(\eta(t))=
\left[\frac{M_0^2}{2\Phi}\nabla_{\grad\Phi} \grad\Phi-\kappa_1 \grad \Phi \right]\Big{\vert}_{\eta(t)}
\een
because $\nabla_{\grad\Phi}\Phi=(\dd \Phi)(\grad\Phi)=||\dd \Phi||^2$. Thus:
\be
\kappa_{2\eta}(t)\eqdef \frac{||\nabla_t \dot{\eta}(t)||_\cG}{||(\dd
\Phi)(\eta(t))||_\cG}=\frac{M_0^2}{2\Phi}\!\!\left[\!\frac{||\nabla_{\grad\Phi}\grad\Phi||^2}{||\dd\Phi||^2}\!\!+\!\frac{1}{4\Phi^2}||\dd\Phi||^4\!
-\! \frac{1}{2\Phi} \nabla_{\grad\Phi}||\dd \Phi||^2
\!\right]^{1/2}\!\!\Big{\vert}_{\eta(t)}={\hat \kappa}_2(\eta(t))~~,
\ee
where we used the relation:
\be
\cG(X,\nabla_X X)=\frac{1}{2}\nabla_X ||X||^2~~,
\ee
which holds for any vector field $X\in \cX(\cM)$ since $\cG$ is
covariantly constant. Here the function ${\hat
\kappa}_2:\cM\rightarrow \R_{>0}$ is defined through:
\beqa
{\hat \kappa}_2&\eqdef&\frac{||\frac{M_0^2}{2\Phi}\nabla_{\grad\Phi} \grad\Phi-\kappa_1 \grad \Phi||}{||\dd \Phi||}\nn\\
&=&\frac{M_0^2}{2\Phi}\left[\frac{||\nabla_{\grad\Phi}\grad \Phi||^2}{||\dd\Phi||^2}\!\!+\!\frac{1}{4\Phi^2}||\dd\Phi||^4 -\frac{1}{2\Phi}
\nabla_{\grad\Phi}||\dd \Phi||^2 \!\right]^{1/2}
\eeqa
On the other hand, we have::
\ben
\label{kappa2doteta}
\kappa_2(\dot{\eta}(t))=\sqrt{1+\kappa_{1\eta}(t)}-1~~,
\een
where we used \eqref{kappa2} and the fact that $\eta$ satisfies the
gradient flow equation of $(\cM,\cG,V)$.

Using the first order IR approximation for $\varphi$ and its first
time derivative amounts to replace $\varphi(t)$ by $\eta(t)$ and
$\dot{\varphi}(t)$ by $\dot{\eta}(t)$. Then $\kappa_{1\varphi}(t)$ is
replaced by $\kappa_{1\eta}(t)$ and
$\kappa_{2\varphi}(t)=\kappa_2(\dot{\varphi}(t))$ is replaced by
$\kappa_2(\dot{\eta}(t))$. Accuracy of this
approximation requires $\kappa_{1\eta}(t)\ll 1$, which implies
$\kappa_2(\dot{\eta}(t))\ll 1$ by \eqref{kappa2doteta}. For the
approximation to be accurate up to the {\em second} time derivative of
$\varphi$ and $\eta$, we must also have $\kappa_{2\varphi}(t)\approx
\kappa_{2\eta}(t)$, which requires $\kappa_{2\eta}(t)\ll 1$. 

\begin{remark}
Suppose that $\varphi$ is an infrared optimal cosmological curve. In
this case, we have $\varphi(0)=\eta(0):=m\in \cM$ and
$\dot{\varphi}(0)=\dot{\eta}(0)$, which gives:
\be
\kappa_{1\varphi}(0)=\kappa_{1\eta}(0)=\kappa_1(m)~~\mathrm{and}~~\kappa_{2\varphi}(0)=\kappa_2(\dot{\eta}(0))=\sqrt{1+{\hat \kappa}_1(m)}-1~~.
\ee
The curves $\varphi$ and $\eta$ osculate in order {\em two} at the
point $m$ iff $\nabla_t\dot{\varphi}(0)=\nabla_t\dot{\eta}(0)$,
i.e. (see \eqref{nablaphi} and \eqref{nablaeta}):
\ben
\label{osc2}
(\sqrt{1+{\hat \kappa}_1(m)}+{\hat \kappa}_1(m)-1)(\grad_\cG \Phi)=\frac{M_0^2}{2\Phi(m)} (\nabla_{\grad\Phi} \grad\Phi)(m)~~.
\een
This requires  $\kappa_{2\varphi}(0)=\kappa_{2\eta}(0)$, i.e.:
\be
{\hat \kappa}_2(m)=\sqrt{1+{\hat \kappa}_1(m)}-1~~.
\ee
When \eqref{osc2} is satisfied, $\varphi(t)$ is approximated by
$\eta(t)$ to order {\em two} in $t$ for $|t|\ll 1$. 
\end{remark}

\subsection{Relation to some other approximations used in cosmology}

\noindent Recall that the slow roll parameter of a cosmological
curve $\varphi$ is defined through:
\be
\bvarepsilon_{\varphi}(t)\eqdef -\frac{\dot{H_\varphi}(t)}{H_\varphi(t)^2}=-\frac{1}{H_\varphi(t)}\frac{\dd}{\dd t} \log H_\varphi(t)~~.
\ee
We have:
\be
\frac{\dd}{\dd t} \log H_\varphi(t)=\frac{\cG(\dot{\varphi}(t),\nabla_t \dot{\varphi}(t))+(\dd \Phi)(\varphi(t))(\dot{\varphi}(t))}{||\dot{\varphi}(t)||_\cG^2+2\Phi(\varphi(t))}=
-\frac{3H_\varphi(t)||\dot{\varphi}(t)||_\cG^2}{||\dot{\varphi}(t)||_\cG^2+2\Phi(\varphi(t))}~~,
\ee
where in the last equality we used the cosmological equation and the
relation: 
\be
\cG(\dot{\varphi}(t),(\grad_\cG \Phi)(\varphi(t)))=(\dd
\Phi)(\varphi(t))(\dot{\varphi}(t))
\ee 
to simplify the numerator. Thus:
\be
\bvarepsilon_{\varphi}(t)=\frac{3||\dot{\varphi}(t)||_\cG^2}{||\dot{\varphi}(t)||_\cG^2+2\Phi(\varphi(t))}=\frac{3}{1+\nu_{1\varphi}(t)}~~,
\ee
where
$\nu_{1\varphi}(t)=\frac{2\Phi(\varphi(t))}{||\dot{\varphi}(t)||_\cG^2}=\frac{1}{\kappa_{1\varphi}(t)}$
is the first UV parameter and $\kappa_{1\varphi}(t)$ is the first IR
parameter of $\varphi$. Hence the slow roll condition
$\bvarepsilon_\varphi(t)\ll 1$ holds iff $\nu_{1\varphi}(t)\gg 1$
i.e. $\kappa_{1\varphi}(t)\ll 1$. In particular, the slow roll
approximation does not hold when the UV approximation is accurate but
it automatically applies when the first order IR approximation is
accurate up to first time derivatives. Notice, however, that
conditions \eqref{lfcond} for the IR approximation to be accurate up
to second time derivatives are stronger (i.e. more restrictive) than
the slow roll condition, since they also require the parameter
$\kappa_{2\varphi}(t)$ to be small. In particular, the slow roll
approximation by itself is not equivalent with the first order
IR approximation.

Define the {\em scalar gradient flow parameter} of $\varphi$ through:
\be
\veta_{\varphi}(t)\eqdef \frac{1}{3}||\eta_\varphi(t)||_\cG=\frac{1}{3H_\varphi(t)}\frac{||\nabla_t \dot{\varphi}(t)||_\cG}{||\dot{\varphi}(t)||_\cG}
\ee
where $\eta_\varphi(t)$ is the vector gradient flow parameter of
\cite[Sec. 1.5]{genalpha}. Since
$||\nabla_t\dot{\varphi}(t)||_\cG=\kappa_{2\varphi} (t) ||(\dd
\Phi)(\varphi(t))||_\cG$ and
$||\dot{\varphi}(t)||_\cG=\sqrt{2\Phi(\varphi(t))\kappa_{1\varphi}(t)}$, we
have:
\be
H_\varphi(t)=\frac{1}{3M_0}\sqrt{2\Phi(\varphi(t))[1+\kappa_{1\varphi}(t)]}
\ee
and:
\be
\veta_{\varphi}(t)=\frac{M_0}{2}\frac{||(\dd \Phi)(\varphi(t))||_\cG}{\Phi(\varphi(t))} \frac{\kappa_{2\varphi}(t)}{\sqrt{\kappa_1(t)[1+\kappa_1(t)]}}~~.
\ee
When the IR approximation is accurate to second order in time
derivatives, we have $\kappa_{1\varphi}(t),\kappa_{2\varphi}(t)\ll 1$
and the previous relation gives:
\ben
\label{vetaIR}
\veta_{\varphi}(t)\approx \frac{M_0}{2}\frac{||(\dd \Phi)(\varphi(t))||_\cG}{\Phi(\varphi(t))} \frac{\kappa_2(t)}{\sqrt{\kappa_1(t)}}~~.
\een
Hence the gradient flow approximation of \cite{genalpha} applies
within the IR approximation only when this quantity is small. For
two-field models, it was shown in \cite[Sec. 1.9]{genalpha} that the
SRST approximation of \cite{PT1,PT2} is a further specialization of
the gradient flow approximation which applies only when certain
conditions on the Hessian of $\Phi$ are satisfied; these conditions
are much stronger than those for accuracy of the IR
approximation. Hence the SRST approximation can be applied within the
IR approximation only under very restrictive conditions. Thus the IR
approximation is conceptually and quantitatively different from the
gradient flow approximation of \cite{genalpha}. It is markedly
different from (and much less restrictive than) the SRST approximation
and its $n$-field variant.

\section{Renormalization group and dynamical universality classes}
\label{sec:ren}

In this section, we discuss the action of the subgroup $T_\ren\eqdef
\Stab_T(\Phi)$ of the universal similarity group $T$ of Subsection
\ref{subsec:contsym} on the parameters $(M_0,\cG)$ of a multifield
cosmological model with target manifold $\cM$, showing that it plays a
role similar to that of the renormalization group action in the theory
of critical phenomena. This action defines the {\em RG flow} of the
model, which proceeds by homothety transformations of the rescaled
Planck mass $M_0$ and of the scalar manifold metric $\cG$. We also
show that the first order infrared approximation is invariant up to
reparameterization under Weyl transformations of $\cG$ and define UV
and IR universality classes of classical cosmological models. After
briefly discussing the infrared phase structure which arises from the
classification of limit points of cosmological curves, we use these
results and the uniformization theorem of Poincar\'e to show that
two-field models whose scalar manifold metric has constant curvature
are IR universal in the sense that they provide distinguished
representatives of the universality classes of all two-field models.

\subsection{RG similarities and RG transformations}

\noindent For fixed target space $\cM$, the cosmological equation
\eqref{eomsingle} is invariant under the following action of the
multiplicative group $\R_{>0}$ on the curve $\varphi$ and on the
parameters $(M_0,\cG,\Phi)$:
\ben
\label{RGsym}
\varphi\rightarrow \varphi_\epsilon~~,~~M_0\rightarrow \epsilon M_0~~,~~\cG\rightarrow \epsilon^2 \cG~~,~~\Phi\rightarrow \Phi~~(\epsilon >0)~~.
\een
The {\em RG similarity} \eqref{RGsym} is the composition of the
parameter homothety \eqref{parhom} at parameter $\lambda=\epsilon^2$
with the scale similarity \eqref{sim} at parameter $\epsilon$. These
transformations give the restriction $\rho_\ren$ of the
similarity action $\rho$ of the group $T=(\R_{>0})^2$ to the
renormalization subgroup $T_\ren=\Stab_T(\Phi)\simeq \R_{>0}$
discussed in Subsection \ref{subsec:contsym}:
\be
\rho_\ren(\epsilon)(\varphi,M_0,\cG,\Phi)\eqdef \rho(\epsilon^2,\epsilon)(\varphi,M_0,\cG,\Phi)=(\varphi_\epsilon, \epsilon M_0,\epsilon^2\cG,\Phi)~~(\epsilon>0)~~.
\ee

The RG similarities \eqref{RGsym} can be used to absorb $M_0$ into the
overall scale of $\cG$ or viceversa without changing the scale of the
scalar potential $\Phi$. Notice that RG similarities are adiabatic
in the sense that they preserve the total energy
$E_\varphi^{\cG,\Phi}(t)\eqdef
\frac{1}{2}||\dot{\varphi}(t)||_\cG^2+\Phi(\varphi(t))$ of a
cosmological curve $\varphi$. Namely, we have:
\be
E^{\epsilon^2\cG,\Phi}_{\varphi_\epsilon}(t)=E_\varphi^{\cG,\Phi}(t/\epsilon)~~.
\ee

The time $t$ cosmological curves of the model with parameters
$(\epsilon M_0,\epsilon^2 \cG,\Phi)$ coincide with the time
$t/\epsilon$ cosmological curves of the model with parameters
$(M_0,\cG,\Phi)$. Hence the infrared and ultraviolet limits of the
latter can be described equivalently by taking $\epsilon$ to zero and
infinity in the former. Due to these properties, the {\em RG
transformations}:
\ben
\label{RG}
M_0\rightarrow \epsilon M_0~~,~~\cG\rightarrow \epsilon^2 \cG~~,~~\Phi\rightarrow \Phi
\een
play a role akin to that familiar from the theory of critical
phenomena. These transformations give the restriction $\rho_\RG$ of
the parameter action $\rho_{\param}$ of $T$ (see \eqref{rhoparam}) to
the renormalization group $T_\ren$:
\be
\rho_\RG(\epsilon)(M_0,\cG,\Phi)=\rho_\param(\epsilon^2,\epsilon)(M_0,\cG,\Phi)=(\epsilon M_0,\epsilon^2\cG,\Phi)~~(\epsilon>0)~~.
\ee
Under RG transformations, the rescaled scalar field metric
$\cG_0=\frac{1}{M_0^2}\cG$ is invariant, while the classical effective
potential $V=M_0\sqrt{2\Phi}$ and its gradient with respect to $\cG$ change as:
\be
V\rightarrow \epsilon V~~,~~\grad_\cG V\rightarrow \frac{1}{\epsilon}\grad_\cG V~~.
\ee
Hence $\grad_\cG V$ tends to zero in the UV limit $\epsilon\rightarrow
\infty$, while it tends to a current supported on the non-critical
locus $\cM_0=\cM\setminus \Crit\Phi$ in the strict IR limit
$\epsilon\rightarrow 0$. Notice that the geodesic flow of
$(\cM,\cG_0)$ is invariant under RG transformations, since the
covariant derivative of $\cG_0$ satisfies:
\be
\nabla^{\epsilon^2\cG_0}=\nabla^{\cG_0}~~.
\ee
This also follows from the fact that the cosmological equation is
invariant under RG similarities while the geodesic equation is
invariant under affine reparameterizations. On the other hand, the
gradient flow of $(\cM,\cG,V)$ is invariant under RG transformations
up to a reparameterization $\varphi_{\IR}\rightarrow
\varphi_{\IR,\epsilon}$ with positive constant factor $1/\epsilon$;
this also follows from invariance of the cosmological equation under
RG similarities.

\subsection{The dynamical renormalization group flow}

\noindent Let $\cM$ be a smooth manifold and $\cT_2(\cM)\eqdef
\Gamma(\Sym^2(T^\ast \cM))$ be the infinite-dimensional space of
smooth symmetric covariant 2-tensor fields defined on $\cM$. A {\em
Riemannian homothety line} on $\cM$ is a one-dimensional linear
subspace $L\subset \cT_2(\cM)$ which contains a Riemannian metric
defined on $\cM$. In this case, all elements of $L$ which are
positively-homothetic with $\cG$ are Riemannian metrics on $\cM$; such
elements form an open half-line $L_+$ contained in $L$ which satisfies
$L=L_+\cup (-L_+)\cup\{0\}$.  The {\em cosmological homothety plane}
defined by $L$ is the linear space $\Pi(\cM,L)\eqdef \R\oplus L\subset
\R\times \cT_2(\cM)$, which contains the {\em cosmological homothety
cone} $\rC(\cM,L)\eqdef \R_{>0}\oplus L_+$. The {\em cosmological RG
action} on $\rC(\cM,L)$ is the action of the group $T_\ren=\R_{>0}$
defined through:
\be
\rho_\RG(\epsilon)(M_0,\cG)\eqdef (\epsilon M_0,\epsilon^2\cG)~~\forall (M_0,\cG)\in \rC(\cM,L)~~\forall \epsilon>0~~.
\ee
Setting $\epsilon=e^\lambda$ with $\lambda\in \R$, this action
describes the flow on the homothety cone of the Euler vector field
$E_L$ defined through:
\be
E_L(M_0,\cG)\eqdef M_0\oplus 2\cG\in \Pi(\cM,\cG)\equiv T_{M_0,\cG} \rC(\cM,L)~~\forall (M_0,\cG)\in \rC(\cM,L)~~.
\ee
This flow is called the {\em cosmological RG flow} of $(\cM,L)$.

Any choice of a reference metric $\cG_{\mathrm{ref}}\in L_+$ induces
coordinates $w_1,w_2$ on $\rC(\cM,L)$ given by:
\be
M_0=w_1~~,~~\cG=w_2\cG_{\mathrm{ref}}~~,
\ee
which extend to coordinates on the homothety plane $\Pi(\cM,L)$ and
hence identify the latter with $\R^2$ and $\rC(\cM,L)$ with the first
quadrant. Then $\rho_\RG$ identifies with the action:
\be
\rho_\RG(\epsilon)(w_1,w_2)=(\epsilon w_1, \epsilon^2 w_2)~~\forall \epsilon>0
\ee
and $E_L$ identifies with the vector field $E$ on $\R_{>0}^2$ given
by:
\be
E(w_1,w_2)=(w_1,2w_2)~~\forall w_1,w_2>0. 
\ee
Moreover, the integral curves of the RG flow identify with the
solutions of the system:
\beqan
\label{RGflow}
\frac{\dd w_1(\lambda)}{\dd \lambda}&=&E_1(w_1(\lambda),w_2(\lambda))=w_1(\lambda)\nn\\
\frac{\dd w_2(\lambda)}{\dd \lambda}&=&E_2(w_1(\lambda),w_2(\lambda))=2 w_2(\lambda)~~,
\eeqan
namely:
\be
w_1(\lambda)=e^\lambda w_1(0)~~,~~w_2(\lambda)=e^{2\lambda} w_2(0)~~.
\ee
The limit $\lambda\rightarrow +\infty$ recovers the UV limit
$\epsilon\rightarrow +\infty$ while $\lambda\rightarrow -\infty$
corresponds to the IR limit $\epsilon\rightarrow 0$.  These limits
correspond to the fixed points of the RG flow on the one-point
compactification of the closure of the homothety cone, which are the
apex of the cone and the point at infinity (see Figure
\ref{fig:RGflow}).

\begin{figure}[H]
\centering
\begin{minipage}{.45\textwidth}
\centering ~~~~ \includegraphics[width=.9\linewidth]{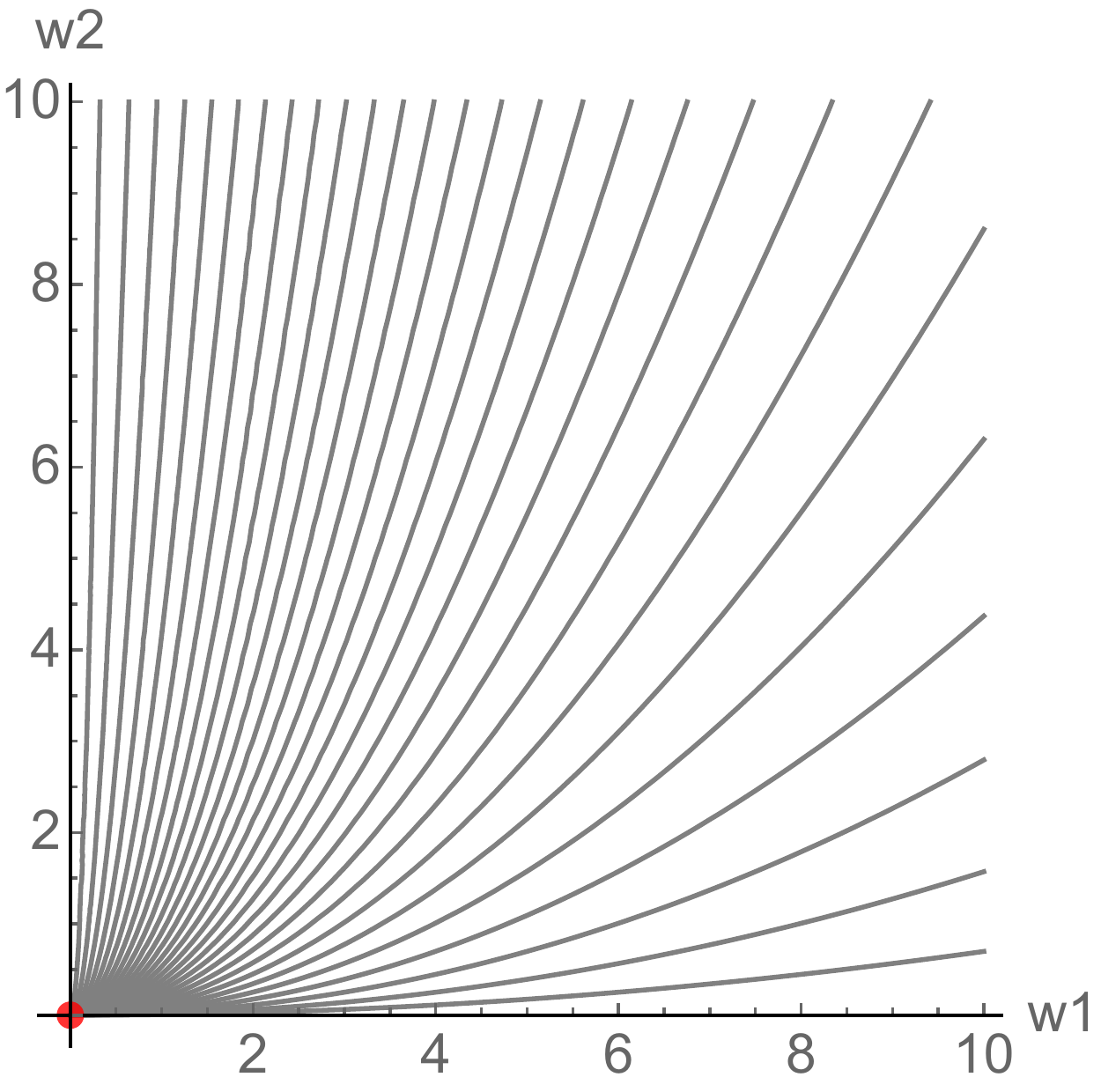}
\subcaption{Integral curves of the RG flow on the homothety cone.}
\end{minipage}
\hfill
\begin{minipage}{.5\textwidth}
\vskip 0.6 em
\centering ~~~~ \includegraphics[width=.8\linewidth]{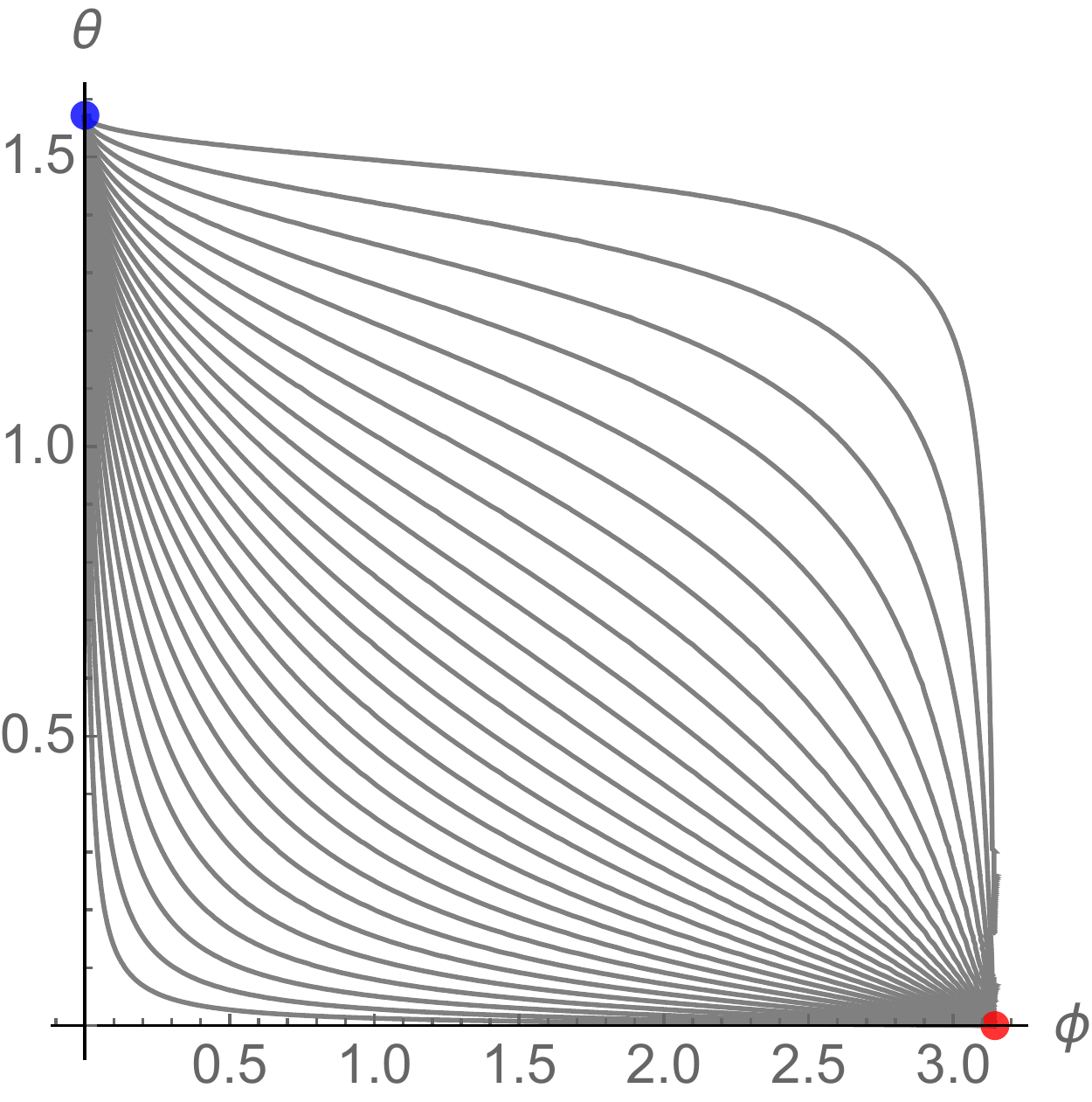}
\subcaption{Integral curves of the RG flow on the one-point
compactification of the homothety cone.}
\end{minipage}
\caption{Integral curves of the RG flow on the homothety cone and on
the one-point compactification of its closure. In the second figure we
identified the Alexandroff compactification of the homothety plane
with the unit sphere through stereographic projection. Here $\phi\in
[0,\pi]$ is the spherical altitude angle on the unit sphere and
$\theta\in [0,2\pi]$ is the azimuth/longitude angle (which coincides
with the polar angle in the homothety plane), while
$r=\sqrt{w_1^2+w_2^2}=\cot(\frac{\phi}{2})$ is the distance from the
origin in the homothety plane. The homothety cone identifies with the
region $(\phi,\theta)\in [0,\pi]\times [0,\frac{\pi}{2}]$ of the unit
sphere. The RG flow on the one-point compactification has fixed points
at $(\phi,\theta)=(\pi,0)$ (the red dot) and
$(\phi,\theta)=(0,\frac{\pi}{2})$ (the blue dot), which correspond
respectively to the apex of the homothety cone and its point at
infinity, i.e. to the south and north poles of the sphere. These fixed
points give the IR limit (the red dot) and the UV limit (the blue
dot).}
\label{fig:RGflow}
\end{figure}

\noindent Consider a model parameterized by $(M_0,\cM,\cG,\Phi)$,
where $\cG\in L_+$.  The discussion of the previous subsection implies
that the cosmological RG flow curve of $(\cM,L)$ which passes through
the point $(M_0,\cG)\in \rC(\cM,L)$ induces a curve in the space of
all flows defined on $T\cM$ which interpolates when $\lambda$ runs
from $-\infty$ to $+\infty$ between the gradient flow of $(\cM,\cG,V)$
(where $V=M_0\sqrt{2\Phi}$) and a modification of the geodesic flow of
$(\cM,\cG_0)$ (where $\cG_0=\frac{1}{M_0^2}\cG$).

\subsection{Conformal invariance in the infrared}
\label{subsec:Conformal}

\noindent The gradient flow of a scalar potential $V$ defined on a
Riemannian manifold $(\cM,\cG)$ is invariant under Weyl rescalings of
the metric $\cG$ up to reparameterization of the gradient flow
curves. More precisely, consider a solution $\eta:I\rightarrow \cM$ of
the gradient flow equation:
\ben
\label{GradientFlow}
\frac{\dd \eta(t)}{\dd t}=-(\grad_{\cG} V)(\eta(t))
\een
of $V$ with respect to the metric $\cG$ and let $\Omega$ be a smooth
and everywhere-positive real-valued function defined on $\cM$. Then
the $\eta$-dependent increasing reparameterization $t\rightarrow
\tau$ defined through:
\be
\tau(t)\eqdef \tau_\eta(t)=\int_{t_0}^t \dd t' \Omega(\eta(t'))+C~~,
\ee
(where $t_0\in I$ and $C$ is an arbitrary constant) takes $\eta(t)$
into a solution $\eta(\tau)$ of the gradient flow equation:
\be
\frac{\dd \eta(\tau)}{\dd \tau}=-(\grad_{\cG'} V)(\eta(\tau))
\ee
of $V$ with respect to the metric $\cG'\eqdef \Omega \cG$. This
implies that the gradient flow orbits of $V$ with respect to the
metrics $\cG$ and $\Omega \cG$ coincide as oriented submanifolds of
$\cM$. In particular, these orbits depend only on the Weyl-equivalence
class of $\cG$.

This observation can be generalized as follows. Recall that a smooth
map $f:(\cM_1,\cG_1)\rightarrow (\cM_2,\cG_2)$ between Riemannian
manifolds $(\cM_1,\cG_1)$ and $(\cM_2,\cG_2)$ is called a conformal
diffeomorphism if $f^\ast(\cG_2)=\Omega \cG_1$ for some positive
smooth function $\Omega:\cM_1\rightarrow \R_{>0}$.

\begin{definition}
Two scalar triples $(\cM_1,\cG_1,V_1)$ and $(\cM_2,\cG_2,V_2)$ are
called smoothly {\em conformally equivalent} if there exists a
conformal diffeomorphism $f:(\cM_1,\cG_1)\rightarrow (\cM_2,\cG_2)$
such that $V_1=V_2\circ f$. In this case, $f$ is called a (smooth)
{\em conformal equivalence} between the two triples. A {\em conformal
automorphism} of a scalar triple $(\cM,\cG,V)$ is a conformal
equivalence from $(\cM,\cG,V)$ to itself.
\end{definition}

\noindent With this definition, we show in Appendix \ref{app:grad}
that the gradient flows of conformally-equivalent scalar triples are
smoothly topologically equivalent. In particular, the smooth
topological equivalence class of the gradient flow of a scalar triple
is invariant under conformal automorphisms of that triple. Since in
the gradient flow of the function $V=M_0\sqrt{2\Phi}$ with respect to
$\cG$ gives the first order IR approximant of cosmological curves for
the model parameterized by $(M_0,\cM,\cG,\Phi)$, we conclude that:

\

\noindent {\em Up to topological equivalence, the first order IR
approximant of the cosmological flow of a multifield cosmological
model parameterized by $(M_0,\cM,\cG,\Phi)$ depends only on the
conformal equivalence class of the effective scalar triple
$(\cM,\cG,V)$, where $V\eqdef M_0\sqrt{2\Phi}$. Up to curve-dependent
increasing reparameterization of the flow curves, this approximant
depends only on $\cM$, $V$ and on the Weyl-equivalence class of
$\cG$.}

\subsection{Dynamical universality classes}

\noindent Consider two cosmological models parameterized by
$\fM_i=(M_{0i},\!\cM_i,\!\cG_i,\!\Phi_i)$ ($i=1,2$) whose tangent
bundle projections and Levi-Civita connections we denote by $\pi_i$
and $\nabla_i$. Let $\cG_{0i}\eqdef \frac{1}{M_{0i}^2}\cG_i$ be the
rescaled scalar manifold metrics and $V_i\eqdef M_{0i}\sqrt{2\Phi_i}$
be the effective scalar potentials of the two models. Notice that $\nabla_i$
coincides with the Levi-Civita connection of $\cG_{0i}$. 

\paragraph{UV conjugations and UV equivalences.}

\noindent Recall the map \eqref{hatf} induced between the sets of
smooth curves in two manifolds by a smooth semispray map.

\begin{definition} A {\em smooth geodesic conjugation} between two
connected Riemannian manifolds $(\cM_1,\cG_{01})$ and
$(\cM_2,\cG_{02})$ is a smooth semispray map $f:T\cM_1\rightarrow
T\cM_2$ which restricts to a topological conjugation between their
normalized geodesic flows, i.e. $f$ restricts to a diffeomorphism
$f_S:ST\cM_1\rightarrow ST\cM_2$ between the unit sphere bundles
$ST\cM_1$ and $ST\cM_2$ and has the property that ${\hat f}(\psi_1)$
is a normalized geodesic of $(\cM_2,\cG_{02})$ whenever $\psi_1$ is a
normalized geodesic of $(\cM_1,\cG_{01})$.
\end{definition}

\begin{eg}
Consider a diffeomorphism $f_0:\cM_1\rightarrow \cM_2$. Then it is
easy to see that $f_0$ maps normalized geodesics of $(\cM_1,\cG_{01})$
into normalized geodesics of $(\cM_2,\cG_{02})$ iff it is an isometry,
which in turn is equivalent with the condition that its differential
$\dd f_0$ maps $ST\cM_1$ into $ST\cM_2$. In this case, the restriction
$\dd f_0\vert_{ST\cM_1}:ST\cM_1\rightarrow ST\cM_2$ is a geodesic
conjugation.
\end{eg}

\begin{remark} Recall that an {\em affine mapping} from
$(\cM_1,\cG_{01})$ to $(\cM_2,\cG_{02})$ is a diffeomorphism
$f_0:\cM_1\rightarrow \cM_2$ such that $f_0^\ast(\nabla_2)=\nabla_1$.
Such a map takes affinely parameterized geodesics of
$(\cM_1,\cG_{01})$ into affinely parameterized geodesics of
$(\cM_2,\cG_{02})$ but it need not take normalized geodesics into
normalized geodesics. An affine mapping from a Riemannian manifold
$(\cM,\cG_0)$ to itself is called an {\em affine transformation} of
$(\cM_0,\cG_0)$; such transformations form a Lie group. An
infinitesimal affine mapping is sometimes called an {\em affine
collineation}.  When $(\cM,\cG_0)$ is complete, irreducible and of
dimension greater than one, any affine transformation of $(\cM,\cG_0)$
is an isometry (see \cite{Kobayashi, IO}).
\end{remark}

\begin{definition} A {\em smooth geodesic equivalence} between two
Riemannian manifolds $(\cM_1,\cG_{01})$ and $(\cM_2,\cG_{02})$ is a
smooth semispray map $f:T\cM_1\rightarrow T\cM_2$ which identifies
geodesic orbits, i.e. $\psi_1:I_1\rightarrow \cM_1$ is an arbitrarily
parameterized geodesic of $(\cM_1,\cG_{01})$ iff ${\hat f}(\psi_1)$ is
an arbitrary parameterized geodesic of $(\cM_2,\cG_{02})$. 
\end{definition}

\begin{eg}
Consider a smooth {\em geodesic (a.k.a. projective) mapping}
$f_0:(\cM_1,\cG_{01})\rightarrow (\cM_2,\cG_{02})$, i.e.  a smooth
diffeomorphism from $\cM_1$ to $\cM_2$ which maps the geodesic orbits
of $(\cM_1,\cG_{01})$ into those of $(\cM_2,\cG_{02})$, a condition
which amounts to the requirement that the connections $\nabla_1$ and
$f_0^\ast(\nabla_2)$ be projectively-equivalent. Then $\dd
f_0:T\cM_1\rightarrow T\cM_2$ is a smooth geodesic equivalence. The
study of geodesic mappings is a classical subject in Riemannian
geometry (see, for example \cite{Mikes}) which is intimately connected
(see \cite{CrampinSaunders}) to Cartan's projective differential
geometry \cite{CartanProj}.
\end{eg}

\noindent The results of Section \ref{sec:ScalingLimits} motivate the
following:

\begin{definition} Consider two classical cosmological models
parameterized by $\fM_1=(M_{01},\!\cM_1,\!\cG_1,\!\Phi_1)$ and
$\fM_2=(M_{02},\!\cM_2,\!\cG_2,\!\Phi_2)$ and let $V_i\eqdef
M_{0i}\sqrt{2\Phi_i}$ and $\cG_{0i}\eqdef \frac{1}{M_{0i}^2}\cG_i$
($i=1,2$). We say that the models are:
\begin{itemize}
\item {\em smoothly UV conjugate} and write $\fM_1\!\equiv_{\UV}\fM_2$ if there
  exists a smooth geodesic conjugation:
\be
f:(\cM,\cG_{01})\rightarrow (\cM,\cG_{02})~~.
\ee
\item {\em smoothly UV equivalent} and write $\fM_1\!\sim_{\UV}\fM_2$ if there
  exists a smooth geodesic equivalence:
\be
f:(\cM,\cG_{01})\rightarrow (\cM,\cG_{02})~~.
\ee
\end{itemize}
In the situations above, $f$ is called respectively a smooth {\em UV
conjugation} or {\em UV equivalence} between $\fM_1$ and $\fM_2$. 
\end{definition}

\paragraph{IR conjugations and IR equivalences.}

The results of Section \ref{sec:ScalingLimits} show
that the IR limits of the two models can be identified when the
gradient flows of the scalar triples $(\cM_1,\cG_1,V_1)$ and
$(\cM_2,\cG_2,V_2)$ are smoothly topologically conjugate, i.e. related
by a diffeomorphism $f:\cM_1\rightarrow \cM_2$ (see Appendix
\ref{app:tsequiv}). This amounts to the condition that the vector fields
$\grad_{\cG_1}V_1$ and $\grad_{\cG_2} V_2$ be $f$-related,
i.e. $\grad_{\cG_2}V_2=f_\sharp (\grad_{\cG_1}V_1)$, where
$f_\sharp:\cX(\cM_1)\rightarrow \cX(\cM_2)$ denotes the push-forward
of vector fields by $f$:
\be
f_\sharp(X)=(\dd f)\circ X\circ f^{-1}~~\forall X\in
\cX(\cM_1)~~.
\ee
Moreover, the two gradient flows can be identified through $f$ up to
increasing reparameterization (and hence are smoothly topologically
equivalent, see Appendix \ref{app:tsequiv}) iff $\frac{1}{\Omega} \,
\grad_{\cG_1}V_1$ is $f$-related to $\grad_{\cG_2} V_2$ for some
positive smooth function $\Omega$ defined on $\cM_1$ (which induces a
Weyl rescaling $\cG_1\rightarrow \Omega\cG_1$ of $\cG_1$) i.e. iff we
have $f_\sharp(\frac{1}{\Omega} \, \grad_{\cG_1}V_1)=\grad_{\cG_2}
V_2$. These observations motivate the following:

\begin{definition}
Consider two scalar triples $(\cM_1,\cG_1,V_1)$ and
$(\cM_2,\cG_2,V_2)$. A smooth diffeomorphism
$f:(\cM_1,\cG_1,V_1)\rightarrow (\cM_2,\cG_2,V_2)$ is called:
\begin{itemize}
\item {\em smooth gradient conjugation}, if it satisfies the condition:
\ben
\label{gradconj}
f_\sharp(\grad_{\cG_1} V_1)=\grad_{\cG_2} V_2~~.
\een
\item {\em smooth gradient equivalence}, if it satisfies the
condition:
\ben
\label{gradequiv}
f_\sharp(\Omega^{-1}\grad_{\cG_1} V_1)=\grad_{\cG_2} V_2
\een
for some positive smooth function $\Omega:\cM_1\rightarrow \R_{>0}$.
\end{itemize}
The two scalar triples are called smoothly {\em gradient conjugate/equivalent}
if there exists a smooth gradient conjugation/equivalence
$f:(\cM_1,\cG_1,V_1)\rightarrow (\cM_2,\cG_2,V_2)$.
\end{definition}

\noindent Since the differential pull-back
$f^\ast:\cX(\cM_2)\rightarrow \cX(\cM_1)$ is given by
$f^\ast=(f_\sharp)^{-1}$, condition \eqref{gradequiv} is equivalent
with:
\ben
\label{gradequiv2}
\grad_{\cG_1} V_1=\Omega~\grad_{f^\ast(\cG_2)} f^\ast(V_2)~~,
\een
where $f^\ast(V_2)=V_2\circ f$. Gradient conjugation and equivalence
are equivalence relations on scalar triples.

\begin{eg}
Any conformal equivalence of scalar triples
$f:(\cM_1,\cG_1,V_1)\rightarrow (\cM_2,\cG_2,V_2)$ is a gradient
equivalence, as can be seen from its defining relations
$f^\ast(\cG_2)=\Omega \cG_1$ and $f^\ast(V_2)=V_1$. However, condition
\eqref{gradequiv} admits solutions $f$ which need not be conformal
equivalences of triples. In local coordinates $(x^1,\ldots, x^d)$ on
$\cM_1$ and $(y^1,\ldots, y^d)$ on $\cM_2$, this condition takes the
form:
\ben
\label{gradequivlocal}
\frac{1}{\Omega(x)}\cG_1^{jk}(x)\pd_j f^i(x) \pd_k V_1(x)=\cG_2^{ij}(f(x))(\pd_j V_2)(f(x))~~,
\een
which is weaker than the defining conditions of a conformal
transformation from $(\cM_1,\cG_1,V_1)$ to $(\cM_2,\cG_2,V_2)$:
\ben
\label{conflocal}
(\cG_{2})_{ij}(f(x))\pd_k f^i(x)\pd_l f^j(x)=\Omega(x) (\cG_{1})_{kl}(x)~~\mathrm{and}~~V_2(f(x))=V_1(x)~~.
\een
Here $f^i(x)=y^i(f(x))$. If one chooses coordinates such that
$x^i=y^i(f(x))$ then $f$ is locally given by $f^i(x)=x^i$ and
\eqref{gradequivlocal} becomes:
\be
\frac{1}{\Omega(x)}\cG_1^{ij}(x) \pd_j V_1(x)=\cG_2^{ij}(x)\pd_j V_2(x)
\ee
while \eqref{conflocal} reduces to:
\be
\frac{1}{\Omega(x)}\cG_1^{ij}(x)=\cG_2^{ij}(x)~\mathrm{and}~~V_1(x)=V_2(x)~~.
\ee
A gradient conjugation of a scalar triple to itself is called a {\em
gradient symmetry}, while a gradient equivalence to itself is called a
{\em weak gradient symmetry} of that triple. The condition that a
diffeomorphism $f:\cM\rightarrow \cM$ be a weak gradient symmetry of
the scalar triple $(\cM,\cG,V)$ reduces to the requirement that the
gradient vector field $v\eqdef \grad_\cG V$ (which has components
$v^i(x)=\cG^{ij}(x)\pd_j V(x)$) satisfies $f_\ast(\frac{1}{\Omega}
v)=v$, i.e.:
\be
\pd_j f^i(x) v^j (x)=\Omega(x)v^i(f(x))~~.
\ee
\end{eg}

\noindent With these preparations, the remarks made above motivate
the following:

\begin{definition}
Consider two classical cosmological models parameterized by
$\fM_1=(M_{01},\!\cM_1,\!\cG_1,\!\Phi_1)$ and
$\fM_2=(M_{02},\!\cM_2,\!\cG_2,\!\Phi_2)$ and let $V_i\eqdef
M_{0i}\sqrt{2\Phi_i}$ ($i=1,2$). We say that the models are 
\begin{itemize}
\item {\em smoothly IR conjugate} and write $\fM_1\equiv_{\IR} \fM_2$ if there
exists a smooth gradient conjugation:
\be
f:(\cM_1,\cG_1,V_1)\rightarrow (\cM_2,\cG_2,V_2)~~.
\ee
\item {\em smoothly IR equivalent} and write $\fM_1 \sim_{\IR}\fM_2$ if
there exists a smooth gradient equivalence  
\be
f:(\cM_1,\cG_1,V_1)\rightarrow (\cM_2,\cG_2,V_2)~~.
\ee
\end{itemize}
In the situations above, $f$ is called respectively a smooth {\em IR
  conjugation} or {\em IR equivalence} between $\fM_1$ and $\fM_2$.
\end{definition}

A (necessarily strict) IR or UV smooth conjugation between a model
parameterized by $(M_0,\cM,\cG,\Phi)$ and itself is called an IR or UV
{\em symmetry} of that model.

The binary relations introduced above are equivalence relations on the
class of multifield cosmological models; in fact, they are the
isomorphism relations of associated groupoids. The equivalence classes
of $\sim_{\UV}$ and $\sim_{\IR}$ are called smooth UV and IR {\em
cosmological universality classes}.

\subsection{The late time infrared phase structure}
\label{subsec:IRModuliSpace}

\noindent As explained in Section \ref{sec:ScalingLimits}, the
cosmological curves of a model parameterized by $(M_0,\cM,\cG,\Phi)$
are approximated by gradient flow curves of the scalar triple
$(\cM,\cG,V)$ in the IR limit.  With sufficiently strong assumptions
on the behavior of $\cM$ and $V$ near the Freudenthal ends of $\cM$,
the $\omega$- extended limit points of such curves coincide with those
of certain gradient flow curves of $(\cM,\cG,V)$. In good cases, a
cosmological curve $\varphi$ approaches a gradient flow curve
$\varphi_\IR^{(+)}$ in the distant future and hence $\varphi$ can be
replaced in the infrared by $\varphi_\IR^{(+)}$ for late times.

This statement is familiar when $\cM$ is compact and $V$ is a Morse
function. In that case, the gradient flow of $(\cM,\cG,V)$ is complete
and hence its maximal flow curves are defined on the entire real
axis. Moreover, any non-constant maximal gradient flow curve $\eta$
satisfies:
\be
\exists \lim_{t\rightarrow -\infty} \eta(q)=c_i~~\mathrm{and}~~\exists
\lim_{t\rightarrow \infty } \eta(q)=c_f~~, \ee where $c_i$ and $c_f$
are critical points of $V$ whose Morse indices obey
$\ind(c_i)>\ind(c_f)$. In this situation, the moduli space of gradient
flow orbits has components indexed by the pair $(c_i,c_f)$, while the
early and late time behavior of $\eta$ depend respectively on $c_i$
and $c_f$. In this case, the late time infrared dynamics of the
cosmological model parameterized by $(M_0,\cM,\cG,\Phi)$ has a {\em
phase structure} in which different phases are indexed by the critical
points of $V$; every cosmological curve ``settles'' at late times in a
phase indexed by a critical point $c$ of Morse index greater that
$d\eqdef \dim\cM$. The asymptotic behavior of the cosmological flow
for late times in the phase indexed by $c$ depends on the Morse index
of $c$ and on the behavior of $\cG$ near $c$.

This simple characterization of limit points of the gradient flow of
$(\cM,\cG,V)$ no longer holds when $V$ is Morse but $\cM$ is
non-compact. With strong enough conditions on the topology of $\cM$
and on the asymptotic behavior of $V$, the gradient flow curves will
still have $\alpha$- and $\omega$- extended limits but each of these
can be either a critical point of $V$ or a Freudenthal end of
$\cM$. In this case, the IR phases of the model are indexed by
critical points of $V$ and by those ends of $\cM$ which can arise as
$\omega$-limit points of a cosmological curve.

For general effective potentials $V$ with non-compact $\cM$, a maximal
cosmological curve can have more than one extended $\omega$-limit
point. In this situation, IR phases can be classified partially by the
allowed $\omega$-limit sets, but specifying these need not provide a
full classification and more detailed analysis is required. In
particular, the classification of such phases can be quite involved
and a simple relation to the late time behavior of the gradient flow
of $(\cM,\cG,V)$ may not exist.

\section{IR universality of two-field models with hyperbolic scalar manifold}
\label{sec:2field}

\noindent The IR conformal invariance property discussed in Subsection
\ref{subsec:Conformal} has a striking implication for two-field
cosmological models when combined with the uniformization theorem of
Poincar\'e. According to the latter, the Weyl equivalence class of any
Riemannian metric $\cG$ defined on a borderless connected surface
$\Sigma$ contains a unique complete metric $G$ (called the {\em
uniformizing metric} of $\cG$) of constant Gaussian curvature $K$
equal to one of the values $-1$, $0$ or $+1$. The case $K=-1$ is
generic; in this case, the metric $\cG$ (and its conformal class) is
called {\em hyperbolizable} and $G$ is called the {\em
hyperbolization} of $\cG$. The cases $K=+1$ and $K=0$ occur only for
seven topologies, as follows:
\begin{itemize}
\item When $K=+1$, the surface $\Sigma$ must be diffeomorphic with the
$2$-sphere $\rS^2$ or with the real projective plane $\R\P^2\simeq
\rS^2/\Z_2$. Both of these surfaces admit a unique metric of unit
Gaussian curvature.
\item When $K=0$, the surface $\Sigma$ must be diffeomorphic with the
2-torus $\rT^2$, the Klein bottle $\rK^2=\R\P^2\times \R\P^2\simeq
\rT^2/\Z_2$, the open annulus $\rA^2$ (which is diffeomorphic with the
open cylinder and with the twice punctured sphere), the open
M\"{o}bius strip $\rM^2\simeq \rA^2/\Z_2$ (which is diffeomorphic with
the once-punctured real projective plane) or with the plane
$\R^2$. The 2-torus admits a three-parameter family of flat metrics
while the Klein bottle admits a two-parameter family of such. The open
annulus and open M\"{o}bius strip admit a one-parameter family of
complete flat metrics while $\R^2$ admits a unique complete flat
metric.
\end{itemize}

\noindent The plane $\R^2$, the open annulus $\rA^2$
and the open M\"{o}bius strip $\rM^2$ are the only three surfaces which
admit {\em two} types of uniformizing metrics, namely a Riemannian
metric defined on one of these surfaces is uniformized either by a
complete flat metric or by a hyperbolic metric depending on its
conformal class. Namely:
\begin{itemize}
\item The plane $\R^2$ admits both a complete flat metric and a
hyperbolic metric (the Poincar\'e metric).
\item The open annulus $\rA^2$ admits a one-parameter family of
complete flat metrics, a hyperbolic metric which makes it into the
hyperbolic punctured disk and a one-parameter family of hyperbolic
metrics which produce the hyperbolic annuli.
\item The open M\"{o}bius strip $\rM^2$ admits a one-parameter family
of complete flat metrics and a one-parameter family of hyperbolic
metrics which is obtained by quotienting the hyperbolic annuli through
a $\Z_2$ group of isometries.
\end{itemize}

\noindent Surfaces on which any metric is hyperbolizable are called of
{\em general type}; these include all borderless connected surfaces
except $\rS^2,\R\P^2, \rT^2,\rK^2,\rA^2,\rM^2$ and $\R^2$. When
$\Sigma$ is diffeomorphic with $\rA^2,\rM^2$ or $\rK^2$, only certain
conformal classes are hyperbolizable, while the remaining conformal
classes uniformize to a complete flat metric. When $\Sigma$ is
diffeomorphic with $\rT^2$ or $\rK^2$, every conformal class
uniformizes to a complete flat metric, while when it is diffeomorphic
with $\rS^2$ or $\R\P^2$ any conformal class uniformizes to a complete
metric with Gaussian curvature equal to $+1$.

Consider a two-field cosmological model parameterized by
$(M_0,\Sigma,\cG,\Phi)$.  The observations of Subsection
\ref{subsec:Conformal} imply that the gradient flow of
$V=M_0\sqrt{2\Phi}$ computed with respect to $\cG$ has the same
oriented orbits as the gradient flow of $V$ computed with respect to
the uniformizing metric $G$ and hence differs from the latter only by
an increasing reparameterization of the gradient flow curves. Thus
models whose scalar field metric has constant Gaussian curvature equal
to $-1$, $0$ or $+1$ provide distinguished representatives of the IR
universality classes of all two-field cosmological models. More
precisely:

\

\noindent {\em Up to (curve-dependent) increasing reparameterization,
the first order IR approximant of the cosmological flow of a two-field
model with scalar triple $(\Sigma,\cG,\Phi)$ and rescaled Planck mass
$M_0$ is described by the gradient flow of the scalar triple
$(\Sigma,G,V)$, where $G$ is the uniformizing metric of $\cG$ and
$V\eqdef M_0\sqrt{2\Phi}$ is the classical effective scalar potential
of the model.}

\

\noindent When $\Sigma$ is of general type, this statement implies
that IR universality classes of two-field models with target $\Sigma$
are classified by gradient equivalence classes of hyperbolic scalar
triples $(\Sigma,G,V)$; more generally, this holds for two-field
models with hyperbolizable scalar manifold $(\Sigma,\cG)$. Two-field
models whose complete scalar manifold metric $\cG$ has constant
negative curvature are called {\em generalized two-field
$\alpha$-attractor models}. In this case, one has $\cG=3\alpha G$ for
some positive constant $\alpha$ and some hyperbolic metric $G$. Such
models were introduced and studied in \cite{genalpha, elem, modular}
and form a very wide extension of the two-field cosmological
$\alpha$-attractor models previously introduced in \cite{KLR}, which
correspond to the topologically trivial case when $(\Sigma,G)$ is the
Poincar\'e disk. It follows that generalized two-field
$\alpha$-attractor models are {\em IR universal} among two-field
cosmological models with hyperbolizable target manifold. This gives a
conceptual reason to single out generalized two-field
$\alpha$-attractor models for special study.

\begin{remark} The notion of IR universality employed in this paper
differs from the colloquial use of the same word in the
$\alpha$-attractor literature (see for example \cite{AKLWY}), where
certain predictions extracted from models whose scalar manifold is a
Poincar\'e disk were claimed to be `universal' in the sense that they
are insensitive to small enough perturbations of the scalar potential
(a notion which corresponds to structural stability of the
cosmological dynamical system and differs conceptually from our notion
of IR universality). Poincar\'e disk models {\em cannot} be IR
universal even in the class of models with contractible target, since
the open disk $\rD^2\simeq \R^2$ admits both metrics which uniformize
to the Poincar\'e metric and metrics which uniformize to the complete
flat metric\footnote{While the Poincar\'e metric is conformally flat,
it is not conformally equivalent to the {\em complete} flat metric on
$\rD^2\simeq \R^2$.}. As we show in a separate publication, IR
universality classes of two-field models with hyperbolizable scalar
manifold are much more complicated than one might expect from such
works. When $\Sigma$ is not topologically trivial, such models can
have distinct late time IR phases associated to the Freudenthal ends
of $\Sigma$, all of which differ from the single IR phase associated
to the plane end of the Poincar\'e disk; in fact, the latter is the
only hyperbolic surface which admits a plane end and hence Poincar\'e
disk models are uniquely {\em special} among two-field models with
hyperbolizable target manifold. Notice that Freudenthal end phases
arise in addition to those indexed by the critical points of the
scalar potential.
\end{remark}

\section{Conclusions and further directions}
\label{sec:Conclusions}

We studied the scaling behavior and scaling limits of classical
multifield cosmological models with scalar triple $(\cM,\cG,\Phi)$ and
rescaled Planck mass $M_0$ and their dynamical renormalization group,
showing that the latter deforms the cosmological flow of such models
into a family of flows defined on $T\cM$ which interpolates between a
modification of the geodesic flow of the scalar manifold $(\cM,\cG)$
and a certain lift of the gradient flow of the {\em classical
effective potential} $V\eqdef M_0\sqrt{2\Phi}$ on this Riemannian
manifold. Using the invariance of oriented gradient flow orbits under
Weyl transformations of $\cG$, we found that the first order IR
approximants of cosmological orbits are insensitive to such
transformations. This allowed us to give a mathematical
description of dynamical UV and IR universality classes of classical
cosmological models. In the infrared limit, the late time dynamics of
such models is partitioned into ``phases'' which in good cases are
indexed by critical points of $V$ and by Freudenthal ends of
$\cM$. These results provide a realization of ideas familiar from the
theory of critical phenomena within the context of {\em classical}
cosmology and open up new directions for the study and classification
of such models.

Since the UV and IR approximations are controlled by different
conditions from those governing the slow roll approximation and its
slow roll - slow turn variant, the study of the IR and UV behavior of
classical cosmological models requires new ideas and tools. The UV
approximation recovers a modification of the geodesic flow of
$(\cM,\cG)$, thus making contact with a classical subject in
Riemannian geometry and dynamical systems theory. The IR approximation
recovers the gradient flow of the classical effective potential $V$,
thus making contact with another classical subject. Since the scalar
manifold $(\cM,\cG)$ is generally non-compact, one has to analyze such
flows without the compactness assumptions made in most of the
mathematics literature -- and hence the study of cosmological scaling
limits does {\em not} reduce to a simple application of known
results. For example, one cannot directly apply classical results of
Morse and gradient flow theory to the study of the IR limit even when
a natural smooth compactification of $\cM$ exists and $\Phi$ admits a
smooth and Morse extension to that compactification, because the
scalar manifold metric $\cG$ does not generally extend.

For two-field models, the target manifold $\cM$ is a connected surface
$\Sigma$. In this case, our results and the uniformization theorem of
Poincar\'e imply that the IR behavior of the model parameterized by
$(M_0,\Sigma,\cG,\Phi)$ coincides with that of the model parameterized
by $(M_0,\Sigma,G,\Phi)$ to first order in the scale expansion, where
$G$ is the uniformizing metric of $\cG$. This implies that two-field
models whose scalar manifold metric has Gaussian curvature equal to
$-1$, $0$ or $1$ are IR universal in the Wilsonian sense that they
provide distinguished representatives of the infrared universality
classes of all two-field models. This gives a powerful conceptual
reason to single out such models for detailed study. In \cite{grad},
we study the IR behavior of a very large class of two-field
cosmological models using the methods and ideas of the present paper.

One direction for further research concerns the systematic study of
higher orders of the UV and IR expansions, on which we hope to report
shortly. In this regard, it would be interesting to extract precise
asymptotic bounds which control the error terms at each order. More
ambitiously, one can look for improvements of these expansions which
produce uniformly convergent series; such improvements can be
extracted in principle using the method of multiple scales (see
\cite{BO,KC,Hinch,Kuehn}). In this context, the renormalization group
action constructed in this paper relates to the work of
\cite{Ziane,DeVille,Chiba1,Chiba2}.

It would also be interesting to study the implications of scale
expansions for cosmological perturbation theory with a view toward
developing new approximation schemes and constructing effective
descriptions which take into account the fact that the underlying
classical dynamics of cosmological models has a dynamical RG flow of
its own. In our opinion, this could address the conceptual criticism
of current approaches to effective descriptions in cosmology made in
papers such as \cite{Burgess1,Burgess2}.

Since the dynamics of multifield cosmological models admits a
constrained Hamiltonian description in the minisuperspace formalism,
the RG flow and scale approximations considered in this paper can be
reformulated in that framework, which allows one to use methods
from the theory of Hamiltonian systems. It would be interesting to see
what insight can be gained by pursuing Wilsonian ideas in that
context.

Finally, one can expect that the models with hidden symmetries
considered in \cite{Noether1,Noether2,Hesse} (see \cite{Lilia1,Lilia2}
for some phenomenological implications) play a special role in the
infrared limit, by analogy to similar situations in the theory of
critical phenomena; it would be interesting to address this
conjecture.

When the target manifold $\cM$ has dimension greater than two, the
existence of a metric $G$ with special properties in the
Weyl-equivalence class of $\cG$ is a classical problem. When $\cM$ is
non-compact and one requires $G$ to be complete and of constant scalar
curvature, this is the Yamabe problem for non-compact manifolds, which
generally has a negative answer unless one imposes further
restrictions on $(\cM,\cG)$. When $\cM$ is three-dimensional, one can
instead use Thurston uniformization, which is different in
character. It would be interesting to see if analogues of our
arguments exist that would allow one to find good representatives of
universality classes of models with more than two scalar fields when
appropriate conditions are imposed on the scalar manifold and
potential.

\appendix

\section{Topological and smooth equivalence of dynamical systems}
\label{app:tsequiv}

\noindent Recall that a (smooth) {\em autonomous dynamical
system} is a pair $(M,X)$, where $M$ is a manifold and $X$ is a
smooth vector field defined on $M$ (see \cite{Palis}).

\begin{definition}
A {\em flow curve} of the autonomous dynamical system $(M,X)$ is a
smooth integral curve $\gamma:I\rightarrow M$ of the vector field
$X$, i.e. a solution of the equation:
\be
\frac{\dd \gamma(t)}{\dd t}=X(\gamma(t))~~\forall t\in I~~,
\ee
where $I$ is a non-degenerate interval, i.e. a non-empty (open, closed
or semi-closed) interval which is not reduced to a point. The image
$\im \gamma\eqdef \gamma(I)\subset M$ of a flow curve is called a
{\em flow orbit}.
\end{definition}

\begin{definition}
The {\em flow} of the autonomous dynamical system $(M,X)$ is the flow
$\Pi:\cD\rightarrow M$ of the vector field $X$ considered on its
maximal domain of definition $\cD\subset \R\times M$.
\end{definition}

\noindent Recall that $\cD$ is an open subset of $\R\times M$ which
contains the set ${\widetilde M}\eqdef \{0\}\times M$. It is
fibered over $M$ with fiber at $m\in M$ given by the interval of
definition $I_m$ of the maximal integral curve
$\gamma_m:I_m\rightarrow M$ of $X$ which satisfies $0\in I_m$ and
$\gamma(0)=m$. Notice that the interval $I_m$ is open for all $m\in
M$. By definition, we have:
\be
\Pi(q,m)=\gamma_m(q)~,~\forall (q,m)\in \cD~~.
\ee
We denote by $\pi:\cD\rightarrow M$ the projection on the second
factor.  For each $m\in M$, we have $\Pi(0,m)=m$. Moreover, for all
$t_0\in I_m$ we have $I_{\gamma(m)(t_0)}=I_m-t_0$ and for all $t\in
I_{\gamma_m(t_0)}$ we have:
\be
\Pi(t,\Pi(t_0,m))=\Pi(t+t_0,m)~~.
\ee

\begin{definition}
Let $(M_1,X_1)$ and $(M_2,X_2)$ be autonomous dynamical systems
with flows $\Pi_k:\cD_k\rightarrow M_k$ and domain projections
$\pi_k:\cD_k\rightarrow M_k$ ($k=1,2$). Denote by
$I^{(k)}_{m_k}\eqdef \pi_k^{-1}(m_k)$ the fiber of $\cD_k$ at $m_k\in
M_k$. Let $h:M_1\rightarrow M_2$ be a homeomorphism and
$f:\cD_1\rightarrow \cD_2$ be an unbased isomorphism of topological
fiber bundles above $h$, thus:
\be
f(t,m)=(f_m(t),h(m))~~\forall m\in M_1~~\forall t\in I^{(1)}_m~~,
\ee
where $f_m:I_m^{(1)}\rightarrow I_{h(m)}^{(2)}$ is a homeomorphism for
all $m\in M_1$.  The pair $(f,h)$ is called:
\begin{enumerate}[1.]
\itemsep 0.0em
\item {\em topological equivalence}, if the following conditions are
satisfied:
\begin{itemize}
\item We have:
\ben
\label{PiEq}  
\Pi_2\circ f=h\circ \Pi_1~~,
\een
i.e. :
\be
h(\gamma_{m}^{(1)}(t))=\gamma^{(2)}_{h(m)}(f_m(t))~~\forall m\in M~~\forall t\in I_m^{(1)}~~,
\ee
where $\gamma^{(k)}_{m_k}:I_{m_k}^{(k)}\rightarrow M_k$ is the flow
curve of $\Pi_k$ which satisfies $\gamma_{m_k}^{(k)}(0)=m_k$.
\item $f_m:I_m^{(1)}\rightarrow I_{h(m)}^{(2)}$ is a strictly
increasing function for all $m\in M$.
\end{itemize}
\item {\em smooth topological equivalence}, if it is a topological
equivalence and the maps $f$ and $h$ are smooth diffeomorphisms.
\item {\em topological conjugation} if it is a topological equivalence
  and for all $m\in M$ we have $I_m^{(1)}=I_{h(m)}^{(2)}:=I_m$ and $f_m=\id_{I_m}$. 
\item {\em smooth topological conjugation} if it is a topological
conjugation and the maps $f$ and $h$ are smooth diffeomorphisms.
\end{enumerate}
The two dynamical systems and their flows are called {\em (smoothly)
topologically conjugate/equivalent} if there exists a (smooth)
topological conjugation/equivalence $(f,h)$ from $\Pi_1$ to $\Pi_2$.
\end{definition}

\noindent (Smooth) topological equivalence and conjugation are
equivalence relations. If $\Pi_1$ and $\Pi_2$ are topologically
equivalent through a pair $(f,h)$, then $h$ induces a bijection
between the sets of orbits of $\Pi_1$ and $\Pi_2$ which preserves
their orientation.

\section{Invariance of gradient flow orbits under conformal transformations}
\label{app:grad}

\begin{prop}
\label{prop:weyl}
Let $\Omega:\cM\rightarrow \R_{>0}$ be an everywhere-positive smooth
function defined on $\cM$, $\eta:I\rightarrow \cM$ be a gradient
flow curve of the scalar triple $(\cM,\cG,V)$ and $t_0\in I$ be
arbitrary. Consider the increasing reparameterization
$\tau:I\rightarrow J$ defined through:
\ben
\label{muWeyl}
\tau(t):=\tau_{\eta}(t)\eqdef \int_{t_0}^t \dd t' \Omega(\eta(t'))+C~~,
\een
where $C$ is an arbitrary constant. Then the
reparameterized curve $\eta_r\eqdef \eta\circ
\tau^{-1}:J\rightarrow \cM$ satisfies the gradient flow equation of the
scalar triple $(\cM,\Omega\cG,V)$. In particular, the orbits of the
gradient flows of $(\cM,\cG,V)$ and $(\cM,\Omega\cG,V)$ coincide.
\end{prop}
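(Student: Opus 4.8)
The plan is to reduce the statement to two elementary ingredients: the transformation rule for the gradient under a conformal rescaling of the metric, and the chain rule applied along the fixed curve $\eta$. First I would record the pointwise identity $\grad_{\Omega\cG} V = \frac{1}{\Omega}\,\grad_\cG V$, valid as an equality of vector fields on $\cM$. This follows immediately from the defining property of the gradient: for every tangent vector $X$ at a point $m\in\cM$ one has $(\Omega\cG)(\grad_{\Omega\cG} V, X) = (\dd V)(X) = \cG(\grad_\cG V, X)$, and since $(\Omega\cG)(Y,X) = \Omega\,\cG(Y,X)$ and $\cG$ is nondegenerate, the claimed identity follows.

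Next I would verify that $\tau$ as defined in \eqref{muWeyl} is indeed an increasing diffeomorphism onto its image. Differentiating gives $\tau'(t) = \Omega(\eta(t)) > 0$ for all $t\in I$, so $\tau$ is smooth and strictly increasing; hence $J\eqdef\tau(I)$ is a nondegenerate interval and $\tau:I\to J$ is a diffeomorphism with smooth inverse. The heart of the proof is then the chain rule: writing $t = \tau^{-1}(s)$ for $s\in J$ and using $\frac{\dd t}{\dd s} = 1/\tau'(t) = 1/\Omega(\eta(t))$ together with the gradient flow equation $\dot\eta(t) = -(\grad_\cG V)(\eta(t))$ satisfied by $\eta$, one computes
\[
\frac{\dd\eta_r}{\dd s}(s) \;=\; \dot\eta(t)\,\frac{\dd t}{\dd s} \;=\; -\frac{1}{\Omega(\eta(t))}(\grad_\cG V)(\eta(t)) \;=\; -(\grad_{\Omega\cG} V)(\eta(t)) \;=\; -(\grad_{\Omega\cG} V)(\eta_r(s)),
\]
where the third equality is the pointwise identity above and the last uses $\eta_r(s) = \eta(t)$. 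This is precisely the gradient flow equation of $(\cM,\Omega\cG,V)$, establishing the first assertion.

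For the statement about orbits, since $\eta_r = \eta\circ\tau^{-1}$ has the same image as $\eta$, and the argument above applies to an arbitrary gradient flow curve of $(\cM,\cG,V)$, the unparameterized orbits of the gradient flows of $(\cM,\cG,V)$ and $(\cM,\Omega\cG,V)$ coincide; because $\tau$ is increasing they in fact agree as oriented submanifolds of $\cM$. I do not expect any genuine obstacle here; the only point deserving a moment's care is that the reparameterization $\tau$ is \emph{curve-dependent} — it is built from $\Omega\circ\eta$ rather than from $\Omega$ alone — but this causes no trouble, since the chain rule is applied along the single fixed curve $\eta$ and a different gradient flow curve is simply reparameterized by its own $\tau$.
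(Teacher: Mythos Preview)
Your proof is correct and follows essentially the same approach as the paper: both use the identity $\grad_{\Omega\cG}V=\frac{1}{\Omega}\grad_\cG V$ together with the chain rule along $\eta$ to reduce the claim to the relation $\tau'(t)=\Omega(\eta(t))$. Your version is in fact more carefully written than the paper's, since you explicitly justify the gradient identity, verify that $\tau$ is an increasing diffeomorphism, and spell out the orbit statement.
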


\begin{proof}
\noindent Consider the Weyl transformation:
\be
\cG\rightarrow \cG' \eqdef \Omega \cG~~,
\ee
and notice that $\grad_{\cG'}V=\frac{1}{\Omega} \grad \, V$. The
reparameterized curve $\eta_r$ satisfies:
\ben
\label{gflow}
\frac{\dd\tau}{\dd t} \frac{\dd\eta_r}{\dd \tau} +(\grad_\cG \, V)\circ\eta_r=0~~.
\een
Since $\grad_{\Omega \cG}V=\frac{1}{\Omega}\grad_\cG V$, this is
equivalent to the gradient flow equation of the scalar triple
$(\cM,\Omega \cG,V)$ iff $\frac{\dd \tau(t)}{\dd t}=\Omega(\eta(t))$,
which gives \eqref{muWeyl}.
\end{proof}

\begin{cor}
\label{cor:Weyl}
With the notations of the previous proposition, the gradient flows of
the scalar triples $(\cM,\cG,V)$ and $(\cM,\Omega\cG,V)$ are smoothly
topologically equivalent.
\end{cor}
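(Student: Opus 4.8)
The plan is to construct an explicit smooth topological equivalence $(f,h)$ in the sense of Appendix \ref{app:tsequiv} with $h=\id_\cM$, obtained by performing the curve-dependent reparameterization of Proposition \ref{prop:weyl} fibrewise over $\cM$. Write $\Pi_1:\cD_1\to\cM$ for the gradient flow of $(\cM,\cG,V)$ and $\Pi_2:\cD_2\to\cM$ for that of $(\cM,\Omega\cG,V)$, and for $m\in\cM$ let $\eta_m:I_m^{(1)}\to\cM$ be the maximal $\cG$-gradient flow curve with $\eta_m(0)=m$. First I would apply Proposition \ref{prop:weyl} with $t_0=0$ and $C=0$: the map
\be
f_m(t)\eqdef \int_0^t\Omega(\eta_m(t'))\,\dd t'~~,\quad t\in I_m^{(1)}~~,
\ee
is smooth, strictly increasing (since $\Omega>0$), fixes $0$, and carries $\eta_m$ to a gradient flow curve $\eta_m\circ f_m^{-1}$ of $(\cM,\Omega\cG,V)$ taking the value $m$ at parameter $0$. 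Because $f_m'=\Omega\circ\eta_m$ is nowhere zero, $f_m$ is a diffeomorphism onto the open interval $J_m\eqdef f_m(I_m^{(1)})$ with smooth inverse.

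The key step is to show $J_m=I_m^{(2)}$, i.e. that $\eta_m\circ f_m^{-1}$ is the \emph{maximal} $\Omega\cG$-gradient flow curve through $m$; by uniqueness of maximal integral curves it then equals $\Pi_2(\cdot,m)$. I would argue by contradiction: suppose $\eta_m\circ f_m^{-1}$ extended past a finite endpoint $b$ of $J_m$ to a curve $\tilde\eta$. Then $\tilde\eta$ is continuous at $b$, so $1/(\Omega\circ\tilde\eta)$ is bounded near $b$ — here one uses crucially that $\Omega$ is a globally defined, strictly positive smooth function on $\cM$. Since $f_m^{-1}$ has derivative $1/(\Omega\circ\eta_m\circ f_m^{-1})$, it would then extend continuously to $b$, forcing $\eta_m$ to extend beyond $I_m^{(1)}$ and contradicting its maximality. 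If instead an endpoint of $J_m$ is $\pm\infty$ there is nothing to extend. Hence $J_m=I_m^{(2)}$ for every $m\in\cM$, so $\eta_m\circ f_m^{-1}=\Pi_2(\cdot,m)$.

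Finally I would assemble the fibrewise maps into $f:\cD_1\to\cD_2$, $f(t,m)\eqdef(f_m(t),m)$. Writing $f_m(t)=\int_0^t\Omega(\Pi_1(t',m))\,\dd t'$ and using smoothness of the flow $\Pi_1$ on the open set $\cD_1$ shows $f$ is smooth; by the previous step its inverse is $f^{-1}(s,m)=\big(\int_0^s \dd s'/\Omega(\Pi_2(s',m)),\,m\big)$, which is smooth by smoothness of $\Pi_2$ and positivity of $\Omega$, so $f$ is a diffeomorphism. By construction $f$ commutes with the projections to $\cM$ and restricts on each fibre to the strictly increasing diffeomorphism $f_m:I_m^{(1)}\to I_m^{(2)}$, hence is an unbased bundle isomorphism above $h=\id_\cM$. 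The intertwining relation \eqref{PiEq} follows from Proposition \ref{prop:weyl}: $\Pi_2(f(t,m))=\Pi_2(f_m(t),m)=(\eta_m\circ f_m^{-1})(f_m(t))=\eta_m(t)=\Pi_1(t,m)=h(\Pi_1(t,m))$. Thus $(f,h)$ is a smooth topological equivalence. The only genuine subtlety — and the step I expect to require the most care — is the maximality argument of the second paragraph: global positivity of $\Omega$ is essential, since a reparameterization with $\Omega$ vanishing or blowing up at an end could turn a complete flow curve into an incomplete one and break the bijectivity of $f$ on fibres.
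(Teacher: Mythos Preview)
Your proof is correct and takes the same approach as the paper: both define $f(t,m)=(f_m(t),m)$ with $f_m$ the reparameterization of Proposition~\ref{prop:weyl} (taking $t_0=0$, $C=0$) and $h=\id_\cM$. The paper's proof is much terser --- it simply asserts that $f$ is a diffeomorphism and that $(f,\id_\cM)$ is a smooth topological equivalence --- whereas you supply the details (maximality of the reparameterized curve, smoothness of $f$ and its inverse, and the intertwining relation) that the paper leaves implicit.
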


\begin{proof}
Let $\cD$ and $\cD_r$ be the maximal domains of definition of the
gradient flows of $(\cM,\cG,V)$ and $(\cM,\Omega\cG,V)$. Consider the
map $f:\cD\rightarrow \cD_r$ defined through:
\be
f(t,m)\eqdef (\tau_{\eta_m}(t),m)~,~~\forall (t,m)\in \cD~~,
\ee
where $\eta_m$ is the gradient flow curve of $(\cM,\cG,V)$ which
satisfies $\eta_m(0)=m$. Then $f$ is a diffeomorphism and
$(f,\id_\cM)$ is a smooth topological equivalence from the gradient
flow of $(\cM,\cG,V)$ to that of $(\cM,\Omega\cG,V)$.
\end{proof}

\begin{definition}
A {\em conformal equivalence} from the scalar triple
$(\cM_1,\cG_1,V_1)$ to the scalar triple $(\cM_2,\cG_2,V_2)$ is a
smooth conformal diffeomorphism $\Psi:(\cM_1,\cG_1)\rightarrow (\cM_2,\cG_2)$
which satisfies the condition $V_2\circ \Psi=V_1$. The scalar triples
$(\cM_1,\cG_1,V_1)$ and $(\cM_2,\cG_2,V_2)$ are called {\em
conformally equivalent} if there exists a conformal equivalence from
$(\cM_1,\cG_1,V_1)$ to $(\cM_2,\cG_2,V_2)$.
\end{definition}

\noindent It is clear that conformal equivalence is an equivalence
relation on the collection of all scalar triples.

\begin{prop}
\label{prop:conformal}
Suppose that the scalar triples $(\cM_1,\cG_1,V_1)$ and
$(\cM_2,\cG_2,V_2)$ are conformally equivalent. Then the gradient
flows of $(\cM_1,\cG_1,V_1)$ and $(\cM_2,\cG_2,V_2)$ are smoothly
topologically equivalent.
\end{prop}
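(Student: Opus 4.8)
The plan is to factor the given conformal equivalence $\Psi:(\cM_1,\cG_1,V_1)\rightarrow(\cM_2,\cG_2,V_2)$ through an auxiliary scalar triple living on $\cM_1$, reducing the statement to the Weyl-rescaling case already treated in Corollary \ref{cor:Weyl} together with the trivial case of an isometry. By definition of conformal equivalence there is a positive smooth function $\Omega:\cM_1\rightarrow\R_{>0}$ such that $\Psi^\ast(\cG_2)=\Omega\cG_1$ and $V_2\circ\Psi=V_1$. Setting $\cG_1'\eqdef\Omega\cG_1$, the map $\Psi$ becomes an isometry $(\cM_1,\cG_1')\rightarrow(\cM_2,\cG_2)$ with $V_2\circ\Psi=V_1$, i.e. an isomorphism of scalar triples $(\cM_1,\cG_1',V_1)\simeq(\cM_2,\cG_2,V_2)$.

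The first step is to observe that an isomorphism of scalar triples induces a smooth topological conjugation of the associated gradient flows. This is naturality of the gradient operator under isometries: since $V_1=\Psi^\ast(V_2)$ and $\cG_1'=\Psi^\ast(\cG_2)$, one has $\grad_{\cG_1'}V_1=\Psi^\ast(\grad_{\cG_2}V_2)$, so the vector fields $\grad_{\cG_1'}V_1$ and $\grad_{\cG_2}V_2$ are $\Psi$-related. Hence $\Psi$ carries gradient flow curves of $(\cM_1,\cG_1',V_1)$ onto those of $(\cM_2,\cG_2,V_2)$ with the same time parameter and maps maximal integral curves to maximal integral curves; the pair $(f,h)$ with $h=\Psi$ and $f(t,m)=(t,\Psi(m))$ is then a smooth topological conjugation in the sense of Appendix \ref{app:tsequiv}, and in particular a smooth topological equivalence.

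The second step is to apply Corollary \ref{cor:Weyl} with $\cM=\cM_1$, $\cG=\cG_1$, $V=V_1$ and Weyl factor $\Omega$: this yields a smooth topological equivalence between the gradient flow of $(\cM_1,\cG_1,V_1)$ and that of $(\cM_1,\Omega\cG_1,V_1)=(\cM_1,\cG_1',V_1)$. Composing with the conjugation of the first step and using that smooth topological equivalence is an equivalence relation (in particular transitive) produces a smooth topological equivalence between the gradient flows of $(\cM_1,\cG_1,V_1)$ and $(\cM_2,\cG_2,V_2)$, which is the assertion.

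I do not expect a genuine obstacle here: all the analytic content has already been isolated in Proposition \ref{prop:weyl}. The only point requiring a little care is the bookkeeping in the first step, namely verifying that $\Psi$ maps the fiber $I_m^{(1)}$ of the maximal-domain bundle of the first flow onto the corresponding fiber $I_{\Psi(m)}^{(2)}$ of the second; but this is immediate since $\Psi$ is a diffeomorphism intertwining the two gradient vector fields and hence matches their maximal integral curves.
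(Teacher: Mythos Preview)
Your proof is correct and follows essentially the same approach as the paper: factor the conformal equivalence through the auxiliary triple $(\cM_1,\Omega\cG_1,V_1)$, use that $\Psi$ is then an isometry of scalar triples (giving a smooth topological conjugation of gradient flows), apply Corollary \ref{cor:Weyl} for the Weyl-rescaling step, and conclude by transitivity. If anything, you spell out the $\Psi$-relatedness of the gradient vector fields and the matching of maximal domains more explicitly than the paper does.
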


\begin{proof}
Let $\Psi:(\cM_1,\cG_1,V_1)\rightarrow (\cM_2,\cG_2,V_2)$ be a
conformal diffeomorphism and let $\Pi_i:\cD_i\rightarrow \cM_i$ be the
gradient flows of $(\cM_i,\cG_i,V_i)$ for $i=1,2$.  By the definition
of conformal diffeomorphisms, there exists $\Omega\in
\cC^\infty(\cM_1,\R_{>0})$ such that $\cG'\eqdef
\Psi^\ast(\cG_2)=\Omega \cG_1$. Thus $\Psi$ is an isometry from
$(\cM_1,\cG')$ to $(\cM_2,\cG_2)$ which satisfies
$\Psi^\ast(V_2)=V_1$, i.e. an isomorphism of scalar triples from
$(\cM_1,\cG',V_1)$ to $(\cM_2,\cG_2,V_2)$. Let $\Pi':\cD'\rightarrow
\cM$ be the gradient flow of $(\cM_1,\cG',V_1)$. It is easy to see
that a curve $\eta:I\rightarrow \cM_1$ is a maximal gradient flow
curve of $(\cM_1,\cG',V_1)$ which satisfies $\eta_1(0)=m$ (with
$m\in \cM_1$) iff $\eta'\eqdef \Psi\circ \eta$ is a maximal
gradient flow curve of $(\cM_2,\cG_2,V_2)$ which satisfies
$\eta'(0)=\Psi(m)$. Thus we have $\cD'=\cD_2$ and
$(\id_{\cD_2},\Psi)$ is a smooth topological equivalence from $\Pi'$
to $\Pi_2$. Since $\cG'=\Omega \cG_1$, Corollary \ref{cor:Weyl} shows
that $\Pi_1$ is smoothly topologically equivalent with $\Pi'$. Since
smooth topological equivalence of flows is an equivalence relation, we
conclude.
\end{proof}

\begin{definition}
A {\em conformal automorphism} of the scalar triple $(\cM,\cG,V)$ is a
conformal equivalence from this triple to itself.   
\end{definition}

\noindent Conformal automorphisms of $(\cM,\cG,V)$ form the stabilizer
of $V$ under the group of conformal transformations of $(\cM,\cG)$.

\begin{cor}
\label{cor:conformal} The smooth topological equivalence class of the
gradient flow of $(\cM,\cG,V)$ is invariant under conformal
automorphisms of $(\cM,\cG,V)$.
\end{cor}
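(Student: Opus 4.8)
The plan is to deduce the statement directly from Proposition \ref{prop:conformal}, of which it is essentially a specialization. A conformal automorphism $\Psi$ of $(\cM,\cG,V)$ is by definition a conformal equivalence from the triple to itself, so I would apply Proposition \ref{prop:conformal} with $(\cM_1,\cG_1,V_1)=(\cM_2,\cG_2,V_2)=(\cM,\cG,V)$ and with the conformal equivalence taken to be $\Psi$; this immediately gives that the gradient flow $\Pi$ of $(\cM,\cG,V)$ is smoothly topologically equivalent to itself. The content is that the resulting self-equivalence is \emph{induced by} $\Psi$, so it witnesses invariance of the smooth topological equivalence class of $\Pi$ under the action of conformal automorphisms by pushforward of flows.

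More explicitly, I would unwind the construction used in the proof of Proposition \ref{prop:conformal}. Writing $\Psi^\ast(\cG)=\Omega\cG$ with $\Omega\in\cC^\infty(\cM,\R_{>0})$, the map $\Psi$ is an isometry from $(\cM,\Omega\cG)$ to $(\cM,\cG)$ satisfying $\Psi^\ast(V)=V$, hence an isomorphism of scalar triples $(\cM,\Omega\cG,V)\to(\cM,\cG,V)$; therefore $(\id,\Psi)$ is a smooth topological conjugation from the gradient flow of $(\cM,\Omega\cG,V)$ to $\Pi$. On the other hand, Corollary \ref{cor:Weyl} (via the $\eta$-dependent reparameterization $\tau_{\eta_m}$ of Proposition \ref{prop:weyl}) produces a smooth topological equivalence from $\Pi$ to the gradient flow of $(\cM,\Omega\cG,V)$. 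Composing these two furnishes a smooth topological equivalence from $\Pi$ to $\Pi$ whose base homeomorphism is $\Psi$. Since smooth topological equivalence of flows is an equivalence relation (Appendix \ref{app:tsequiv}), the smooth topological equivalence class of $\Pi$ is preserved by $\Psi$, which is the assertion of Corollary \ref{cor:conformal}.

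There is no genuinely hard step here; the only thing to be attentive about is the bookkeeping that distinguishes conjugation from equivalence. Because the reparameterization $\tau_{\eta_m}$ depends on the flow curve through $m$, the composite built above is an equivalence rather than a conjugation, so one must use the fact (recorded after the definition in Appendix \ref{app:tsequiv}) that smooth topological equivalence — not conjugation — is the relation closed under such composites. Everything else is a direct appeal to Proposition \ref{prop:conformal} and Corollary \ref{cor:Weyl}, so the corollary follows with essentially no additional work.
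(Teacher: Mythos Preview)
Your proposal is correct and follows exactly the paper's approach: the paper's proof reads in its entirety ``Follows immediately from Proposition \ref{prop:conformal}'', and you do precisely that, specializing the proposition to the case $(\cM_1,\cG_1,V_1)=(\cM_2,\cG_2,V_2)=(\cM,\cG,V)$ with the conformal equivalence taken to be the given automorphism $\Psi$. Your additional unwinding of the construction (identifying the base map of the resulting self-equivalence as $\Psi$ itself) is a helpful elaboration but not required.
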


\begin{proof}
Follows immediately from Proposition \ref{prop:conformal}
\end{proof}

\end{document}